\newcommand{\OutN}[1]{\mathtt{NB}_{out}(#1)}
\DeclareRobustCommand{\citet}[1]{\citeauthor{#1}~\citeyear{#1}}
\title{Measuring a Priori Voting Power - Taking Delegations Seriously}
\author{Rachael Colley,$^1$ Théo Delemazure,$^{2}$ Hugo Gilbert$^2$}
\date{rachael.colley@irit.fr, theo.delemazure@dauphine.eu, hugo.gilbert@lamsade.dauphine.fr\\
$^1$ IRIT, Université Toulouse Capitole, Toulouse, France\\
$^2$Université Paris Dauphine, PSL Research University, CNRS, Lamsade, 75016 Paris, France}
\newcommand{\BibTeX}{\rm B\kern-.05em{\sc i\kern-.025em b}\kern-.08em\TeX}
\newtheorem{definition}{Definition}
\newtheorem{theorem}{Theorem}
\newtheorem{proposition}{Proposition}
\newtheorem{example}{Example}
\newcommand{\M}[1]{\mathcal{M}^{#1}}
\begin{document}




\maketitle

\begin{abstract}
We introduce new power indices to measure the a priori voting power of voters in liquid democracy elections where an underlying network restricts delegations. We argue that our power indices are natural extensions of the standard Penrose-Banzhaf index in simple voting games. 
We show that computing the criticality of a voter is \#P-hard even when voting weights are polynomially-bounded in the size of the instance. 
However, for specific settings, such as when the underlying network is a bipartite or complete graph, recursive formulas can compute these indices for weighted voting games in pseudo-polynomial time. 
We highlight their theoretical properties and provide numerical results to illustrate how restricting the possible delegations can alter voters' voting power.
\end{abstract}



\section{Introduction}
Voting games have been used extensively  to study the a priori voting power of voters participating in an election~\citep{felsenthal1998measurement}. 
A priori voting power means the power granted solely by the rules governing the election process. 
Notably, these measures do not consider the nature of the bill nor the affinities between voters. 
The class of I-power measures (where I represents influence)  ``calculate'' how likely a voter will influence the outcome. 
Several I-power measures have been defined, the best known being the Penrose-Banzhaf measure in simple voting games~\citep{banzhaf1964weighted,penrose1946elementary}. 
In simple voting games, an assembly of voters make a collective decision on a proposal which voters  either support or oppose. 
The Penrose-Banzhaf measure is as follows: voters are assumed to vote independently from one another; a voter is as likely to vote in favour or against the proposal.
It then measures the probability of a voter altering the election's outcome given this probabilistic model. 

Simple voting games have been extended in several directions to take into account more complex and realistic frameworks.  %
For example, taking into account abstention~\citep{freixas2012probabilistic}, several levels of approval~\citep{freixas2003weighted}, or coalition structures~\citep{owen1981modification}.  
Hence, new power indices have been designed to better understand voters' criticality in these frameworks. 
However, elections with delegations have been largely unexplored with respect to a priori voting power. 
Yet, frameworks such as Proxy Voting (PV)~\citep{miller1969program,tullock1992computerizing} or Liquid Democracy (LD)~\citep{behrens2014principles,brill2018interactive} have received increasing interest in the AI community due to their ability to provide more flexibility and engagement in the voting process. 
Thus, studying these frameworks via their distribution of a priori voting power is an interesting research direction. 

\paragraph{Our contribution.} 
We extend simple voting games to model elections where voters can delegate their votes through a social network, modelled as a digraph $G$. 
Our model encapsulates both the LD and PV settings. 
We design a new I-voting power measure to measure voters' criticality in  these settings. 
We argue that our power measure is a natural extension of the Penrose-Banzhaf measure, and we illustrate the intuitions behind it through various examples. 
When $G$ is an arbitrary digraph, we show that the computation of our measure is $\#$P-hard even when voters' weights are polynomially-bounded in the number of voters.
However, we prove that it can be computed for weighted voting games in pseudo-polynomial time in the PV setting, in which the graph $G$ is directed bipartite with all arcs going from one side (possible delegators) to the other (proxies), and in the LD setting when $G$ is complete. 
Last, we complement our theoretical results with numerical results to illustrate how introducing delegations  modifies voters' a priori voting power.

\subsection*{Related Work}

\paragraph{Voting power.} Measuring a voter's voting power in a specific setting quantifies how \emph{critical} they are in deciding the outcome of the election. 
A voter $i$'s voting power can be considered as the difference in probability of $i$ voting for the issue when the outcome is also in favour and $i$ voting for the issue when the outcome is not \citep{gelman2002mathematics}. We give an overview of some standard measures  (we recommend \citep{lucas1974measuring} for an overview of voting power and \citep{felsenthal2005voting} for a historical overview).

The measure introduced by \citeauthor{shapley1954method}~\citeyearpar{shapley1954method} quantifies the voter's expected pay-off, known as P-power, unlike the other measures we will discuss. P-power differs in motivation from I-power as P-power cares about the voter being part of the winning coalition, sharing the coalition's utility among its members, with those not in the coalition receiving utility $0$. In contrast, I-power has a policy-seeking motivation and is, therefore, concerned with the voter's stance on the issue. 

I-power was independently given a mathematical explanation by  \citeauthor{penrose1946elementary}~\citeyearpar{penrose1946elementary}, \citeauthor{banzhaf1964weighted}~\citeyearpar{banzhaf1964weighted}, and \citeauthor{coleman1971control}~\citeyearpar{coleman1971control}. 
It counts, for an agent $i$, in how many of the $2^n$ possible voting profiles that changing $i$'s vote from $0$ to $1$ changes the outcome. The \emph{Banzhaf measure} (or absolute Banzhaf index) is denoted by $\beta'$, whereas the \emph{Banzhaf index} is the relative quantity denoted by $\beta$ (found from normalising $\beta'$). 

\paragraph{Extending the notion of voting power.} Standard voting power measures are defined on binary issues. Yet, as the study of voting models has advanced, so has the study of voting power. One generalisation is to the domain of the available votes, thus, moving away from binary decisions on the issue. 
Influenced by \citeauthor{felsenthal2001models}~\citeyearpar{felsenthal2001models},  probabilistic models of voting power with abstention and a binary outcome are well-studied~\citep{felsenthal1997ternary,freixas2012probabilistic}.  \citeauthor{freixas2016power}~\citeyearpar{freixas2016power} extended the Banzhaf index by introducing two measures of being positively critical, i.e., changing your vote from being for the issue to abstaining and an abstaining vote to be against the issue. Voting games with approvals form a subclass of voting games with varying levels of approvals in both the input and output of the election~\citep{freixas2003weighted}.
Another well-studied extension of the standard notions of voting power measures in WVGs is to allow for randomised weights. The Shapley-Shubik index has been well-studied in these settings \citep{filmus2016shapley,bachrach2016characterization}. \citeauthor{boratyn2020average}~\citeyearpar{boratyn2020average} also studied the Banzhaf index in this setting; this is close to our own when focussing on proxy voting elections. 

\paragraph{Proxy voting (PV) and liquid democracy (LD).} PV allows agents to choose their proxy from a list of representatives who will vote on their behalf. In some models, a delegator may only choose a proxy from the list of representatives~\citep{abramowitz2018flexible,alger2006voting,CohensiusMMMO17}. In other models, delegators can also vote directly yet still cannot receive votes \citep{green2015direct,miller1969program,tullock1992computerizing}. 

Models of LD allow voters to either vote on the issue or delegate their vote to another voter, which can be transitively delegated further. Some examples of recent advancements in the study of LD are: extending the model to account for different situations, whether it be ranked delegations \citep{brill2021liquid,colleyunrav,kotsialou2018incentivising} or allowing for multiple interconnected issues \citep{brill2018pairwise,jain2021preserving}; assessing how successful LD is in finding a ground truth \citep{halpern2021defense,kahng2018liquid}; 
or studying (non-cooperative) game-theoretic aspects \citep{bloembergen2019rational,EscoffierGP20,MarkakisP21,NoelSV21}.  

Our closest work is that of \citeauthor{zhang2021power}~\citeyearpar{zhang2021power},  who study a version of the Banzhaf measure in LD. 
Their measure, for a given delegation graph, determines how critical an agent is in changing the outcome. Our work differs as we focus on \emph{a priori} voting power, where no prior knowledge is known about the election, such as a specific delegation graph.\footnote{In Appendix~\ref{subsec:zhangGrossi}, we give a probabilistic model where the Banzhaf measure from \citeauthor{zhang2021power}~\citeyearpar{zhang2021power} can be interpreted as an I-power measure.}

\section{Model}

Let $V$ be a set of $n$ voters taking part in an election to decide if some binary proposal should be accepted or not. 
Each voter has different possible actions: they may vote directly, either for $(1)$ or against $(-1)$ the proposal or delegate their vote to another voter.  
A voter who decides to vote (resp. delegate) will be termed a delegatee (resp. delegator). 
An underlying social network $G=(V,E)$  
restricts the possible delegations between the agents, hence voter $i\in V$ can only delegate to a voter in their out-neighbourhood $\OutN{i} = \{j\in V\mid  (i,j)\in E\}$. We will consider in more detail two cases: when $G$ is complete and when $G$ is bipartite, where the former corresponds to the LD setting when voters can choose any other voter as a delegate and the latter corresponds to PV. 

\begin{definition} 
Given a digraph $G = (V,E)$, a \emph{G-delegation partition} $D$ is a map defined on $V$ such that $D(i)\in \OutN{i}\cup\{-1,1\}$ for all $i\in V$.  We let $\mathcal{D}$ be the set of all such partitions and $D^-$, $D^+$, and $D^v$ be the inverse images of $\{-1\}$, $\{1\}$ and $\{v\}$ for each $v\in V$ under $D$.
\end{definition}

Whereas a direct-vote partition divides the voters such that each partition cell corresponds to a possible voting option. 
We allow for abstentions to model  situations in which a delegator does not have a delegatee voting on their behalf (e.g., due to delegation cycles).

\begin{definition}
A \emph{direct-vote partition} of a set $V$ is a map $T$ from $V$ to the votes $\{-1,0,1\}$. We let $T^-$, $T^0$, and $T^+$ denote the inverse images of $\{-1\}$, $\{0\}$ and $\{1\}$ under $T$.
\end{definition}

A $G$-delegation partition $D$ naturally induces a direct-vote partition $T_D$ by resolving the delegations. 
First, we let voters in $D^-$, and $D^+$ also be in  $T^-$, and $T^+$, respectively. 
From this point, for some $\circ\in\{-,+\}$, if $v'\in D^v$ and $v\in T^\circ$, then $v'\in T^\circ$. This continues until no more voters can be added to $T^+$ or $T^-$. The remaining unassigned agents abstain and thus are in $T^0$. This procedure assigns agents their delegate's vote unless it leads to a cycle, in this case, their vote is recorded as an abstention. 

Next we define a partial ordering $\leq$ among direct-vote partitions: if $T_1$ and $T_2$ are two direct-vote partitions of $V$, we let:
$T_1 \leq T_2 \Leftrightarrow T_1(x) \le T_2(x), \forall x\in V$.

\begin{definition}
A ternary (resp. binary) voting rule is a map $W$ from the set $\{-1,0,1\}^n$ (resp. $\{-1,1\}^n$) of all direct-vote partitions (resp. all direct-vote partitions without abstention) of $V$ to $\{-1,1\}$ satisfying the following conditions:
\begin{enumerate}
    \item $W(\mathbbm{1}) =  1$ and $W(-\mathbbm{1}) = -1$ where $\mathbbm{1} = (\underbrace{1,\ldots,1}_{\times n})$;
    \item Monotonicity: $T_1 \le T_2 \Rightarrow W(T_1) \le W(T_2)$.\footnote{All ternary voting rules do not satisfy monotonicity, e.g., a weighted voting rule with an additional quorum condition. However, we enforce this condition such that we may only look at the election result when the voter favours the proposal on the one hand and against the proposal on the other to define criticality.}
\end{enumerate}
\end{definition}

Note that ternary (and binary) voting rules only use the direct-vote partition to find an outcome, i.e., only the information of which agents voted directly or indirectly in favour or against the proposal or abstained. 
Thus, these rules do not need the delegations to find an outcome. 

A ternary (resp. binary) voting rule is \emph{symmetric} if $W(T) = - W(-T)$, where $-T$ is the direct-vote partition defined by $-T(x) = -(T(x))$, $\forall x\in V$. 
Moreover, for ease of notation, we may also use $W(T^+, T^-)$ to denote $W(T)$, noting that $T^0$ can be obtained from $T^+$ and $T^-$.

\paragraph{Weighted voting games.} 
Weighted Voting Games (WVGs) express ternary voting rules compactly, with  a quota $q\in (0.5,1]$ and a map $w : V \rightarrow \mathbbm{N}_{>0}$ assigning each voter a positive weight. 
Given a set $S\subseteq V$, we let $w(S) = \sum_{i \in S} w(i)$. In a WVG with weight function $w$, we let $W(T) = 1$ iff $w(T^+) > q\times w(T^+\cup T^-) $, i.e., the proposal is accepted if the sum of the voters' weights for the proposal is greater than a proportion $q$ of the total weight of non-abstaining voters; otherwise, the proposal is rejected. 

\begin{example}\label{ex:introunderlying}
  Consider agents $V=\{a, b, \cdots, m\}$ connected by the underlying network $G$ as depicted in Figure~\ref{fig:introexampleunderlying}. The solid lines give a valid $G$-delegation partition $D$ with $D^+=\{c,i\}$,  $D^-=\{a,b,h\}$, $D^a=\{d\}$, $D^b=\{e\}$, $D^c=\{f,g\}$, $D^d=\{j,k\}$, $D^\ell=\{m\}$, and $D^m=\{\ell\}$. 
  This $G$-delegation partition induces the following direct-vote partition: $T^+=\{c,f,g,i\}$, $T^-=\{a,b,d,e,h,j,k\}$, and $T^0=\{\ell, m \}$.
 Consider $W$ induced by the WVG where $q= 0.5$ and $w(a) = 3$, $w(b)=w(d)= 2$ and the remaining voters $x\in V\backslash \{a,b,d\}$ have weight $w(x)=1$. The proposal is rejected under this $G$-delegation partition as $w(T^+)= 4 \le 7.5= q \cdot w(T^+\cup T^-) = \nicefrac{15}{2}$.
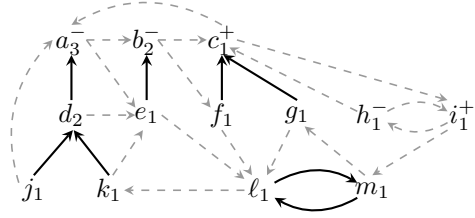
\begin{figure}[t]
    \centering
\begin{tikzpicture}[node distance=0.25cm,semithick]
\node at  (0,0) {$j_1$};
\node at  (1,0) {$k_1$};
\node at  (3,0) {$\ell_1$};
\node at  (4.5,0) {$m_1$};
\node at  (0.5,1) {$d_2$};
\node at  (1.5,1) {$e_1$};
\node at  (2.5,1) {$f_1$};
\node at  (3.5,1) {$g_1$};
\node at  (4.5,1) {$h_1^-$};
\node at  (5.7,1) {$i_1^+$};
\node at  (0.5,2) {$a_3^-$};
\node at  (1.5,2) {$b^-_2$};
\node at  (2.5,2) {$c^+_1$};


\draw[-stealth, thick] (3.2, 0.1) edge[bend left]  (4.3,0.1)
(4.3, -0.1) edge[bend left]  (3.2,-0.1)
(0.5, 1.2) edge  (0.5,1.8)
(1.5, 1.2) edge  (1.5,1.8)
(2.5, 1.2) edge  (2.5,1.8)
(3.5, 1.2) edge  (2.52,1.8)
(1, 0.2) edge  (0.53,0.8)
(0, 0.2) to  (0.52,0.8);

\draw[-stealth, dashed, color=black!40] (-0.2, 0.1) edge[bend left]  (0.25,2) 
(2.5, 0.8) edge  (2.9,0.2) 
(3.4, 0.8) edge  (3.1,0.2) 
(4.7, 1.05) edge[bend left]  (5.5,1.05)  
(5.5,0.9) edge[bend left]  (4.7, 0.9) 
(5.5, 0.8) edge  (4.5,0.2) 
(2.8, 0) edge  (1.2,0) 
(0.7, 2) edge  (1.3,2) 
(0.7, 2) edge  (1.35,1.1) 
(1.65, 2) edge  (2.3,2) 
(1.65, 2) edge  (2.35,1.1) 
(2.7, 2) edge  (5.5,1.2) 
(2.5, 2.2) edge[bend right]  (0.5,2.2) 
(0.7, 1) edge  (1.35,1) 
(1.7, 1) edge  (2.8,0.15)  
(4.3, 1.05) edge  (2.6,1.9)  
(4.4,0.2) edge  (3.6,0.85) 
(1.05, 0.2) to  (1.45,0.8); 
\end{tikzpicture} 
    \caption{\small The underlying network $G$ used in Example~\ref{ex:introunderlying}. While all edges give us $E$, the solid edges give us a valid $G$-delegation partition where the superscripts of $\boldsymbol{+}$ or $\boldsymbol{-}$ represent delegatees direct votes. Each node's subscript refers to its voting weight. }
    \label{fig:introexampleunderlying}
\end{figure}
\end{example}

We conclude this subsection with some notation.  
Given a set $X$, let $\mathcal{P}_k(X)$ denote the set of $k$ ordered partitions of $X$. 
By ordered partitions, we mean that $(\{1\},\{2,3\})$ 
should be considered different to $(\{2,3\},\{1\})$. 
Next, given a voting rule $W$, a voter $i\in V$, and $(X,Y,Z)$ three non-intersecting subsets of $V\setminus\{i\}$, we define: 
$$
\delta_{i,-\rightarrow +}^{W}(X,Y,Z ) \!=\! \frac{W( S\!\cup\! U\!\cup\!\{i\} ,T) - W( X,Y \!\cup\! Z \!\cup\!\{i\})}{2}.
$$ 
We say a voter $i\in V$ is critical when they can affect the outcome of the vote. Thus for three non-intersecting subsets of $V\setminus\{i\}$, namely $X,Y,Z$, where $X$ (resp. $Y$) denotes the set of voters opposing (resp. supporting) the proposal through their vote of delegation, and $Z$ is the set of voters delegating directly or not to $i$, then $i$ is critical if and only if $\delta_{i,-\rightarrow +}^{W}(X,Y,Z)> 0$. 
We say a voter $i\in V$ is \emph{positively} (resp, \emph{negatively}) critical if by changing a positive (resp. negative) vote to a negative (resp. positive) one, the outcome will also change from being for to against (resp. against to for) the issue. 
In Example~\ref{ex:introunderlying}, we see that $a$ is critical in this $G$-delegation partition, as $V\backslash\{a\}$ is partitioned as such $X=\{c,f,g,i\}$, $Y=\{b,e,h\}$ and $Z=\{d,j,k\}$ and thus $\delta_{a,-\rightarrow +}^{W}(X,Y,Z)=\frac{1-(-1)}{2}=1$. 

\subsection{Modelling a priori voting power}
We aim to measure \emph{a priori} voting power in this setting.  An agent's voting power is their probability of being able to affect the election's outcome. Similarly to the intuitions behind the Penrose-Banzhaf measure, we invoke the principle of insufficient reason. There are two ways of seeing this principle. 

\paragraph{The global uniformity assumption.}
If there is no information about the proposal or voters, we assume all $G$-delegation partitions are equally likely  with probability $\Pi_{i\in V} \frac{1}{|\OutN{i}|+2}$. In Example~\ref{ex:introunderlying}, as $|\OutN{i}|=2$ for every $i\in V$, this means that every $G$-delegation partition occurs with probability $(\frac{1}{4})^{13}$.

\paragraph{The individual uniformity assumption.}
The global uniformity assumption is similar to a model in which each voter delegates with probability $p_d^{i} = \nicefrac{|\OutN{i}|}{|\OutN{i}|+2}$ and votes with probability $p_v^{i} = 1 - p_d^{i} =  \nicefrac{2}{|\OutN{i}|+2}$. 
Delegation (resp. voting) options are chosen uniformly at random and voters make their choices independently from one another. 
This is consistent with the idea that we have no information about voters' personalities and interests, or the nature of the proposal. Hence, voters should be equally likely to support (probability $p_y$) or oppose (probability $p_n$) the proposal, i.e., $p_y = p_n = 1/2$.  
Moreover, in ignorance of any concurrence or opposition of interests between voters, we should assume that the likelihood of a voter choosing between each of their possible delegates is equally likely, i.e., the probability that a delegator $i$ delegates to a voter $j\in \OutN{i}$ is $1/\OutN{i}$.    
The \emph{individual uniformity assumption} is an extension of the global uniformity assumption in which  $p_d^i$ can be any value in $[0,1]$ dependent on $|\OutN{i}|$, such that $p_d^i = 0$ when $\OutN{i} = \emptyset$.

For generality, we consider this latter model unless specified otherwise. 
We now define the LD Penrose-Banzhaf measure of a voter $i$ for a given underlying graph $G$ when considering that the probability of each 
 $G$-delegation partition is determined by the  individual uniformity assumption. 

\begin{definition}[\emph{LD} Penrose-Banzhaf measure]
Given a digraph $G=(V,E)$ and a ternary voting rule $W$, the \emph{LD Penrose-Banzhaf measure} of voter $i\in V$ is defined as:
\begin{align*}
\M{ld}_i(W,G) =  \sum_{D\in \mathcal{D}} \mathbbm{P}(D) \frac{W( T_{D^+_i}) - W( T_{D^-_i})}{2},
\end{align*}
where $\mathbbm{P}(D)$ is the probability of the $G$-delegation partition $D$ occurring, and $D^+_i$ (resp. $D^-_i$) is the $G$-delegation partition identical to $D$ with the only possible difference being that $i$ supports (resp. opposes) the proposal.
\end{definition}

$\M{ld}_i$ quantifies the probability to sample a delegation partition where $i$ is able to alter the election's outcome (formally stated in the following Theorem).

\begin{restatable}{theorem}{theoremLDMeasure}\label{thrm : LD Measure}
Given a digraph $G=(V,E)$, a ternary voting rule $W$, and a voter $i\in V$, we have that:
\begin{align*}
\mathbbm{P}(\text{i is critical}) &= \M{ld}_i(W,G).
\end{align*}
Moreover, if $\OutN{i} = \emptyset$ or $W$ is symmetrical, we have that:
\begin{align*}
\mathbbm{P}(\text{i is positively critical}) &= \M{ld}_i(W,G)/2  \\
 &=\mathbbm{P}(\text{i is negatively critical}).    
\end{align*}
\end{restatable}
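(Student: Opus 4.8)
\emph{Proof plan.} The plan is to regard ``$i$ is critical'' as an event on the probability space of all $G$-delegation partitions and to recognise the summand of $\M{ld}_i$ as its indicator. The key preliminary step is a \emph{monotonicity of resolution} observation. Fix $D\in\mathcal{D}$ and compare $D^+_i$ and $D^-_i$; they differ only in that $i$ votes directly $+1$ in the former and $-1$ in the latter. Resolving the delegations, the induced vote can change only for $i$ and for the voters whose delegation chain reaches $i$: each of these inherits $i$'s vote, so takes value $+1$ under $D^+_i$ and $-1$ under $D^-_i$ (none of them abstains, as their chain ends at the direct voter $i$), while every other voter is unaffected. Hence $T_{D^-_i}\le T_{D^+_i}$ pointwise, and as $W$ is monotone and takes values in $\{-1,1\}$ we get $\tfrac12\big(W(T_{D^+_i})-W(T_{D^-_i})\big)\in\{0,1\}$. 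This quantity coincides with $\delta^W_{i,-\rightarrow+}(X,Y,Z)$ for the partition $(X,Y,Z)$ of $V\setminus\{i\}$ induced by $D$, so it equals $1$ exactly when $i$ is critical in $D$ and $0$ otherwise, i.e.\ it is $\mathbbm{1}[i\text{ is critical}]$. The first identity is then immediate: $\M{ld}_i(W,G)=\sum_{D\in\mathcal{D}}\mathbbm{P}(D)\,\mathbbm{1}[i\text{ is critical in }D]=\mathbbm{P}(i\text{ is critical})$, the expectation of an indicator being the probability of its event.

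For the second statement I would split $\{i\text{ is critical}\}$ into the positively- and negatively-critical events, understood as $\{i\text{ critical}\}\cap\{W(T_D)=1\}$ and $\{i\text{ critical}\}\cap\{W(T_D)=-1\}$. Since $W$ is $\{-1,1\}$-valued these two events partition $\{i\text{ critical}\}$, so by the first part their probabilities sum to $\M{ld}_i(W,G)$; it therefore suffices to show they are equal under either hypothesis.

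If $\OutN{i}=\emptyset$, then $i$ always votes directly, choosing $+1$ or $-1$ with probability $1/2$ each and independently of the other voters, hence of the event $\{i\text{ critical}\}$ (which depends only on those voters). On this event the realised outcome equals $i$'s own vote, so conditioning on criticality the outcome is positive with probability $1/2$, giving $\mathbbm{P}(i\text{ positively critical})=\tfrac12\,\mathbbm{P}(i\text{ critical})$, and symmetrically for the negative case. If instead $W$ is symmetric, I would introduce the involution $\phi$ flipping every direct vote ($+1\leftrightarrow-1$) and leaving all delegations unchanged. Since $p_y=p_n=1/2$ and the delegation probabilities are untouched, $\phi$ preserves $\mathbbm{P}$; since delegations are unchanged, $T_{\phi(D)}=-T_D$ (unresolved cycles still abstain), whence $W(T_{\phi(D)})=W(-T_D)=-W(T_D)$ flips the outcome; and $\phi$ preserves criticality because $\phi(D^+_i)=(\phi(D))^-_i$ and $\phi(D^-_i)=(\phi(D))^+_i$ leave $\tfrac12\big(W(T_{D^+_i})-W(T_{D^-_i})\big)$ invariant. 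Thus $\phi$ is a measure-preserving bijection from the positively-critical event onto the negatively-critical event, so the two probabilities coincide; in either case each equals $\M{ld}_i(W,G)/2$.

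The probabilistic steps are light --- ``expectation of an indicator'' and a measure-preserving involution --- so I expect the main obstacle to be the two structural lemmas about the resolution map: that changing $i$'s direct vote affects exactly $i$ and its transitive delegators, all in the same direction (monotonicity, for the first identity), and that the global flip $\phi$ commutes with delegation resolution, including the correct handling of abstaining cycles (for the symmetric case).
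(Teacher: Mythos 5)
Your proposal is correct and takes essentially the same route as the paper's own (very terse) proof sketch: you identify the summand $\tfrac12\big(W(T_{D^+_i})-W(T_{D^-_i})\big)$ as the indicator of criticality via monotonicity of the resolution map, and obtain the factor $\tfrac12$ from the symmetry of the probabilistic model --- your vote-flipping involution and the independence argument for $\OutN{i}=\emptyset$ are precisely the details the paper leaves implicit. Your one interpretive choice, defining positive/negative criticality as criticality intersected with $\{W(T_D)=\pm 1\}$, correctly covers the edge case where a critical $i$ abstains through a delegation cycle and is consistent with the theorem's implicit claim that these two events partition the critical event.
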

This proof relies on the fact that we are summing over the probability of each $D$  with respect to $W(T_{D^+_i}) - W( T_{D^-_i})$, which measures when the voter $i$ is critical. Recall that being positively critical means that changing a vote for to against the issue will also change the outcome in the same way (negatively critical is defined similarly). Furthermore, this happens equally when $\OutN{i}=\emptyset$ (the only option is to vote either for or against the issue) or when $W$ is symmetric.

 For the second part of Theorem~\ref{thrm : LD Measure}, the condition is necessary as if $W$ reflects unanimity, i.e., $W(T) = 1 $ iff $T = \mathbbm{1}$, then voters will be more likely to be positively critical than negatively critical.~\footnote{If the voting rule requires total agreement to accept the proposal, then voter $i$ will be critical iff all other voters agree on the proposal. Thus, the probability that $i$  is critical while voting directly or indirectly in favour of the proposal is higher than $i$ being critical while voting directly or indirectly against the proposal.} 
Additionally, observe that the \emph{LD} Penrose-Banzhaf measure of voting power extends the standard Penrose-Banzhaf measure (formalized in Proposition~\ref{prop : extend BP}) and that its values are not normalized (i.e., summing over the agents does not yield 1). The corresponding voting power index can be defined by normalizing over voters. 

\begin{proposition}\label{prop : extend BP}
If $p_d^i = 0$ for all $i\in V$, e.g., if $E = \emptyset$, then the \emph{LD} Penrose-Banzhaf measure of voting power is equivalent to the standard Penrose-Banzhaf voting power measure. 
\end{proposition}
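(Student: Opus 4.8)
The plan is to show that, under the hypothesis $p_d^i = 0$ for all $i$, the probabilistic model underlying $\M{ld}_i$ collapses exactly to the uniform distribution over the binary voting profiles that defines the standard Penrose-Banzhaf measure, after which the equivalence becomes a direct computation.

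First I would unpack the individual uniformity assumption in this degenerate regime. When $p_d^i = 0$ for every voter, each voter votes directly with probability $p_v^i = 1$ and then chooses $+1$ or $-1$ with probability $1/2$ each, independently. Consequently, the only $G$-delegation partitions $D$ with $\mathbbm{P}(D) > 0$ are those with $D(j)\in\{-1,1\}$ for all $j$, i.e.\ the binary profiles, and each such profile carries weight $\mathbbm{P}(D) = (1/2)^n$. Since there are then no delegations to resolve, the induced direct-vote partition satisfies $T_D = D$ and never produces an abstention (so $T^0 = \emptyset$); thus $W$ is evaluated only on binary inputs and coincides with the associated binary voting rule.

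Next I would substitute this into the definition of $\M{ld}_i$. The key simplification is that the summand $\tfrac{W(T_{D_i^+}) - W(T_{D_i^-})}{2}$ does not depend on $i$'s own action in $D$, because $D_i^+$ and $D_i^-$ overwrite it. Grouping the sum by the restriction $C\in\{-1,1\}^{n-1}$ of $D$ to $V\setminus\{i\}$ — each such $C$ arising from exactly two partitions $D$, one per value of $i$, each of weight $(1/2)^n$ — the measure rewrites as
$$\M{ld}_i(W,G) = \frac{1}{2^{n-1}} \sum_{C\in\{-1,1\}^{n-1}} \frac{W(C, i=+1) - W(C, i=-1)}{2}.$$
By monotonicity of $W$, each summand lies in $\{0,1\}$ and equals $1$ precisely when $i$ is pivotal given $C$, so the right-hand side counts the configurations of the other voters for which $i$ is critical, divided by $2^{n-1}$. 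This is by definition the (absolute) Penrose-Banzhaf measure $\beta'_i$.

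Finally I would dispatch the parenthetical case $E=\emptyset$ as a special instance: an empty edge set forces $\OutN{i}=\emptyset$ for all $i$, and the individual uniformity assumption sets $p_d^i = 0$ whenever $\OutN{i}=\emptyset$, so the general hypothesis applies. I do not expect a deep obstacle here; the only delicate point is the bookkeeping of the probability weights and the factor of two, together with identifying the ternary rule restricted to abstention-free profiles with the binary rule whose Banzhaf measure is classical. Once the induced distribution is shown to be uniform over binary profiles, the equivalence is immediate.
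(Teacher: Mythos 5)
Your proof is correct. Note that the paper itself gives no proof of Proposition~\ref{prop : extend BP}: it is asserted in the main text as an immediate consequence of the definitions and does not appear among the omitted proofs in the appendix, so there is no official argument to compare against line by line. Your write-up supplies exactly the verification the authors leave implicit, and it does so cleanly: (i) under $p_d^i=0$ the individual uniformity assumption concentrates $\mathbbm{P}$ on the $2^n$ all-direct-vote partitions, each with weight $2^{-n}$, and since no delegations occur one has $T_D=D$ and $T^0=\emptyset$, so the ternary rule is only ever evaluated on binary inputs; (ii) the summand $\frac{W(T_{D_i^+})-W(T_{D_i^-})}{2}$ is indeed independent of $i$'s own entry in $D$, since $D_i^+$ and $D_i^-$ overwrite it, which justifies collapsing the two partitions per configuration $C\in\{-1,1\}^{n-1}$ and produces the correct factor $2\cdot 2^{-n}=2^{-(n-1)}$; (iii) monotonicity guarantees $W(C,i{=}{+}1)\ge W(C,i{=}{-}1)$, so each term is $0$ or $1$ and the sum is the pivotality count defining $\beta_i'$; and (iv) the case $E=\emptyset$ reduces to the hypothesis via the model's stipulation that $p_d^i=0$ whenever $\OutN{i}=\emptyset$. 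One could quibble that since delegation partitions with a delegation have probability zero rather than being excluded from $\mathcal{D}$, the grouping step should say "the only partitions with positive weight," but you state this in your first paragraph, so the bookkeeping is sound. No gaps.
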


\section{Hardness of computation}
Computing the standard Penrose-Banzhaf measure 
 in WVGs is $\#P$-complete \citep{prasad1990np}. However, it can be computed by a pseudo-polynomial algorithm that runs in polynomial time w.r.t. the number of voters and the maximum weight of a voter~\citep{matsui2000survey}. 
We show that the problem of computing the  \emph{LD} Penrose-Banzhaf measure is $\#P$-hard even when voter's weights are bounded linearly by the number of voters. Hence, a similar pseudo-polynomial algorithm is unlikely to exist. 
The proof uses an enumeration trick inspired by that of \citeauthor{chen2010scalable}~[\citeyear{chen2010scalable}, Theorem~1]. 
Informally speaking, this trick shows that one can solve the $\#P$-hard problem of counting the number of simple paths between two vertices in a digraph by using a polynomial number of calls to a subroutine solving our power measure computation problem, and then inverting a specific Vandermonde matrix. Hence, note that the type of reduction that is used is a Turing reduction.  

\begin{restatable}{theorem}{theoremHardness} 
Given a digraph $G = (V,E)$ and a WVG defined on $V$, computing the \emph{LD}-Penrose-Banzhaf power measure of a voter is $\#P$-hard under Turing reductions even when voter's weights are bounded linearly by the number of voters.
\end{restatable}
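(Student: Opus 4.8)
The plan is to give a Turing reduction from the \#P-hard problem of counting the simple directed paths between two designated vertices $s$ and $t$ of a digraph $H$ (Valiant), adapting the polynomial-interpolation argument of \citet{chen2010scalable}. By Theorem~\ref{thrm : LD Measure} we have $\M{ld}_i(W,G)=\mathbbm{P}(i\text{ is critical})$, so an oracle for the measure is an oracle for this criticality probability. From one instance $(H,s,t)$ I would build a polynomial-size family of weighted voting games and digraphs $\{(W_k,G_k)\}_{k=0}^{n}$, call the oracle once on each, and argue that the returned values satisfy a linear system in the unknowns $N_\ell$, the number of simple $s$--$t$ paths of length $\ell$. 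Inverting a Vandermonde-type coefficient matrix then recovers every $N_\ell$, and $\sum_\ell N_\ell$ answers the counting problem; since everything outside the oracle calls runs in polynomial time, the measure is \#P-hard under Turing reductions.

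The structural heart of the reduction is that \emph{delegation chains are exactly simple paths}. In any sampled $G$-delegation partition each voter delegates to a single out-neighbour, so following delegations out of $s$ traces a unique chain that either terminates at a directly-voting voter or runs into a cycle; by the resolution rule a cycle produces abstentions and never transmits a vote. Consequently the chain out of $s$ reaches $i:=t$ only along a simple path of $G$, and the events ``the realised chain from $s$ is $P$'' are pairwise disjoint over all simple $s$--$t$ paths $P$. I would embed $H$ into each $G_k$ and pad every potential path vertex with fresh dummy sink-voters so that they all share the same out-degree; under the (individual) uniformity assumption this makes the probability of delegating along any particular arc a single value $x_k$, tunable through the amount $k$ of padding and distinct for distinct $k$ (e.g.\ $x_k=1/(d^{\ast}+k+2)$ under global uniformity, where $d^{\ast}$ bounds the out-degrees of $H$). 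A simple path of length $\ell$ is then realised with probability $x_k^{\ell}$, so that $\mathbbm{P}(s\text{'s chain reaches }i)=\sum_\ell N_\ell\, x_k^{\ell}$, a polynomial in $x_k$ whose coefficients are the path counts we seek.

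It remains to make $i$'s criticality read off this polynomial. I would fix $i$'s weight and the quota and add a symmetric pool of directly-voting dummy voters (of weight linear in $n$) so that flipping $i$'s vote flips the outcome precisely when $s$'s chain reaches $i$ and the pool lands in a prescribed balanced window, arranging the out-degree padding so that it does not perturb this window (for instance by padding in sign-cancelling pairs or compensating the quota). The probability of the balanced window is then an explicitly computable constant $c_\ell$ depending only on the realised chain length $\ell$ and not on $k$, built from binomial tail probabilities of the pool, giving
\[
\M{ld}_i(W_k,G_k)=\sum_\ell N_\ell\, c_\ell\, x_k^{\ell}.
\]
The $n+1$ equations obtained from the distinct $x_k$ form a Vandermonde system scaled by the known diagonal $(c_\ell)$; it is invertible, so dividing out the nonzero $c_\ell$ recovers each $N_\ell$. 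Throughout, all weights are small constants or $O(n)$ and the number of added voters is polynomial, so the linear-weight restriction is respected.

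The step I expect to be the main obstacle is exactly this last decoupling. When $i$ flips, every voter that transitively delegates to $i$ flips with it, so a longer realised chain delivers more weight to $i$'s side, and the $H$-internal vertices that are \emph{not} on the realised $s$-chain may also delegate toward $i$; both effects threaten to make the contribution of a path depend on its length and on the rest of $H$ in an uncontrolled way. The gadget must neutralise these effects, or fold them into the coefficients $c_\ell$ so that these stay $k$-independent, explicitly computable, and nonzero, while keeping the balancing pool symmetric, the out-degree padding consistent, and all weights bounded linearly in the number of voters. Verifying that the resulting matrix is genuinely invertible for the chosen $x_k$ is the key point that makes the interpolation go through.
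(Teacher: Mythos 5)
Your global strategy matches the paper's: a Turing reduction from counting simple $s$--$t$ paths, padding with dummy sinks so that all relevant vertices share the out-degree $d_{max}+k$, which makes the per-arc delegation probability a single value $\pi_k = p_d^{d_{max}+k}\cdot\frac{1}{d_{max}+k}$ that is distinct across $k\in\{0,\dots,n\}$, and recovery of the path counts $N_\ell$ by inverting a Vandermonde system. However, the step you yourself flag as the main obstacle --- decoupling $t$'s criticality from everything except the event that $s$'s delegation chain reaches $t$ --- is precisely the gadget the paper supplies and your balanced-pool construction does not. The paper's solution is a \emph{dictatorship} construction rather than a balancing one: after deleting the in-edges of $s$ and the out-edges of $t$, it sets $w(s)=3|V_k|$, adds an isolated dummy $z$ with $w(z)=|V_k|$, gives all other voters weight $1$, and takes $q=\frac{1}{2}$. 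Then the outcome always equals $s$'s resolved vote, since $w(s)=3|V_k|>\frac{1}{2}(5|V_k|-2)$, except when $s$'s chain runs into a cycle and $s$ abstains, in which case $z$ dictates over the remaining weight because $w(z)=|V_k|>\frac{1}{2}(2|V_k|-2)$. Consequently $t$ is critical exactly when it lies on $s$'s delegation path, and since $t$ has no out-edges this means the chain is a simple $s$--$t$ path; hence $\M{ld}_t(W,G_k)=\sum_{\ell} N_\ell\,\pi_k^{\ell}$ with coefficient exactly $1$ --- no $c_\ell$, no pool --- and every effect you worried about (extra weight flowing to $t$'s side when it flips, off-path vertices of $H$ delegating toward $t$, chain length changing $t$'s accumulated weight) becomes irrelevant, because the outcome never depends on the weight $t$ accumulates. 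The weights $3|V_k|$ and $|V_k|$ respect the linear bound.

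As written, your route has a genuine gap rather than just an unproven lemma. Without a dictator, the identity $\M{ld}_i(W_k,G_k)=\sum_\ell N_\ell c_\ell x_k^\ell$ is false in general: $t$ can also be critical on partitions where $s$'s chain never reaches $t$ (the pool lands in the window and $t$'s accumulated weight from other delegators tips the result), contributing terms not indexed by path length; and the probability of your ``balanced window'' is conditioned on the realised delegations of the $H$-internal vertices and on the $k$-dependent padded voters, so $c_\ell$ would not be a constant depending only on $\ell$. You also do not handle the cycle/abstention case (this is what the backup dictator $z$ is for: it guarantees $t$ is never critical when $s$ abstains, killing the contamination from near-ties under the renormalised quota $q\times w(T^+\cup T^-)$). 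One further subtlety your sketch glosses over: the counterfactuals $D^{\pm}_t$ in the definition of $\M{ld}_t$ make $t$ vote directly, truncating any chain at $t$; the paper sidesteps this entirely by deleting $t$'s out-edges so that $t$ always votes. With the dictator gadget in place, the remainder of your argument --- disjointness of the path events, distinctness of the $\pi_k$, invertibility of the Vandermonde matrix, and polynomial overhead outside the oracle calls --- goes through exactly as you describe.
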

\begin{proof}[Proof sketch.] 
We give a reduction from the problem of counting simple paths in a digraph which is known to be \#P-complete \citep{valiant1979complexity}.  
The problem takes as input a digraph $G = (V,E)$ and nodes $s,t \in V$. The problem then returns the number of simple paths from $s$ to $t$ in $G$. Let denote $\mathcal P_\ell$ the set of paths of length $\ell$ between $s$ and $t$ in $G$. 
Given $G = (V,E)$ and $s,t \in V$ two vertices, we create $|V|+1$ different digraphs $G_k = (V_k, E_k)$ with $k \in \{0,\ldots |V|+1\}$ such that $G_k$ is obtained by modifying $G$ to impose some condition on the out-degree of nodes in $V$. 
Thus, in each digraph $G_k = (V_k, E_k)$, we consider a WVG where weights are linearly bounded in $|V|$ and such that voter $s$ is a dictator. 
Hence, $t$ is only critical when in $s$'s delegation path. 
Under the individual uniformity assumption, 
 we obtain the criticality of $t$ in each $G_k$ as a weighted sum of values $|\mathcal{P}_\ell|$ such that the weights of these $|V|+1$ equations form a Vandermonde matrix. Inverting this matrix makes it possible to derive the values $|\mathcal{P}_\ell|$ from the criticality of $t$ in each graph $G_k$, and thus to solve the problem of counting simple paths from $s$ to $t$.
\end{proof}

While computing exactly the \emph{LD} Penrose-Banzhaf measures of voting power is hard, these values can be approximated easily using a standard sampling procedure. 
We sample enough $G$-delegation partitions by simulating the behaviours of the different voters according to the individual uniformity assumption and consider the expected criticality of the voters given these samples. 
Relying on Hoeffding's inequality, one can then prove that these estimates are within some $\epsilon$ of the true voting power measure. 

In the next two sections, we discuss two restricted classes of instances for which more compact formulations of the LD Penrose-Banzhaf measure can be designed such that there exists a pseudo-polynomial algorithm. 

\section{Proxy voting} \label{sec:bipartite}

This section models a PV setting where  $G = (V,E)$ is bipartite with $V = (V_d, V_v)$ and $E = \{(i,j) \mid i \in V_d, j \in V_v \}$. 
The set of delegatees $V_v$ is given in input and is predetermined, e.g., by an election, self-nomination, or sortition. Each delegatee $i\in V_v$ will vote, i.e., $\OutN{i} = \emptyset$ and $p_d=0$, 
whereas each voter $i \in V_d$ can vote or delegate to any delegatee in $V_v$, i.e., $\OutN{i} = V_v$.\footnote{We present an alternative PV model in Appendix~\ref{appendix:PValpha} where voters in $V_d$ must delegate to a voter in $V_v$. We show that voters' criticalities can also be computed by a pseudo-polynomial algorithm.}  
Note that, under our individual uniformity assumption, the probability of delegating for each $i\in V_d$ is equal as they all have the same out-degree. 
We denote this value by $p_d$ and let $p_v = 1-p_d$. 
Moreover, let $n_v = |V_v|$ and $n_d = |V_d| = n-n_v$.

We provide more compact formulas for the LD Penrose-Banzhaf measure in this PV setting. We only consider binary voting rules as there cannot be delegation cycles in this setting ($T^0 = \emptyset$).   

To measure how critical an agent $i$ can be, we consider partitions of $V\backslash\{i\}$ into three sets $V^+$, $V^-$, $V^i$ where $V^+$ (resp. $V^-$) represents the $n^+$ (resp. $n^-$) voters whose final vote is in favour of (resp. against) the proposal, either by voting directly or indirectly and $V^i$ is the set of $n^i$ voters who delegate to voter $i$. 
Note that $V^+$, $V^-$, $V^i$ form a partition of $V\backslash \{i\}$ and $V^i=\emptyset$ when $i\in V_d$.
We focus on how these sets intersect $V_d$ and $V_v$.  
We define $V_d^+$, $V_d^-$, $V_v^+$, and $V_v^-$ with size $n^+_d$, $n^-_d$, $n^+_v$, and $n^-_v$, respectively, such that  $V_x^\circ = V_x \cap V^\circ$ for $x\in \{v,d\}$ and $\circ\in\{-,+\}$.

Given our probabilistic model of delegation partitions, observe that the probability of having a partition $V^+, V^-, V^i$ only depends on these cardinalities. More precisely: 

\noindent -- When $i \in V_v$, note that $n^-_v = n_v - 1 - n^+_v $ and $n^i = n_d - n_d^+-n_d^-$.
Hence, we denote this probability of having such a partition $V^+, V^-, V^i$ by $P_{v}(n^+_v,n^+_d, n^-_d)$: 
\begin{align}
P_{v}(n^+_v,n^+_d, n^-_d) =& \frac{1}{2^{n_v -1}} (\frac{p_v}{2} \!+\! p_d\frac{n_v^+}{n_v})^{n_d^+} \notag\\
&\times(\frac{p_v}{2} \!+\! p_d\frac{n_v^-}{n_v})^{n_d^-}(\frac{p_d}{n_v})^{n^i}. \label{eq:pdtee beta} 
\end{align}
-- When $i \in V_d$, note that $n_v^- = n_v - n_v^+$ and $n_d^- = n_d-1-n_d^+$. We let  $P_d(n_v^+,n_d^+)$ denote the probability of having such a partition of $V^+, V^-$: 
\begin{align}
P_{d}(n^+_v,n^+_d) =& \frac{1}{2^{n_v}} (\frac{p_v}{2} \!+\! p_d\frac{n_v^+}{n_v})^{n_d^+}(\frac{p_v}{2} \!+\! p_d\frac{n_v^-}{n_v})^{n_d^-}. \label{eq:pdtor beta}
\end{align} 
There are some conditions on the integer parameters $n_v^+$, $n_d^+$, and $n_d^-$. 
If $i \in V_v$, we have that $n^+_v \le n_v-1$, and $n^+_d + n^-_d \le n_d$.  If $i \in V_d$, we have $n^+_v \le n_v$, and $n^+_d + n^-_d = n_d - 1$. 
If these conditions are not respected, we set $P_{v}(n^+_v,n^+_d, n^-_d) = 0$ (resp. $P_{d}(n^+_v,n^+_d) = 0$).

We now detail Equation~\ref{eq:pdtee beta}. Equation~\ref{eq:pdtor beta} is obtained similarly.  
The probability of the binary votes of the delegatees other than $i$ being a certain way is $(\nicefrac{1}{2})^{n_v-1}$. 
Then, the probability that each voter in $V^+_d$ (resp. $V^-_d$) votes in favour of the proposal is $\nicefrac{p_v}{2}+ p_d\nicefrac{n_v^+}{n_v}$ (resp. against is $\nicefrac{p_v}{2} + p_d\nicefrac{n_v^-}{n_v}$) where the first summand corresponds to the case in which the voter votes and the second to the one in which they delegate.  
Last, the probability that each voter in $V^i_d$ delegates to $i$ is $\nicefrac{p_d}{n_v}$. 
Equation~\ref{eq:pdtee beta} is obtained by taking the products of these terms.

Given the probability of having a partition $V^+$, $V^-$, $V^i$ of $V\setminus\{i\}$, the voting power measure for a voter  in our PV setting $i\in V$ can be formulated in the following way. 

\begin{proposition} \label{def : PMdtee}
Given a bipartite digraph $G=(V,E)$ with $V=(V_d,V_v)$ and $E = \{(i,j) | i\in V_d, j \in V_v\}$ and a binary voting rule $W$, the \emph{LD Penrose-Banzhaf measure} $\M{ld}_i(W,G)$ of voter $i\in V$ can be formulated as:
\begin{align*}
\M{ld}_i(W,G) &= \!\!\!\!\!\!\! \sum_{\substack{ V^+_v, V^-_v \\ \in \mathcal{P}_2(V_v \setminus\{i\})}} \sum_{\substack{ V^+_d, V^-_d, V^i \\ \in \mathcal{P}_3(V_d)}} P_{v}(n^+_v,n^+_d, n^-_d)\\
&~~~~~\times \delta_{i,-\rightarrow +}^{W}(V^+, V^-,V^i) \text{ if } i \in V_v.\\
\M{ld}_i(W,G) &= \!\!\!\! \sum_{\substack{ V^+_v, V^-_v \\ \in \mathcal{P}_2(V_v)}} \sum_{\substack{ V^+_d, V^-_d \\ \in \mathcal{P}_2(V_d \setminus\{i\})}} P_{d}(n^+_v,n^+_d)\\
&~~~~~\times \delta_{i,-\rightarrow +}^{W}(V^+, V^-,\emptyset) \text{ if } i \in V_d.
\end{align*}
\end{proposition}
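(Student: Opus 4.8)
The plan is to start from the definition $\M{ld}_i(W,G) = \sum_{D\in\mathcal{D}} \mathbbm{P}(D)\,\frac{W(T_{D^+_i}) - W(T_{D^-_i})}{2}$ and to reorganize this sum by grouping the delegation partitions $D$ according to the partition $(V^+,V^-,V^i)$ of $V\setminus\{i\}$ that each $D$ induces on the \emph{final} votes. The two claimed formulas will then follow by factoring out the criticality term attached to each such partition and summing the probabilities of all delegation partitions that give rise to it.

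The first step is to show that the criticality contribution $\frac{W(T_{D^+_i}) - W(T_{D^-_i})}{2}$ depends on $D$ only through the induced partition $(V^+,V^-,V^i)$. Since $W$ is a binary voting rule, it reads off only the final direct-vote partition. In the bipartite PV setting there are no transitive delegations (delegatees do not delegate), so the final vote of every voter in $V\setminus(\{i\}\cup V^i)$ is unaffected by $i$'s own choice, while the voters in $\{i\}\cup V^i$ switch sides exactly when $i$ switches. Hence $T_{D^+_i}$ and $T_{D^-_i}$ have supporting/opposing sets $(V^+\cup V^i\cup\{i\},\,V^-)$ and $(V^+,\,V^-\cup V^i\cup\{i\})$ respectively, so the contribution equals $\delta_{i,-\rightarrow +}^{W}(V^+,V^-,V^i)$ independently of which particular $D$ realized the partition.

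The second step is to identify $\sum_{D\text{ inducing }(V^+,V^-,V^i)}\mathbbm{P}(D)$ with $P_{v}(n^+_v,n^+_d,n^-_d)$ (when $i\in V_v$) or $P_{d}(n^+_v,n^+_d)$ (when $i\in V_d$). A fixed partition pins down the votes of the delegatees in $V_v\setminus\{i\}$ but leaves internal freedom for each delegator: a voter placed in $V^+_d$ may have voted $+$ directly or delegated to any delegatee in $V^+_v$, and similarly for $V^-_d$. Summing $\mathbbm{P}(D)$ over these independent internal choices factorizes across voters and reproduces exactly the per-voter probabilities $\frac{p_v}{2}+p_d\frac{n_v^+}{n_v}$, $\frac{p_v}{2}+p_d\frac{n_v^-}{n_v}$, and $\frac{p_d}{n_v}$ already derived for Equations~\ref{eq:pdtee beta} and~\ref{eq:pdtor beta}; together with the $(\nicefrac{1}{2})^{n_v-1}$ (resp. $(\nicefrac{1}{2})^{n_v}$) factor for the delegatees, this is precisely $P_v$ (resp. $P_d$). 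Finally I would split into the two cases: when $i\in V_v$ the partition ranges over $(V^+_v,V^-_v)\in\mathcal{P}_2(V_v\setminus\{i\})$ and $(V^+_d,V^-_d,V^i)\in\mathcal{P}_3(V_d)$, whereas when $i\in V_d$ no voter can delegate to $i$ (all arcs point into $V_v$), forcing $V^i=\emptyset$ and leaving $(V^+_v,V^-_v)\in\mathcal{P}_2(V_v)$ and $(V^+_d,V^-_d)\in\mathcal{P}_2(V_d\setminus\{i\})$.

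The main obstacle is the bookkeeping in the second step: one must verify that the ``final vote'' sets $V^+_d,V^-_d$ really are determined independently of $i$'s choice (which rests on the absence of transitive delegation in the bipartite setting), and that the internal delegation choices being summed over are mutually exclusive and exhaust all $D$ realizing a given partition, so that the factorized product neither over- nor under-counts. Everything else is a routine reindexing of the defining sum.
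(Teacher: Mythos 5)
Your proposal is correct and follows essentially the same route as the paper, which derives Equations~\ref{eq:pdtee beta} and~\ref{eq:pdtor beta} by exactly the per-voter factorization you describe and then states the proposition as the regrouped form of the defining sum over $\mathcal{D}$. Your first step (checking that $\frac{W(T_{D^+_i})-W(T_{D^-_i})}{2}$ depends on $D$ only through the induced partition $(V^+,V^-,V^i)$, which rests on the absence of transitive delegations in the bipartite setting) is left implicit in the paper, so making it explicit is a welcome but not substantively different addition.
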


We return to Example~\ref{ex:introunderlying} to illustrate our power measures. Notably, we shall see that a voter in $V_v$ with a small weight can achieve a higher criticality through delegation. 

\begin{example}\label{ex:ProxyIndex}
Consider the voters in Example~\ref{ex:introunderlying}; however, now in the PV setting, we assume that $V_v=\{a,b,c\}$ and  $V_d=V\backslash V_v$ and we compute the voters' LD Penrose-Banzhaf measures using Proposition \ref{def : PMdtee}. 
The resulting power measures can be seen in Table~\ref{tab:EXpowerindicies} when $p_d=0,0.5,0.9$. 
As those in $V_d$ have the possibility of voting directly as well as delegating, they have more influence on the outcome  when they are more likely to vote directly; conversely, those in $V_v$ have less as they are less likely to receive delegations. 
When $p_d=0$, all agents vote and thus, we return to a standard WVG with the standard Banzhaf measure where all voters with the same weight have the same voting power.

\begin{table}[]
    \centering
    \caption{$\M{ld}$ when $p_d=0,0.5, 0.9$ for voters $V=\{a,\cdots, m\}$ in the PV setting with $V_v = \{a,b,c\}$ (Values are rounded to  3 d.p.). }
    \label{tab:EXpowerindicies}
    \begin{tabular}{|c|c|c|c|c|}
    \hline
       Agent $x\in V$  &  $p_d=0$ & $p_d=0.5$ & $p_d=0.9$ \\
         \hline
   $a$: $w=3$  & $0.511$& $ 0.552$& $0.542$\\
    $b$: $w=2$ & $0.306$& $ 0.395$ & $0.438$\\
    $c$: $w=1$  &$0.148$& $0.303$ & $0.390$\\
    \hline
    $d$: $w=2$  & $0.306$& $0.206$ &$0.138$\\
     $V_d\backslash\{d\}$: $w=1$ & $0.148$& $0.098$ &$0.065$\\
    \hline
    \end{tabular}
\end{table}
\end{example}


\paragraph{Computational aspects}  
We turn to some computational aspects regarding the PV setting. 
We obtain that the exact computation of the LD measure of voting power is $\#$P-hard due to Proposition~\ref{prop : extend BP}~\citep{prasad1990np}.   
More positively, we show that in WVGs, when restricting the underlying graph to represent the  PV setting that, $\M{ld}$ can be computed in pseudo-polynomial time, similarly to the Penrose-Banzhaf measure. 
This result relies on the following lemma.

\begin{restatable}{lemma}{lemmaPVpseudo}\label{lemma : PV pseudo}
Given a WVG with weight function $w$ and an integer $c$. Computing the number of ways of having a partition $(S_1,S_2,\ldots,S_c)$  in $\mathcal{P}_c(S)$ of a set $S\subseteq V$ with sizes $n_1$, $n_2$,$\ldots$, $n_c = |S| - \sum_{l=1}^{c-1}n_l$, and weights $w(S_1)= w_1$, $w(S_2) = w_2$, $\ldots$, and $w(S_c) = w_c =w(S) - \sum_{l=1}^{c-1} w_l$ can be computed in pseudo-polynomial time.  
\end{restatable}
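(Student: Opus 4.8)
The plan is to prove this by a standard dynamic programming argument that jointly tracks the weight and the cardinality of each block of the partition as we process the voters of $S$ one at a time. The key observation is that the quantity we want to count depends on each voter only through which of the $c$ blocks $S_1, \ldots, S_c$ they are assigned to, and the constraints are additive in both the sizes $n_l$ and the weights $w_l$. Since these constraints are separable over the blocks and over the voters, a table indexed by partial sums of sizes and weights will capture exactly the information needed.

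Concretely, I would order the elements of $S$ arbitrarily as $x_1, \ldots, x_{|S|}$ and define a DP table $N[j; s_1, \ldots, s_{c-1}; u_1, \ldots, u_{c-1}]$ equal to the number of ways to assign the first $j$ voters to the $c$ blocks so that, for each $l \in \{1, \ldots, c-1\}$, exactly $s_l$ of them lie in $S_l$ and their total weight is $u_l$. The last block $S_c$ is then determined by complementation, which is why we only track $c-1$ of the blocks; this matches the statement's convention that $n_c$ and $w_c$ are residual. The recurrence is the obvious one: to form an assignment of the first $j$ voters, we decide the block of $x_j$, so
\begin{align*}
N[j; s_1, \ldots, s_{c-1}; u_1, \ldots, u_{c-1}] = \; & N[j-1; s_1, \ldots, s_{c-1}; u_1, \ldots, u_{c-1}] \\
& + \sum_{l=1}^{c-1} N[j-1; \ldots, s_l - 1, \ldots; \ldots, u_l - w(x_j), \ldots],
\end{align*}
where the first term corresponds to placing $x_j$ in the residual block $S_c$ and the $l$-th summand to placing it in $S_l$ (valid only when $s_l \ge 1$ and $u_l \ge w(x_j)$). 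Initializing $N[0; 0, \ldots, 0; 0, \ldots, 0] = 1$ and reading off the entry at $j = |S|$ with the prescribed sizes and weights $(n_1, \ldots, n_{c-1})$, $(w_1, \ldots, w_{c-1})$ gives the desired count.

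For the complexity bound I would argue that the table has $O(|S|)$ choices for $j$, at most $(|S|+1)^{c-1}$ size-tuples, and at most $(w(S)+1)^{c-1}$ weight-tuples, so its total size is polynomial in $|S|$ and in the maximum weight $W_{\max} = \max_{x} w(x)$ once $c$ is treated as a constant; each entry is filled in $O(c)$ time using the recurrence. Hence the running time is polynomial in $|S|$ and $W_{\max}$, which is exactly the notion of pseudo-polynomial time used for the Penrose-Banzhaf measure earlier in the paper. The one subtlety worth flagging — and the main thing to get right rather than a genuine obstacle — is whether $c$ is fixed or part of the input: the bound is pseudo-polynomial only when $c$ is a constant, because the table dimension grows like $(|S| \cdot w(S))^{c-1}$. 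In the intended application (Proposition~\ref{def : PMdtee}) we have $c \in \{2, 3\}$, so this is unproblematic, but I would state the dependence on $c$ explicitly to keep the claim honest.
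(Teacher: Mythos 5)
Your proof is correct and takes essentially the same approach as the paper's: an identical dynamic program over the voters (the paper processes suffixes $S[i]$ where you process prefixes, an immaterial difference) that tracks the sizes and weights of $c-1$ blocks with the $c$-th block left residual, using the same recurrence and the same memoization-based complexity bound. Your explicit caveat that the bound is pseudo-polynomial only for constant $c$ matches the paper's own qualification (``for a fixed $c$ value''), and in the paper's applications $c \le 4$, so nothing further is needed.
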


\begin{restatable}{theorem}{theoremPVPseudo}
Given a bipartite digraph $G=(V,E)$ with $V=(V_d,V_v)$ and $E = \{(i,j) | i\in V_d, j \in V_v\}$,  a WVG with weight function $w$ and quota-ratio $q$, and a voter $i$, measure $\M{ld}_i$ can be computed in pseudo-polynomial time.
\end{restatable}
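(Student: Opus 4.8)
The plan is to start from the closed-form expressions for $\M{ld}_i(W,G)$ supplied by Proposition~\ref{def : PMdtee} and to replace the sums over individual partitions, of which there are exponentially many, by sums over a pseudo-polynomial number of ``signatures'' that record only the data on which each summand actually depends. First I would observe that in a WVG every summand factorizes through two kinds of coarse information. The probability factors $P_v(n^+_v,n^+_d,n^-_d)$ and $P_d(n^+_v,n^+_d)$ depend only on the \emph{cardinalities} of the cells, while the criticality term $\delta_{i,-\rightarrow +}^{W}(V^+,V^-,V^i)$ depends only on the \emph{weights} of its three argument cells together with $w(i)$: in a WVG the outcome $W(T^+,T^-)$ is decided by comparing $w(T^+)$ with $q\,w(T^+\cup T^-)$, and the two evaluations inside $\delta$ differ only by moving $i$ and its received delegations $V^i$ from one side of this comparison to the other. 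Hence two partitions sharing the same cell sizes and the same cell weights contribute identical summands.

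This motivates grouping the partitions by their signature. For $i\in V_v$ the relevant signature is the tuple $\sigma=(n^+_v,w^+_v,n^+_d,w^+_d,n^-_d,w^-_d)$ recording the sizes and weights of $V^+_v$, $V^+_d$ and $V^-_d$; the data of the remaining cells $V^-_v$ and $V^i$ is then fixed by $n_v$, $n_d$, $w(V_v\setminus\{i\})$ and $w(V_d)$. The idea is then to rewrite
\begin{align*}
\M{ld}_i(W,G)=\sum_{\sigma} A(n^+_v,w^+_v)\,B(n^+_d,w^+_d,n^-_d,w^-_d)\,P_v(\sigma)\,\delta(\sigma),
\end{align*}
where $A(n^+_v,w^+_v)$ counts the partitions $(V^+_v,V^-_v)\in\mathcal{P}_2(V_v\setminus\{i\})$ with $|V^+_v|=n^+_v$ and $w(V^+_v)=w^+_v$, while $B(n^+_d,w^+_d,n^-_d,w^-_d)$ counts the partitions $(V^+_d,V^-_d,V^i)\in\mathcal{P}_3(V_d)$ with the prescribed sizes and weights of the first two cells (the third being determined). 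Both counting tables are exactly what Lemma~\ref{lemma : PV pseudo} delivers, with $c=2$ and $c=3$ respectively, and are therefore computable in pseudo-polynomial time. The case $i\in V_d$ is handled identically, with $V^i=\emptyset$, a $\mathcal{P}_2$ partition on each of $V_v$ and $V_d\setminus\{i\}$, and $A,B$ obtained from Lemma~\ref{lemma : PV pseudo} with $c=2$.

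It then remains to count the work. The entries of $A$ and $B$ are indexed by sizes in $\{0,\dots,n\}$ and weights in $\{0,\dots,w(V)\}$, so the number of relevant signatures is bounded by $(n+1)^3(w(V)+1)^3$, polynomial in $n$ and pseudo-polynomial in the total weight. For each $\sigma$ the factor $\delta(\sigma)$ reduces to two weight comparisons against $q\,w(V)$, the factor $P_v(\sigma)$ is evaluated with a constant number of fast-exponentiation computations, and $A,B$ are read off the precomputed tables; summing over all signatures is thus pseudo-polynomial, as claimed.

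I expect the only genuinely delicate point to be the bookkeeping of the reduction to signatures: verifying that the chosen tuple captures every quantity the summand depends on, in particular that the cell $V^i$ and $i$'s own weight $w(i)$ are folded correctly into both evaluations of $W$ inside $\delta$, and that the domain constraints induced on the signature coordinates (such as $n^+_d+n^-_d\le n_d$) match precisely the conditions under which $P_v$ and $P_d$ are declared to vanish. The arithmetic is performed over rationals and is counted at unit cost, as is standard for this family of pseudo-polynomial results; no further number-theoretic obstacle arises, since Lemma~\ref{lemma : PV pseudo} already provides the required integer counts.
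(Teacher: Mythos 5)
Your proposal is correct and takes essentially the same approach as the paper's proof: the paper likewise groups partitions into ``decompositions'' recording the sizes and weights of the cells (your signatures, with the redundant coordinates made explicit), counts each class via Lemma~\ref{lemma : PV pseudo} with $c=3$ on $V_d$ and $c=2$ on $V_v\setminus\{i\}$ (your tables $B$ and $A$, the paper's $\lambda_1$ and $\lambda_2$), and sums the products $\lambda_1\lambda_2\,P_v(n^+_v,n^+_d,n^-_d)$ over the pseudo-polynomially many decompositions for which $i$ is critical, using exactly your interval test $w^+_v+w^+_d\in\bigl(q\,w(V)-w(i)-w^i,\;q\,w(V)\bigr]$. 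The delicate points you flag (folding $w(i)$ and $w^i$ into both evaluations of $W$, and matching the domain constraints to where $P_v$ and $P_d$ vanish) are handled in the paper exactly as you anticipate.
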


\section{Liquid democracy with complete digraph}\label{sec:complete} 
This section discusses the case where $G = (V,E)$ is complete, representing LD where  any voter can vote directly, or delegate their vote to any other voter. 
Since the graph is complete, every voter has the same out-degree $|V|-1$. Under our individual uniformity assumption, this implies that the probability to delegate $p_d$ is the same for every voter.
As with PV, we provide a more compact formulation of our power measure by grouping over similar voters instead of summing over all delegation partitions.  By abuse of notation, we say that a set $S$ of voters form an in-forest when the graph obtained by having a vertex per voter in $S$ and an arc from $i$ to $j$ when $i$ delegates to $j$ forms an in-forest. 
We consider a partition of $V\setminus\{i\}$ into four sets $V^+$, $V^-$, $V^0$, $V^i$ where $V^+$ (resp. $V^-$) is a set of $n^+$ (resp. $n^-$) voters voting directly in favour of (resp. against) the issue or indirectly by transitively delegating to a root voter in $V^+$ (resp. $V^-$); $V^0$ is a set of $n^0$ voters abstaining as their delegation leads to a delegation cycle; and $V^i$ is the set of $n^i$ voters delegating (directly or not) to $i$.  
Note that $V^+$,$V^-$,$V^0$ and $V^i$ form a partition of $V\setminus \{i\}$.

We will use recursive formulas to compute the probability of having such a partition into four sets. 
Let $P^{ld}(m,p)$ be the probability that $m$ voters in a set $S\subseteq V$ form an in-forest where the roots all make the same action\footnote{ $P^{ld}(m,p)$ depends only on $|S|$ and $p$, and not on the list of voters in $S$, and thus is independent n the choice of voters in $S$. }; an action which is chosen by each root voter with probability $p$. 
For instance, $P^{ld}(n^+,p_v/2)$ would be the probability that the voters in $V^+$ form a forest where each root voter is in favour of the proposal. 
Consider an arbitrary voter $j\in S$, and a two partition $(S_1,S_2) \in \mathcal{P}_2(S\setminus\{j\})$ with respectively $m_1$ and $m_2=m-1-m_1$ voters. 
The voters in $S_1$ are those who delegate directly or indirectly to $j$, while voters in $S_2$ do not. 
Another way of seeing it is that all voters in $S_1$ form an in-forest where every root delegates to voter $j$ (with probability $\nicefrac{p_d}{(n-1)}$), while voters in $S_2$ form an in-forest where every root realizes the same action as in $S$. 
Regarding voter $j$, there are two possibilities: either voter $j$ realizes the same action as the roots of $S$ (e.g., voting for the proposal), or they delegate to a member of $S_2$ (with probability $\nicefrac{p_dm_2}{(n-1)}$). 

Hence, we obtain the following recursive formula:
\begin{align}
    P^{ld}(m,p) = \sum_{m_1=0}^{m-1} &{m-1 \choose m_1}P^{ld}(m_1,\frac{p_d}{n-1})P^{ld}(m_2,p) \notag\\
    &\times(p + p_d\dfrac{m_2}{n-1}) \label{eq : rec ld}
\end{align}
with base case $P^{ld}(1,p) = p$ and $P^{ld}(0,p) = 1$. 

Thus, the probability that $V^+$ (resp. $V^-$) forms an in-forest where the roots vote in favour of (resp. against) the issue is $P^{ld}(n^+,p_v/2)$ (resp. $P^{ld}(n^-,p_v/2)$); and that the probability that $V^i$ forms an in-forest where the roots delegate to voter $i$ is $P^{ld}(n^i,p_d/(n-1))$. 
For $V^0$, we need a different formula. 
Voters in $V^0$ have their delegation leading to a delegation cycle through other voters in $V^0$ iff each voter in $V^0$ delegates to another voter in $V^0$. 
This occurs with probability $P_0^{ld}(n^0) = \left(\nicefrac{ p_d(n^0-1)}{(n-1)}\right)^{n^0}$. 

To sum up, the probability of having a four partition $(V^+,V^-,V^i, V^0)$ of $V\setminus\{i\}$ is $P^{ld}(n^+,p_v/2) 
P^{ld}(n^-,p_v/2)
P^{ld}(n^i,p_d/(n-1))
P_0^{ld}(n^0)$.

\begin{proposition} \label{def : Mdtee}
Given a complete digraph $G=(V,E)$ and a ternary voting rule $W$, the \emph{LD Penrose-Banzhaf measure} $\M{ld}_i(W,G)$ of voter $i\in V$ can be formulated as:
\begin{align*}
\M{ld}_i(W) = \!\!\!\!\!\!\! 
\sum_{\substack{V^+, V^-, V^0, V^i\\ 
\in \mathcal{P}_4(V \setminus\{i\})}}& P^{ld}(n^+,\frac{p_v}{2})P^{ld}(n^-,\frac{p_v}{2})\\
\times P^{ld}(n^i,\frac{p_d}{n-1})& P_0^{ld}(n^0)\delta_{i,-\rightarrow +}^{W}(V^+, V^-, V^i).
\end{align*}
\end{proposition}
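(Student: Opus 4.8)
The plan is to start from the definition of $\M{ld}_i(W,G)$ as a sum over all $G$-delegation partitions and to progressively coarsen that summation: first by marginalising over $i$'s own action, and then by grouping the remaining partitions according to the four-set partition of $V\setminus\{i\}$ they induce. The compact probabilities $P^{ld}$ and $P_0^{ld}$, whose meaning and recursion were already established in the discussion preceding the statement, will then slot directly into the grouped sum.

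First I would observe that the summand $\tfrac{W(T_{D^+_i}) - W(T_{D^-_i})}{2}$ does not depend on $i$'s action in $D$: in both $D^+_i$ and $D^-_i$ the action of $i$ is overwritten (to $+1$ and $-1$ respectively) while every other voter keeps its action from $D$. Hence this quantity is a function of the restriction $D_{-i} := (D(j))_{j\neq i}$ alone. Writing $\mathbbm{P}(D)$ as a product of independent per-voter factors and summing over all possible actions of $i$ (whose probabilities sum to $1$) gives $\mathbbm{P}(D_{-i})$, so that
$$\M{ld}_i(W,G) = \sum_{D_{-i}} \mathbbm{P}(D_{-i})\,\frac{W(T_{D^+_i}) - W(T_{D^-_i})}{2}.$$
Next I would identify the summand with $\delta_{i,-\rightarrow +}^{W}$. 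Given $D_{-i}$, resolving the delegations of the voters in $V\setminus\{i\}$ (treating $i$ as a sink) sends each voter to exactly one of four classes: $V^+$ (path ends at a root voting $+1$), $V^-$ (root voting $-1$), $V^i$ (path reaches $i$), or $V^0$ (path enters a cycle avoiding $i$). Because a ternary rule reads only the direct-vote partition, $T_{D^+_i}$ has positive set $V^+\cup V^i\cup\{i\}$ and negative set $V^-$, while $T_{D^-_i}$ has positive set $V^+$ and negative set $V^-\cup V^i\cup\{i\}$; therefore the summand equals $\delta_{i,-\rightarrow +}^{W}(V^+,V^-,V^i)$, which depends on $D_{-i}$ only through the induced partition.

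The real work lies in the final step: grouping the restrictions $D_{-i}$ by the partition $(V^+,V^-,V^0,V^i)$ they induce, and showing that $\sum \mathbbm{P}(D_{-i})$ over all restrictions inducing a fixed partition factorises as $P^{ld}(n^+,p_v/2)\,P^{ld}(n^-,p_v/2)\,P^{ld}(n^i,p_d/(n-1))\,P_0^{ld}(n^0)$. The key point is that a restriction induces this partition if and only if, simultaneously, the voters of $V^+$ form an in-forest whose roots vote $+1$ and whose delegations stay inside $V^+$, analogously for $V^-$ and for $V^i$ (roots delegating to $i$), while every voter of $V^0$ delegates inside $V^0$. These four constraints bear on disjoint sets of voters, so since actions are chosen independently their joint probability is the product of the four block probabilities; the observation (recorded in the footnote) that $P^{ld}(m,p)$ depends only on the block size $m$ lets me write each factor in terms of the cardinalities $n^+,n^-,n^i,n^0$. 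The main obstacle is verifying that the recursively-defined $P^{ld}(m,p)$ of Equation~\ref{eq : rec ld} really is the probability that a given block of $m$ voters forms an in-forest with same-action roots (each chosen with probability $p$), which I would prove by induction on $m$, conditioning on the sub-forest delegating to a fixed voter $j$ of the block exactly as in the derivation preceding the statement, and separately checking that $P_0^{ld}(n^0)=(p_d(n^0-1)/(n-1))^{n^0}$ is the probability that each of the $n^0$ voters delegates to one of the other $n^0-1$ voters of its block. Substituting the factorised probability into the coarsened sum then yields exactly the claimed expression over $\mathcal{P}_4(V\setminus\{i\})$.
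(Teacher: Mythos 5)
Your proof is correct and takes essentially the same route as the paper, which justifies this proposition only through the informal derivation in the text preceding it: grouping delegation partitions by the induced four-set partition of $V\setminus\{i\}$, factorising the probability over the disjoint blocks by independence of voters' actions, and establishing the recursion for $P^{ld}(m,p)$ by conditioning on the sub-forest delegating to a fixed voter $j$, with $P_0^{ld}(n^0)$ handled separately. Your additional steps — marginalising over $i$'s own action (which is overwritten in both $D^+_i$ and $D^-_i$), identifying the summand with $\delta^{W}_{i,-\rightarrow+}(V^+,V^-,V^i)$, and verifying that a restriction induces the given partition \emph{iff} the four block constraints hold simultaneously — simply make rigorous what the paper leaves implicit.
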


\begin{example}\label{ex:LDpowerindices}
We return to the agents $V=\{a, \cdots, m\}$ from the previous examples, with the same weights as before;  however, as we are in the LD setting where the underlying network is a complete digraph. In Table~\ref{tab:EXpowerindiciesLD}, we see the power measures of each agent where the probability of delegating varies. 
 When $p_d=0$, we are in the standard weighted voting game case where all agents vote directly. When $p_d=0.5$, those with less voting weight have their voting power measure increase, this is due to the possibility of others delegating to them and the voting weight they control becoming higher. Observe that when $p_d=1$, all agents are caught in delegation cycles and $T^0=V$. Thus, we study when $p_d=0.9$. 

\begin{table}[]
    \centering
     \caption{\small $\M{ld}_x$ (rounded to 3 d.p.) for $p_d\in \{0,0.5,0.9\}$ for $v=\{a,\cdots, m\}$ from Example~\ref{ex:introunderlying} when considering a complete network. }
    \label{tab:EXpowerindiciesLD}
    \begin{tabular}{|c|c|c|c|}
    \hline
       Agent $x\in V$  & $p_d=0$  & $p_d=0.5$  & $p_d=0.9$   \\
         \hline
    $a$: $w=3$ &  $0.511$ &  $0.424$  & $0.696$\\
    $b,d$: $w=2$ & $0.306$ & $0.308$ & $0.638$\\
    $V\backslash\{a,b,d\}$: $w=1$ & $0.148$ & $0.212$ & $0.568$ \\
    \hline
    \end{tabular}
\end{table}
\end{example}

In simulated examples, similar to Example~\ref{ex:LDpowerindices}, we noticed two trends. First, a \emph{flattening effect} on the power measures as $p_d$  increased. By this, we mean that the difference between the lowest and highest measure of power in the WVG (for any agent) becomes smaller. For instance, in Table~\ref{tab:EXpowerindiciesLD}, this difference is $0.438$, $0.266$, and $0.103$ for $p_d = 0$, $0.5$, and $0.9$, respectively. This flattening, in our LD setting, is due to all voters having the same available voting actions, no matter their weights. 
Notably, there can be no dummy agents when $p_d >0$, as for any agent, the delegation partition where all other voters delegate to them has a positive probability.
Second, as illustrated by Table~\ref{tab:EXpowerindiciesLD}, we see that when the probability of delegating increases, so does the probability of being critical, especially when the weights are equal.\footnote{We conjecture that when voting weights are equal, the probability of being critical strictly increases with $p_d$.} As when $p_d$ increases, the number of direct voters decreases while the expected accumulated weight of an agent increases. Hence, they are more likely to be critical when they vote directly.
Although it seems intuitive that as the probability of delegating increases, so does the probability of being critical, this is not generally true. In Table~\ref{tab:EXpowerindiciesLD}, we indeed observe that the criticality of voter $a$ decreases when $p_d$ increases from $0$ to $0.5$. 

\paragraph{Computational aspects}
Using Proposition~\ref{def : Mdtee} and Lemma~\ref{lemma : PV pseudo}, we show that, if the digraph is complete, our power measure can be computed in pseudo-polynomial time for WVGs. 

\begin{restatable}{theorem}{theoremLDPseudo}
Given a complete digraph $G=(V,E)$, a WVG with weight function $w$ and quota-ratio $q$, and a voter $i$, $\M{ld}_i$ can be computed in pseudo-polynomial time.
\end{restatable}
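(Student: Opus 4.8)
The plan is to evaluate the closed form of Proposition~\ref{def : Mdtee} by grouping its summands into pseudo-polynomially many equivalence classes and counting the members of each class with Lemma~\ref{lemma : PV pseudo}. The key observation is that, for a WVG, the summand attached to a four-partition $(V^+,V^-,V^0,V^i)$ of $V\setminus\{i\}$ depends on the four cells only through their cardinalities $n^+,n^-,n^0,n^i$ and the weights $w^+ = w(V^+)$, $w^- = w(V^-)$, $w^i = w(V^i)$. Indeed, the factors $P^{ld}(n^+,p_v/2)$, $P^{ld}(n^-,p_v/2)$, $P^{ld}(n^i,p_d/(n-1))$ and $P_0^{ld}(n^0)$ each depend on a single cardinality, while $\delta_{i,-\rightarrow +}^{W}(V^+,V^-,V^i)\in\{0,1\}$ merely records whether the WVG outcome flips when $i$ switches its vote and its delegators $V^i$ follow; in a WVG this is a function of $w^+, w^-, w^i$ together with the fixed quantities $w(i)$ and $q$.

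Given this, I would call a tuple $\tau = (n^+,n^-,n^i,w^+,w^-,w^i)$ a \emph{type}, noting that it also fixes $n^0 = (n-1)-n^+-n^--n^i$ and $w^0 = w(V\setminus\{i\})-w^+-w^--w^i$, so every summand of Proposition~\ref{def : Mdtee} is constant on the four-partitions of a given type. The LD measure then rewrites as $\M{ld}_i(W) = \sum_\tau N(\tau)\, f(\tau)$, where $f(\tau)$ is the common summand value for type $\tau$ and $N(\tau)$ is the number of ordered four-partitions of $V\setminus\{i\}$ realising $\tau$. The multiplicities $N(\tau)$ are exactly what Lemma~\ref{lemma : PV pseudo} computes, with $c=4$ on the set $S = V\setminus\{i\}$; since there are only pseudo-polynomially many types to query (and the dynamic program underlying the lemma in fact produces the whole table of counts in one pass), obtaining all the $N(\tau)$ stays pseudo-polynomial. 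The values $f(\tau)$ are cheap to assemble: the probabilities $P^{ld}(m,p_v/2)$ and $P^{ld}(m,p_d/(n-1))$ for all $m\le n-1$ are precomputed in $O(n^2)$ time from the recursion~(\ref{eq : rec ld})---note that only these two values of $p$ ever arise in that recursion---and $P_0^{ld}(n^0)$ and $\delta_{i,-\rightarrow +}^{W}$ are then evaluated in constant time per type.

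For the running time, there are $O(n)$ choices for each of $n^+,n^-,n^i$ and $O(w(V))$ choices for each of the integer weights $w^+,w^-,w^i$, so the number of types is $O(n^3\, w(V)^3)$, which is pseudo-polynomial; summing $N(\tau)f(\tau)$ over all types, after the pseudo-polynomial application of Lemma~\ref{lemma : PV pseudo} and the $O(n^2)$ precomputation, therefore stays within pseudo-polynomial time, giving the claim. The main obstacle to make rigorous is the first step: verifying that the summand genuinely factors through the type, in particular that the criticality indicator $\delta_{i,-\rightarrow +}^{W}$ is a function of the three cell weights alone---this is where the WVG structure and the fact that $V^i$ follows $i$'s vote are essential---and checking that the cell $V^0$ contributes only through $n^0$ and never through $w^0$, so that $w^0$ (being determined by the total weight) need not be tracked as an independent parameter when invoking the counting lemma.
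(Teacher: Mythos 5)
Your proposal is correct and follows essentially the same route as the paper's proof: enumerate the pseudo-polynomially many tuples $(n^+,n^-,n^i,w^+,w^-,w^i)$ (the paper's ``decompositions,'' which it writes as 8-tuples with $n^0$ and $w^0$ included but likewise determined), count the realising four-partitions of $V\setminus\{i\}$ via Lemma~\ref{lemma : PV pseudo} with $c=4$, and sum $\lambda$ times the probability factor over the types where $i$ is critical. Your added checks---that $\delta_{i,-\rightarrow +}^{W}$ depends only on $w^+,w^-,w^i,w(i),q$ (equivalent to the paper's interval condition, since $w(V)-w^0=w^++w^-+w^i+w(i)$) and that the recursion~(\ref{eq : rec ld}) only ever invokes the two parameters $p_v/2$ and $p_d/(n-1)$, giving an $O(n^2)$ precomputation---are details the paper leaves implicit, not a different argument.
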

The idea of this result is as follows. 
We can compute the number of ways $\lambda$ of having a partition $(S^1,S^2,S^3,S^4)$  in $\mathcal{P}_4(V{\setminus}\{i\})$ with sizes $n^+$, $n^-$, $n^0$, $n^i$, and weights $w^+$, $w^-$, $w^0$, and $w^i$ using Lemma~\ref{lemma : PV pseudo}, and may compute the product $\lambda \times P^{ld}(n^+,\frac{p_v}{2})P^{ld}(n^-,\frac{p_v}{2})P^{ld}(n^i,\frac{p_d}{n-1})P_0^{ld}(n^0)$.  
The result is the sum of these terms for the different tuples $(n^+, n^-, n^0, n^i, w^+, w^-, w^0, w^i)$  for which $i$ is critical.
The number of tuples to be considered is bounded by $n^3\times w(V)^3$.

\section{Experiments}
We performed numerical tests on our power measure to test the impact and relationships between different parameters. 
For each experiment, we estimate the criticality of voters by sampling over delegation-partitions due to the long runtimes required for  exact calculations. 
Details about the sampling and the confidence interval it entails can be found in Appendix~\ref{subsec:omittedexperiments}, as well as additional experimental results. 

We first computed the criticality of voters with a variety of underlying networks.\footnote{
The specifics of the different networks and the figures for these experiments can be found in Appendix \ref{subsec:omittedexperiments}.} 
First, we observed a strong correlation between the voters' criticality and their in-degree in the network. This follows the intuition that the higher the in-degree of a voter, the higher the number of voters that can delegate to them. 
Second, we noticed that the type of the underlying network had a large impact on the differences between the voters' criticality. In particular, inequality in voting power was the largest on  preferential attachment networks \citep{barabasialbert} and the smallest on small-world networks \citep{watts1998collective}. 

In the remainder of this section, we focus on our two special cases, bipartite digraphs  and complete digraphs.

\subsection*{The number of delegators in proxy voting}
In the experiments with proxy voting, we study the case when all voters have the same voting weight and delegators can delegate to any delegatee, as in Section~\ref{sec:bipartite}. 
Note that within either $V_v$ or $V_d$ that all voters have the same voting power. We inspect the effect of $p_d$ in the PV setting, i.e.,  does the probability of those in $V_d$ delegating affect the probability of being critical for both those in $V_v$ and $V_d$. We set $|V|=100$ and look at two different number of delegatees, $|V_v| \in \{20,50\}$.

\begin{figure}[t]
    \centering
    \scalebox{1}{
\begin{tikzpicture}
\begin{axis}[
legend cell align={left},
legend style={
  fill opacity=0.8,
  draw opacity=1,
  text opacity=1,
  at={(0.03,0.97)},
  anchor=north west,
  draw=white!80!black,
  font=\tiny,
  legend columns=2,
},
tick align=outside,
tick pos=left,
x grid style={white!69.0196078431373!black},
xlabel={\(\displaystyle p_d\)},
xmin=0, xmax=1,
xtick style={color=black},
y grid style={white!69.0196078431373!black},
ylabel={$\mathcal{M}^{ld}_i(W)$},
ymin=0, ymax=0.2,
ytick style={color=black},
ytick={0,0.05,0.1,0.15,0.2, 0.25},
yticklabels={0,0.05,0.1,0.15,0.2, 0.25},
width=8cm, height=6cm
]
\addplot [semithick, black, mark=*, mark size=3, mark options={solid}]
table {%
0 0.0797500000000001
0.1 0.102091999999999
0.2 0.120091999999995
0.3 0.132814999999992
0.4 0.141289999999989
0.5 0.148777999999988
0.6 0.156399499999987
0.7 0.158699499999989
0.8 0.161276999999988
0.9 0.165026499999988
1 0.16652999999999
};
\addlegendentry{
$i \in V_v$ and $|V_v| = 20$}
\addplot [semithick, black, dashed, mark=triangle*, mark size=3, mark options={solid,rotate=180}]
table {%
0 0.0798504999999995
0.1 0.0725233749999995
0.2 0.066819374999999
0.3 0.060272999999999
0.4 0.0538598749999992
0.5 0.048609249999999
0.6 0.0462999999999991
0.7 0.0406483749999996
0.8 0.0380703749999996
0.9 0.0356959999999996
1 0.0330548749999996
};
\addlegendentry{
$i \in V_d$ and $|V_v| = 20$}
\addplot [semithick, black, mark=square*, mark size=3, mark options={solid}]
table {%
0 0.0797714000000005
0.1 0.0822498000000148
0.2 0.0864656000000174
0.3 0.0891208000000205
0.4 0.0913792000000215
0.5 0.0946040000000209
0.6 0.0954518000000203
0.7 0.0973218000000249
0.8 0.0986534000000266
0.9 0.100462400000026
1 0.101405600000024
};
\addlegendentry{
$i \in V_v$ and $|V_v| = 50$}
\addplot [semithick, black, dashed, mark=asterisk, mark size=3, mark options={solid}]
table {%
0 0.0798460000000009
0.1 0.0748192000000008
0.2 0.0720862000000006
0.3 0.0686316000000008
0.4 0.0653080000000008
0.5 0.063323600000001
0.6 0.0593060000000005
0.7 0.0573572000000005
0.8 0.0544430000000005
0.9 0.0531920000000002
1 0.0511182000000001
};
\addlegendentry{
$i \in V_d$ and $|V_v| = 50$}
\end{axis}

\end{tikzpicture}}
    \caption{\small The probability of an agent being critical in the PV setting with $p_d$ varying from $0$ to $1$.
    We have $|V| = 100$, $|V_v| \in\{20,50\}$, and $W$ is a WVG with all weights equal to $1$ and $q=0.5$. This experiment sampled over 100,000 delegations partitions.
    }
    \label{fig:varypd}
\end{figure}


In Figure~\ref{fig:varypd}, when $p_d=0$, we obtain the standard voting model where all agents vote directly and thus have the same chance of being critical. In both instances, as $p_d$ increases, so does the delegatees' probability of being critical, yet the probability of the delegators being critical decreases, reflecting the intuition that there is some transfer of power from the delegators to the delegatees when $p_d$ increases. 
Observe that the difference between the criticality of the delegators and delegatees is smaller when $|V_v|=50$ than when $|V_v|=20$ for every value of $p_d$, as a higher number of delegatees share a lower number of delegators. Thus in the PV setting, increasing $|V_v|$ will flatten the probability of being critical.

\subsection*{The effect of the voters' weights}

We study the impact of $p_d$ in the LD model where the underlying network is complete as in Section~\ref{sec:complete}. 
We have $|V| = 100$ voters, with $50$ voters (resp. $30$ and $20$) having weight $1$ (resp. $2$ and $5$). The quota of the WVG remains $q=0.5$. We vary $p_d$ between $0$ and $0.9$. 
In the case $p_d = 1$, all voters delegate to each other, and thus they all have a criticality of $1$. In Figure~\ref{fig:ldcomplete}, we see that voters with higher weights have higher voting power. 
We observe a flattening effect: the initial gap between the criticality of agents with different weights gets increasingly smaller when $p_d$ increases. As in Table~\ref{tab:EXpowerindiciesLD}, the criticality of voters with smaller weights always increases while it is not the case for voters with weight $5$.
\begin{figure}[t]
    \scalebox{1}{
\begin{tikzpicture}

\begin{axis}[
legend cell align={left},
legend style={fill opacity=0.8, draw opacity=1, text opacity=1, draw=white!80!black,
  font=\tiny,},
tick align=outside,
tick pos=left,
x grid style={white!69.0196078431373!black},
xlabel={\(\displaystyle p_d\)},
xmin=0, xmax=0.9,
xtick style={color=black},
y grid style={white!69.0196078431373!black},
ylabel={$\mathcal{M}^{ld}_i(W)$},
ymin=0, ymax=0.2,
ytick style={color=black},
ytick={0,0.05,0.1,0.15,0.2, 0.25},
yticklabels={0,0.05,0.1,0.15,0.2, 0.25},
width=8cm, height=6cm
]
\addplot [semithick, black, mark=*, mark size=3, mark options={solid}]
table {%
0 0.02923
0.1 0.03469
0.2 0.039384
0.3 0.045096
0.4 0.050104
0.5 0.056514
0.6 0.062394
0.7 0.070756
0.8 0.084112
0.9 0.126202
};
\addlegendentry{
$i$ with $w(i)=1$}
\addplot [semithick, black, dash pattern=on 1pt off 3pt on 3pt off 3pt, mark=square*, mark size=3, mark options={solid}]
table {%
0 0.0613266666666667
0.1 0.0621
0.2 0.06661
0.3 0.06787
0.4 0.0716033333333333
0.5 0.0753933333333333
0.6 0.0760533333333333
0.7 0.0842
0.8 0.09297
0.9 0.13442
};
\addlegendentry{
$i$ with $w(i)=2$}
\addplot [semithick, black, dashed, mark=triangle*, mark size=3, mark options={solid,rotate=180}]
table {%
0 0.15674
0.1 0.14544
0.2 0.14712
0.3 0.14008
0.4 0.13608
0.5 0.13072
0.6 0.120895
0.7 0.12053
0.8 0.122295
0.9 0.1614
};
\addlegendentry{
$i$ with $w(i)=5$}
\end{axis}

\end{tikzpicture}}
	\caption{\small Probability of an agent being critical with $p_d$ varying from $0$ to $0.9$     when the underlying graph is complete. We have $V=100$, and $W$ is a WVG with $50$ (resp. $30$, $20$) voters with weights equal to $1$ (resp. $2$,  $5$) and $q=0.5$. This experiment sampled over 10,000 delegations partitions.
    \label{fig:ldcomplete}}
\end{figure}


\section{Conclusion}

This paper continues the tradition of extending the notion of a priori voting power to new voting models. We have introduced the \emph{LD Penrose-Banzhaf measure} to evaluate how critical voters are in deciding the outcome of an election where delegations play a key role. We study a general setting where an underlying graph restricts the possible delegations of the voters. 
We provided a hardness result on the computation of our measure of voting power. 
Nevertheless, we designed a sampling procedure to estimate them as well as two pseudo-polynomial algorithms that can be used when the graph restricting the delegations is either bipartite or complete.

Several directions are conceivable for future works. First, one could study the same models with more voting options, such as abstention. We have restricted ourselves to two voting options (approving or disapproving) to keep these new models simple. Another direction would be to find the conditions, such as adding or removing neighbours,  that affect the power measure.   Additionally, extending the Coleman indices, one could study how to differentiate in our setups the ability to support an initiative from the one to veto it. 
Lastly, analysing real-election data using our model is a promising option.

\newpage
\appendix

\section*{Acknowledgments}
Rachael Colley acknowledges the support of the ANR JCJC project SCONE (ANR 18-CE23-0009-01). Théo Delemazure was supported by the PRAIRIE 3IA Institute under grant ANR-19-P3IA-0001 (e). Hugo Gilbert acknowledges the support from the project THEMIS ANR-20-CE23-0018 of the French National Research Agency (ANR).


\bibliographystyle{named}
\bibliography{ijcai23}
\balance

\newpage
\appendix
\clearpage
\appendix
\section{Appendix of Paper}

\subsection{Omitted Proofs} \label{subsec:omitted}

\theoremLDMeasure*
\begin{proof}[Proof sketch.]
 $\M{ld}_i(W,G)=\mathbbm{P}(\text{i is critical})$  as it sums the probability of a delegation-partition being critical, determined by $W( T_{D^+_i}) - W( T_{D^-_i})$. 
 Being positively critical means that changing a vote for to against the issue will also change the outcome in the same way (negatively critical is defined similarly). If $W$ is symmetric then $W(T_{D^+_i}) - W( T_{D^-_i})$ captures both $i$ being positively or negatively critical. If $\OutN{i}=\emptyset$, then their only option is to vote either for or against the issue. In both cases, $\mathbbm{P}(\text{i is positively critical}) = \M{\gamma}_i(W)/2$  $=\mathbbm{P}(\text{i is negatively critical})$. 
\end{proof}

\theoremHardness*
\begin{proof} 
To prove this, we will give a reduction from the problem of counting simple paths in a digraph which is known to be \#P-complete \citep{valiant1979complexity}.  
The problem takes as input a directed graph $G = (V,E)$ and nodes $s,t \in V$, the problem then returns the number of simple paths from $s$ to $t$ in $G$.

Let $G = (V,E)$ be a directed graph and $s,t \in V$ two connected vertices of this graph. We want to count simple paths starting at $s$ and finishing at $t$. In order for our problems to align, we will make some alterations to $G$. We remove every incoming edge of $s$ and every outgoing edge of $t$ from $E$ and we add a dummy vertex $z$ to $V$ with no outgoing or incoming edges.

Let $d_{max}$ be the highest out-degree of any vertex in $G$. Let $k \ge 0$ and $G_k = (V_k, E_k)$ be a graph obtained from $G$, such that $V_k=V\cup\{z\}\cup D_k$ with $|D_k|=d_{max}+k$ a set of dummy vertices. For every vertex $x \in V \setminus \{t,z\}$, we add edges from $x$ to $d_{max}+k - \OutN{x}$ dummy vertices from $D_k$ to get $E_k$. By doing this, we ensure that every vertex in $V_k$ either have $d_{max}+k$ or $0$ out-neighbours. Note that we have only added a polynomial number of vertices and edges to $G$ to obtain $G_k$, if $k$ is smaller than some polynomial in $n$. 

Note that  the maximum length of a simple path from $s$ to $t$ in $G$ is less than the total number $n$ of vertices in $G$. 
Now, we will compute the criticality of $t$ in the graph $G_k$ with $k = 0, \dots,n$ for the WVG in which $w(s) = 3|V_k|$, $w(z) = |V_k|$ and $w(x) = 1$ for all $x \in V_k\backslash\{s,z\}$ and $q = \frac{1}{2}$. 

Observe that when $s$ is not in a cycle, then $s$ is in some sense a dictator as $w(s)> q\sum_{x\in V_k}w(x)= \frac{1}{2}(5|V_k|-2)$ and the election's outcome is determined by their vote. However, if $s$ is in a cycle, then the outcome of the election depends only on $z$'s vote, $w(z)>  q\sum_{x\in V_k\backslash\{s\}}w(x)=\frac{1}{2}(2|V_k|-2)$ (furthermore, note that $z$ cannot be in a cycle as they have no outgoing edges). 
Thus, a vertex $x \in V_k\backslash\{s,z\}$ is critical given a delegation partition iff $x$ is part of $s$'s delegation path. In particular, this is true for node $t$.

Let $P$ be a delegation path from $s$ to $t$ in $G_k$. Each vertex on this path has $d_{max}+k$ outgoing edge by definition (otherwise they would have $0$ outgoing edges and could not be on this path). Thus, the probability of obtaining this path of length $\ell$ is equal to $\left ( p_d^{d_{max}+k}\frac{1}{d_{max}+k} \right )^\ell$ under the individual uniformity assumption, where the probability to delegate $p_d^{d_{max}+k}$ is the same for all voters as they all have the same number of outneighbors. We define $\pi_k =p_d^{d_{max}+k}\frac{1}{d_{max}+k}$ and  
let $\mathcal P_l$ be the set of paths of length $\ell$ between $s$ and $t$ in $G$, for $\ell \le n$. We know that $t$ is critical iff there is a path from $s$ to $t$ on the delegation graph, and that there cannot be two different paths from $s$ to $t$ in the same delegation graph. Thus, the criticality of $t$ is equal to the sum of the probabilities of every path from $s$ to $t$ to be selected in the delegation graph:

\begin{align}\label{eq:compproof} 
     \mathbbm{P}(t \text{ is critical in }G_k) = \sum_{\ell= 0}^{n} |\mathcal P_\ell|\pi_k^\ell.
\end{align}

Thus, from Equation~\ref{eq:compproof} we obtain $n+1$ linear equations (one from each election, i.e., each value of $k$) where the variables are $|\mathcal{P}_{\ell}|$. This gives us a set of $n+1$ equations, which is polynomial in the number of vertices. 

The coefficient matrix $M$ of these equations is a Vandermonde matrix with $M_{ij} = \pi_i^j$ for $i,j = 0, \dots, n$ and all the $\pi_i$ values are different. Thus, if we denote the two vectors $X^P = (|P_0|, \dots, |P_{n}|)$ and $Y^c$ such that $Y^c_k = \mathbbm{P}(t \text{ is critical in }G_k)$, we have the equation $Y^c = MX^P$. 
As $M$ is a Vandermonde matrix with different coefficients, it is easily invertible and $X^P$ could be computed in polynomial time. 
Hence, we would obtain the values of $|P_1|, \dots, |P_n|$ and we could easily derive the total number of paths from $s$ to $t$ in $G$ by doing $\sum_{\ell = 0}^{n} |P_{\ell}|$. This would solve the \#P-complete problem of counting simple paths from $s$ to $t$. 
\end{proof}

\lemmaPVpseudo*
\begin{proof}
We first order (arbitrarily) the voters in $S$ such that $v_i$ is the $i^{th}$ voter in $S$. We will denote by $S[i]$ the subset $\{v_i, v_{i+1}, \ldots{,} v_{|S|}\}$ of $S$.
The proof then relies on the following recursive formula:
\begin{align}
\Lambda[i&,n_1,\ldots,n_{c-1},w_1,\ldots,w_{c-1}] = \notag\\ &\Lambda[i\!+\!1,n_1\!-\!1,n_2,\ldots,n_{c-1},w_1\!-\!w(v_i),w_2,\ldots,w_{c-1}] \label{eq : rec1}\\
&\!+\! \Lambda[i\!+\!1,n_1,n_2\!-\! 1,\ldots,n_{c{-}1},w_1,w_2\!-\!w(v_i),\ldots, w_{c-1}]\label{eq : rec2}\\
&\vdots\notag\\
&\!+\! \Lambda[i\!+\!1,n_1,n_2,\ldots,n_{c-1}\!-\!1,w_1,w_2,\ldots, w_{c-1}\!-\!w(v_i)]\notag\\
&\!+\!\Lambda[i\!+\!1,n_1,n_2,\ldots,n_{c-1},w_1,w_2,\ldots,w_{c-1}]\label{eq : rec4}.
\end{align}
The term $\Lambda[i,n_1,n_2,\ldots,n_{c-1},w_1,w_2,\ldots,w_{c-1}]$ denotes the number of ways of having a group of $n_1$ voters in $S[i]$ with total weight $w_1$, and $n_2$ other voters in $S[i]$ having total weight $w_2$, \ldots, and $n_{c-1}$ other voters in $S[i]$ having total weight $w_{c-1}$.  
The number of ways of having a partition $(S_1,S_2,\ldots,S_c)$  in $\mathcal{P}_c(S)$ with sizes $n_1$, $n_2$, \ldots, $n_c = |S| - \sum_{l=1}^{c-1}n_l$, and weights $w(S_1)= w_1$, $w(S_2) = w_2$, $\ldots$, and $w(S_c) = w_c =w(S) - \sum_{l=1}^{c-1} w_l$ is then $\Lambda[1,n_1,n_2,\ldots,n_{c-1},w_1,w_2,\ldots, w_{c-1}]$.  
Let us now explain the recursive formula: the term on line~\ref{eq : rec1} (resp. line~\ref{eq : rec2},  line~\ref{eq : rec4}) counts the number of such partitions of $S$ when $v_i$ is part of the first group of $n_1$ voters (resp. second group of $n_2$ voters, neither of the $c-1$ first groups, hence the last group). 
The base cases are as follows: $\Lambda[i,n_1,\ldots,n_{c-1},w_1,\ldots,w_{c-1}] = 0$ if at least one of the parameters is inferior to 0; $\Lambda[i,0,0,\ldots,0] = 1$ for $i\in\{1,\ldots, |S|+1\}$; and $\Lambda[|S|+1,n_1,\ldots,n_{c-1},w_1,\ldots,w_{c-1}] = 0$ if at least one of the $2(c-1)$ last parameters is different from 0. 
For a fixed $c$ value, using memoization, this recursive formula can be computed in polynomial time with respect to $n$ and $\max_v w(v)$. 
\end{proof}

\theoremPVPseudo*
\begin{proof}
We give the details of the more complex case when $i\in V_v$.
We consider all the possibilities of having a two-partition of $V_v\setminus\{i\}$ with sets of sizes $n^+_v$ and $n^-_v$ and weights $w^+_v$, and $w^-_v$ in conjunction with a three partition of $V_d$ with sets of sizes $n^+_d$, $n^-_d$, $n^i$, and weights $w^+_d$, $w^-_d$, and $w^i$. 
Such a tuple (with 10 elements) will be called a \emph{decomposition} of $V$ informally.
The number of such decompositions is bounded by $n_v\times n_d^2\times w(V)^3$. 
Given a decomposition, we say that $i$ is critical if $w^+_v + w^+_d$ is in the interval $(q\times w(V) - w(i) - w^i, q\times w(V)]$. 

On the one hand, we compute the number of ways $\lambda_1$ of having a partition $(S^1,S^2,S^3)$  in $\mathcal{P}_3(V_d)$ with sizes $n^+_d$, $n^-_d$, $n^i$, and weights $w^+_d$, $w^-_d$, and $w^i$.  
On the other hand, we compute the number of ways $\lambda_2$ of having a partition $(S^1,S^2)$  in $\mathcal{P}_2(V_v\setminus\{i\})$ with sizes $n^+_v$, $n^-_v$, and weights $w^+_v$, and $w^-_v$. 
Both operations are performed using Lemma~\ref{lemma : PV pseudo} in pseudo-polynomial time. Last, we compute the product $\lambda_1\times \lambda_2 \times P_v(n_v^+, n_d^+, n_d^-)$.  
The result is the sum of these terms for the different possible decompositions for which $i$ is critical. 
\end{proof}

\theoremLDPseudo*
\begin{proof}
Let $w_{\max} = \max_v w(v)$. 
 We consider all the possibilities of having a four partition of $V\setminus\{i\}$ with sets of sizes $n^+$, $n^-$, $n^0$, and $n^i$ and weights $w^+$, $w^-$, $w^0$, $w^i$. 
 Such a tuple (with 8 elements) will be called a \emph{decomposition} of $V$ informally.
 The number of such decompositions is bounded by $n^3\times w(V)^3$. 
 Given a decomposition, we say that $i$ is critical if $w^+$ is in $(q\times (w(V)-w^0) - w(i) - w^i, q\times (w(V)-w^0)]$. 

 Then, we compute the number of ways $\lambda$ of having a partition $(S^1,S^2,S^3,S^4)$  in $\mathcal{P}_4(V{\setminus}\{i\})$ with sizes $n^+$, $n^-$, $n^0$, $n^i$, and weights $w^+$, $w^-$, $w^0$, and $w^i$ using Lemma~\ref{lemma : PV pseudo}. 
 Last, we compute the product $\lambda \times P^{ld}(n^+,\frac{p_v}{2})P^{ld}(n^-,\frac{p_v}{2})P^{ld}(n^i,\frac{p_d}{n-1})P_0^{ld}(n^0)$. 
 The result is the sum of these terms for the different possible decompositions for which $i$ is critical. 
\end{proof}

\subsection{The Delegative Banzhaf measure as a measure of I-power.} \label{subsec:zhangGrossi}
To differentiate our measures from the delegative Banzhaf measure defined by \citet{zhang2021power}, we now provide a probabilistic model on voters' behaviours to rationalize it as a measure of I-power. 

Similarly, as in WVGs, the authors study elections in which each voter has a voting weight, and a coalition wins if and only if its ``accumulated weight'' exceeds the quota. 
Additionally, we are given an in-forest where vertices are voters and an arc from $i$ to $j$ represents that $i$ delegates to $j$. 
Thus, each voter in an in-forest has a direct voter (the corresponding root voter) via a chain of delegations who will vote on their behalf. 
Unlike in standard WVGs, a coalition's accumulated weight is not considered as the sum of its members' weights but only those for which the delegation chain to their direct voter is included in the coalition. Hence, a voter whose chain of delegation requires a voter outside of the coalition to reach their root voter will not contribute to the weight of the coalition. 
The delegative Banzhaf measure is then defined as the standard Banzhaf-Penrose measure in this voting game. 

This measure can be rationalized by the following probabilistic model on voters' behaviours. 
Each voter has a probability $0.5$ of having a positive a priori opinion about the proposal and a probability $0.5$ of having a negative opinion about the proposal (these probabilities being independent); this leads to a uniform probability distribution on bipartitions of the voters set. Given a bipartition of opinions and an in-forest,  we define a direct-voting partition in the following recursive way: each voter votes in favour of the proposal if i) they have a positive opinion about the proposal, and ii) either they vote directly or the person they directly delegated to in the in-forest also votes in favour of the proposal. If one of these conditions is not met, the voter will vote against the proposal. 
Put differently, the in-forest provided in input gives each voter a condition for their support. 
The voting rule used is then the same as in standard WVGs. Given this probabilistic model, the delegative Banzhaf measure is the probability of being critical. 

As one can see, the delegative Banzhaf measure differs from the power measures we define as the input to define them and the intuitions underlying them both differ.

\subsection{Connection between WVGs with randomised weights and our proxy voting settings}\label{appendix:randomWVGs} 

Weighted voting games in our proxy setting, i.e., when the underlying network is a bipartite digraph, can be seen as a special case of weighted voting games with random weights where the weights are obtained through our probabilistic model on delegations. 

However, in this section, we will stress some differences between the settings. 
First, note that the set of voters voting directly is also randomised as members of $V_d$ may or may not vote. Moreover, the random processes generating the weights for the delegatees are specific to our proxy settings. 
These weights are constrained by the weights of the delegators and our probabilistic delegation model. 
\citet{boratyn2020average} claim to be the first to study the Banzhaf index with random weights. 
In their setting, the voters' weights are drawn uniformly from a probability simplex, providing a continuous distribution from which to draw their (normalised) weights. 
This entails that the discrete nature of the delegators' weights cannot be adequately modeled within this framework as there is no way to ensure that the weights drawn correspond to an actual coalition (for example, that some delegator's weight isn't being split across two delegatees). 
Instead of using the uniform distribution, one could try to use the probability distribution on the weight simplex yielded by our proxy voting models. Unfortunately, note that this probability distribution would have a support that at worst is of exponential size. Hence, it seems difficult to express this probability distribution compactly while ignoring the delegation process that generates it. 

In other papers that study the Shapley-Shubik index~\citep{filmus2016shapley,bachrach2016characterization}, the authors consider weighted-voting games where voters' weights are drawn independently from one another under some probability distribution. Once again, because of this independence assumption, these models do not seem to encapsulate our proxy voting settings. 

\subsection{An alternative proxy voting model}\label{appendix:PValpha}
In this section, we explore an alternative proxy voting model in which $V$ is partitioned in two sets $V_v$ of size $n_v$ and $V_d$ of size $n_d = n - n_v$. Similarly to the PV model discussed in our paper, the set $V_v$ contains delegatees who will vote. Differently, voters in $V_d$ can only delegate their votes to proxies in $V_v$. Hence, this assumption restricts the class of admissible delegation partitions. 
We term this restricted setting of proxy voting PV$_r$.  
\begin{definition}
Given a non-empty set $V_v\subseteq V$, a \emph{PV$_r$-partition} $P$ is a map $D$ who takes value in $V$ and for which $D(v) \in \{-1,1\}$ if $v\in V_v$, and $D(v) \in V_v$ otherwise. 
\end{definition}

We now provide an example to illustrate the differences between our two PV settings.
\begin{example} \label{example1App}
\begin{figure}
    \centering
\begin{tikzpicture}[node distance=0.25cm,semithick]

\node at  (0.5,1) {$d_2$};
\node at  (1.5,1) {$e_1$};
\node at  (2.5,1) {$f_1$};
\node at  (3.5,1) {$g_1$};
\node at  (4.5,1) {$h_1^-$};
\node at  (5.5,1) {$i_1^+$};
\node at  (0.5,2) {$a^-_3$};
\node at  (1.5,2) {$b^-_2$};
\node at  (2.5,2) {$c^+_1$};

\node at  (3.6,2) {\small PV$_r$};
\node at  (5.6,2) {\small PV};

\draw[-stealth, thick] 
(0.5, 1.2) edge  (0.5,1.8)
(1.5, 1.2) edge  (1.5,1.8)
(2.5, 1.2) edge  (2.5,1.8)
(3.5, 1.2) edge  (2.52,1.8);

 \draw[rounded corners, dotted] (-0.1,2.2) rectangle (4, 0.6);
 \draw[rounded corners, dashed] (-0.2,2.3) rectangle (6, 0.5);
\end{tikzpicture} 
    \caption{This figure depicts the agents' delegations in Example~\ref{example1App} such that the subscript gives the agent's weight and the superscript of $\boldsymbol{+}$ or $\boldsymbol{-}$ represents the direct vote of the non-delegating agents. When restricting the agents to be those inside the dotted, and dashed lines, we obtain a valid PV$_{r}$, and PV-partition, respectively. }
    \label{fig:introexample2}
\end{figure}
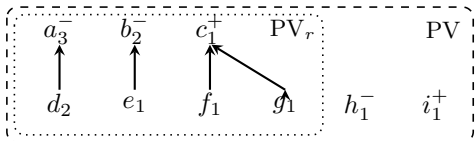

Consider a set of agents $V=\{a,b,\cdots, i\}$ with $V_v=\{a,b,c\}$ such that $D^+=\{c,i\}$, $D^-=\{a,b,h\}$, $D^a=\{d\}$, $D^b=\{e\}$, and $D^c=\{f,g\}$ as described in Figure~\ref{fig:introexample2}. This delegation partition is valid for the PV model described in our paper where $G = (V,E)$ is a bipartite graph, in which $V = (V_d,V_v)$ and $E = \{(i,j)|i\in V_d, i\in V_v\}$. 
However, while this example, restricted to voters $\{a,b,\ldots,g\}$ form a valid \emph{PV$_r$-partition}, it is not the case when considering voters $h$ and $i$ as they vote directly and yet belong to $V_d$.
\end{example}

We now revisit the individual uniformity and global uniformity assumptions under the lens of this model. 

\paragraph{The individual uniformity assumption}
Under the individual uniformity assumptions, each voter in $V_v$ has equal probability of supporting or opposing the proposal. Additionaly, each voter in $V_d$ has equal probability of delegating to any voter in $V_v$. Voters behaviors are independent from one another. These assumptions are justified by the fact that we have no information as to voters' personality and interests, or the nature of the proposal and we are ignorant of any concurrence or opposition of interests between voters. Hence, we assume that $p_y = p_n = \nicefrac{1}{2}$ for voters in $V_v$ and that voters in $V_d$ may delegate to any member of $V_v$ with probability $\nicefrac{1}{n_v}$. 

\paragraph{The global uniformity assumption}
If there is no information about the proposal or voters, we assume that all PV$_r$ partitions are equally likely. 
We call this assumption the \emph{global uniformity assumption}, as it does not rely directly on a model of individual behaviours. 
Hence, the probability of a given partition being chosen is $(\nicefrac{1}{2})^{n_v}\times(\nicefrac{1}{n_v})^{n_d}$. 

\paragraph{Remark.} Note that the \emph{individual} and \emph{global} uniformity assumptions lead to the same probabilistic model under the the PV$_r$ setting.

\paragraph{A measure of voting power for the $PV_r$ setting}
We want to measure how critical a voter is in determining the outcome. Given our probabilistic model on PV$_r$ partitions, we consider the number of times a voter can change the outcome decided by a binary voting rule.
\begin{itemize}
   \item If a voter belongs to $V_v$, then we  look if the agent can impact the outcome by changing their vote, either to be against the proposal (positive criticality) or in favour of the proposal (negative criticality). 
   \item If a voter belongs to $V_d$, then we look if an agent has impact by changing their delegation (if possible), either by switching from a delegatee in favour to one against the proposal (positive criticality) or from a delegatee against to one in favour of the proposal (negative criticality). 
\end{itemize}
Given a PV$_r$-partition, the agent is critical if they are positively or negatively critical. 

To measure how critical an agent $i$ is, we consider a partition of $V\backslash\{i\}$ into three sets $V^+$, $V^-$, $V^i$:
\begin{itemize}
    \item $V^+$ represents the $n^+$ voters whose final vote is in favour of the proposal, either by voting directly or indirectly by delegating to a delegatee in $V_v\backslash\{i\}$.
    \item $V^-$ represents the $n^-$ voters who vote directly against the proposal or indirectly by delegating to a delegatee in $V_v\backslash\{i\}$.
    \item $V^i$ is the set of $n^i$ voters who delegate to voter $i$. This set is empty if $i\in V_d$.  
\end{itemize}
Note that $V^+\cup V^- \cup V^i= V\backslash \{i\}$ and that these partitioning  sets are disjoint. 
We will focus on how these sets intersect $V_d$ and $V_v$. 
We define $V_d^+$, $V_d^-$, $V_v^+$, and $V_v^-$ with size $n^+_d$, $n^-_d$, $n^+_v$, and $n^-_v$, respectively, such that  $V_x^\circ = V_x \cap V^\circ$ for $x\in \{v,d\}$ and $\circ\in\{-,+\}$.  

Given our probabilistic model on proxy partitions, we observe that the probability of having a partition $V^+, V^-, V^i$ only depends on these cardinality values. 
\begin{itemize}
\item If $i \in V_v$, it only depends on $n^+_v$, $n^+_d$, and $n^-_d$, as $n^-_v = n_v - n^+_v -1$ and $n^i = n_d - n_d^+-n_d^-$. Hence, we then denote this probability by $P_{v}^{r}(n^+_v,n^+_d, n^-_d)$. 
\item If $i \in V_d$, it only depends on $n^+_v$, and $n^+_d$ as $n_v^+ + n_v^- = n_v$ and $n_d^- = n_d-1-n_d^+$. Hence, we then denote this probability by $P_d^{r}(n_v^+,n_d^+)$. 
\end{itemize}
The formulas to compute these probabilities are given below: 
\begin{align}
P_{v}^{r}(n^+_v,n^+_d, n^-_d) =&  \frac{(n^+_v)^{n^+_d}(n^-_v)^{n^-_d}}{2^{n_v-1}n_v^{n_d}} \label{eq:pdtee alpha},\\ 
P_d^{r}(n_v^+,n_d^+) =&   \frac{(n^+_v)^{n^+_d}(n^-_v)^{n^-_d}}{2^{n_v}n_v^{n_d-1}} \label{eq:pdtor alpha}. 
\end{align}
Note that there are some obvious conditions on the integer parameters $n_v^+$, $n_d^+$, and $n_d^-$. 
\begin{itemize}
\item If $i\in V_v$, it should be that $n^+_v \le n_v-1$, and $n^+_d + n^-_d \le n_d$. 
Moreover, values $n_d^+$ (resp. $n_d^-$) should be equal to 0 if $n^+_v = 0$ (resp. $n^+_v = n_v - 1$). 
\item If $i\in V_d$, we should now have $n^+_v \le n_v$, and $n^+_d + n^-_d = n_d - 1$. 
Once more, values $n_d^+$ (resp. $n_d^-$) should be equal to 0 if $n^+_v = 0$ (resp. $n^+_v = n_v$).
\end{itemize}

If some of these conditions are not respected, we set that $P_{v}^{r}(n^+_v,n^+_d, n^-_d) = 0$ or $P_d^{r}(n_v^+,n_d^+) = 0$.
Let us now detail Equation~\ref{eq:pdtee alpha} (Equation~\ref{eq:pdtor alpha} is obtained in a similar way). 
The probability of the binary votes of the delegatees other than $i$ being a certain way is $(\nicefrac{1}{2})^{n_v-1}$. 
Then, the probability that voters in $V^+_d$ (resp. $V^-_d$, $V^i$) delegate to the ones in $V^+_v$ (resp. $V^-_v$, $\{i\}$) is $(\nicefrac{n_v^+}{n_v})^{n_d^+}$ (resp. $(\nicefrac{n_v^-}{n_v})^{n_d^-}$, $(\nicefrac{1}{n_v})^{n^i}$ ). 
Equation~\ref{eq:pdtee alpha} is obtained by considering the products of these terms. 

Given the probability of having a partition $V^+$, $V^-$, $V^i$ of $V\setminus\{i\}$, the voting power measure for a voter $i\in V$ can be defined. 

\begin{definition} 
Given a set $V$ of voters, a set $V_v\subseteq V$ of delegatees, and a binary voting rule $W$, the \emph{PV$_r$} Penrose-Banzhaf measure $\M{r}_i(W)$ of voter $i\in V$ is defined as:
\begin{align*}
\M{r}_i(W) &= \!\!\!\!\!\!\! \sum_{\substack{ V^+_v, V^-_v \\ \in \mathcal{P}_2(V_v \setminus\{i\})}} \sum_{\substack{ V^+_d, V^-_d, V^i \\ \in \mathcal{P}_3(V_d)}} P_{v}^{r}(n^+_v,n^+_d, n^-_d)\\
&\delta_{i,-\rightarrow +}^{W}(V^+, V^-,V^i)  \text{ if } i \in V_v,\\
\M{r}_i(W) &= \!\!\!\!\!\!\! \sum_{\substack{ V^+_v{\neq} \emptyset, V^-_v{\neq} \emptyset \\ \in \mathcal{P}_2(V_v)}} \sum_{\substack{ V^+_d, V^-_d \\ \in \mathcal{P}_2(V_d \setminus\{i\})}} P_{d}^{r}(n^+_v,n^+_d)\\
&\delta_{i,-\rightarrow +}^{W}(V^+, V^-,\emptyset)\text{ if } i \in V_d.
\end{align*}
\end{definition}
Note that when $i \in V_d$, we consider only partitions for which $V^+_v\neq \emptyset$ and $V^-_v\neq \emptyset$. 
Indeed, those in $V_d$ are only able to delegate to delegatees. 
Hence, their ability to be critical depends on the fact that two delegatees with opposite votes exist. Note that the probability that all delegatees vote in the same way is $(\nicefrac{1}{2})^{n_v-1}$.

$\M{r}_i$ quantifies the probability to sample a PV$_r$-partition where $i$ is able to alter the election's outcome. Additionally, observe that our voting power measures extend the standard Penrose-Banzhaf measure (consider $V_v = V$) and that they are not normalized (i.e., summing over the agents does not yield 1). The corresponding voting power indices can be found by normalizing over voters.

\paragraph{Computational aspects}  
We now turn to some computational aspects regarding the PV$_{r}$ measure of voting power. 
While the exact computation of this measure is $\#$P-hard as it extends the standard Banzhaf measure, we show that in WVGs, measures $\M{r}$ can be computed in pseudo-polynomial time. 
This result relies on Lemma~\ref{lemma : PV pseudo}.

\begin{theorem}
Given a WVG with weight function $w$ and quota-ratio $q$, a set of voters $V_v\subseteq V$, and a voter $i$, measure $\M{r}_i$ can be computed in pseudo-polynomial time.
\end{theorem}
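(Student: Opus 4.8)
The plan is to mirror the structure of the already-proved pseudo-polynomial results for the PV and complete-graph settings (Theorem~\ref{theoremPVPseudo} and Theorem~\ref{theoremLDPseudo}), replacing the delegation probabilities by the closed-form expressions $P_v^r$ and $P_d^r$ from Equations~\ref{eq:pdtee alpha} and~\ref{eq:pdtor alpha}. The key observation is that, just as in the PV proof, the measure $\M{r}_i$ is a sum over \emph{decompositions} of $V$, where a decomposition records both the cardinalities $n_v^+, n_v^-, n_d^+, n_d^-, n^i$ and the corresponding total weights $w_v^+, w_v^-, w_d^+, w_d^-, w^i$ of the cells of the partition $V^+, V^-, V^i$ restricted to $V_v$ and $V_d$. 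Since $W$ is a WVG, the term $\delta_{i,-\rightarrow +}^W(V^+, V^-, V^i)$ depends only on these weights (via the criticality window), and the probability $P_v^r$ or $P_d^r$ depends only on the cardinalities; hence the entire summand depends only on the decomposition.

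First I would handle the harder case $i \in V_v$. I would enumerate all decompositions, where each is a tuple of the five cardinalities together with the five weights; the number of such tuples is bounded polynomially in $n$ and $w(V)$, namely by roughly $n^3 \times w(V)^3$ (the constraints $n_v^- = n_v - 1 - n_v^+$, $n^i = n_d - n_d^+ - n_d^-$, and $w_v^- = w(V_v\setminus\{i\}) - w_v^+$, etc., eliminate the redundant coordinates). For a fixed decomposition, I would determine whether $i$ is critical by checking whether $w_v^+ + w_d^+$ lies in the half-open interval $(\,q\cdot w(V) - w(i) - w^i,\ q\cdot w(V)\,]$, exactly as in the PV proof. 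The number of partitions of $V_d$ into the three cells $(V_d^+, V_d^-, V^i)$ with prescribed sizes and weights is computed by one application of Lemma~\ref{lemma : PV pseudo} with $c = 3$ on the set $V_d$, yielding $\lambda_1$; the number of partitions of $V_v\setminus\{i\}$ into two cells with prescribed sizes and weights is computed by Lemma~\ref{lemma : PV pseudo} with $c = 2$ on $V_v\setminus\{i\}$, yielding $\lambda_2$. Both invocations run in pseudo-polynomial time. The contribution of the decomposition is then $\lambda_1 \times \lambda_2 \times P_v^r(n_v^+, n_d^+, n_d^-)$, and I would sum these over all decompositions for which $i$ is critical.

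For the case $i \in V_d$, the voter has $V^i = \emptyset$, so a decomposition reduces to the four cardinalities $n_v^+, n_v^-, n_d^+, n_d^-$ and the four weights; I would again apply Lemma~\ref{lemma : PV pseudo} once on $V_v$ ($c=2$) and once on $V_d\setminus\{i\}$ ($c=2$), multiply the two counts by $P_d^r(n_v^+, n_d^+)$, and sum over critical decompositions. The one subtlety specific to PV$_r$ is that $i \in V_d$ can only be critical when both $V_v^+ \neq \emptyset$ and $V_v^- \neq \emptyset$ (a delegator changes the outcome only by reassigning its delegation between two delegatees of opposing final votes), so I would restrict the enumeration to decompositions with $n_v^+ \geq 1$ and $n_v^- \geq 1$; since $P_d^r$ already vanishes outside the admissible parameter ranges, this is a cheap additional filter. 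Summing pseudo-polynomially many summands, each computed in pseudo-polynomial time, gives $\M{r}_i$ in pseudo-polynomial time overall.

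The main obstacle is not any single deep step but getting the bookkeeping exactly right: I must confirm that the criticality test for a WVG genuinely factors through the weight coordinates of the decomposition (so that every partition with the same weight profile has the same $\delta$ value), and that the weight accounting in the criticality window matches the PV$_r$ semantics — in particular that $i$'s own weight $w(i)$ and, when $i \in V_v$, the inherited delegator weight $w^i$ are placed correctly relative to the quota $q\cdot w(V)$. Because there are no abstentions in the PV$_r$ setting ($T^0 = \emptyset$), the total non-abstaining weight is the full $w(V)$, which simplifies the window compared to the complete-graph case; once this is pinned down, the result follows from the two-or-three invocations of Lemma~\ref{lemma : PV pseudo} exactly as in the proof of Theorem~\ref{theoremPVPseudo}.
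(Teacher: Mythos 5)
Your proposal is correct and follows essentially the same route as the paper's proof: enumerate the $10$-element (cardinality, weight) decompositions, test criticality via the window $(q\cdot w(V)-w(i)-w^i,\ q\cdot w(V)]$, count matching partitions with Lemma~\ref{lemma : PV pseudo} (with $c=3$ on $V_d$ and $c=2$ on $V_v\setminus\{i\}$), and sum $\lambda_1\lambda_2\, P_v^{r}(n_v^+,n_d^+,n_d^-)$ over critical decompositions. The paper only details the $i\in V_v$ case; your explicit treatment of $i\in V_d$, including the restriction to $n_v^+\geq 1$ and $n_v^-\geq 1$ inherited from the definition of $\M{r}_i$, is a faithful completion of what it leaves implicit.
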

\begin{proof}
We give the details of the more complex case when $i\in V_v$.
We consider all the possibilities of having a two partition of $V_v\setminus\{i\}$ with sets of sizes $n^+_v$ and $n^-_v$ and weights $w^+_v$, and $w^-_v$ in conjunction with a three partition of $V_d$ with sets of sizes $n^+_d$, $n^-_d$, $n^i$, and weights $w^+_d$, $w^-_d$, and $w^i$. 
Such a tuple (with 10 elements) will be called a \emph{decomposition} of $V$ informally.
The number of such decompositions is bounded by $n_v\times n_d^2\times w(V)^3$. 
Given a decomposition, we say that $i$ is critical if $w^+_v + w^+_d$ is in the interval $(q\times w(V) - w(i) - w^i, q\times w(V)]$. 

On the one hand, we compute the number of ways $\lambda_1$ of having a partition $(S^1,S^2,S^3)$  in $\mathcal{P}_3(V_d)$ with sizes $n^+_d$, $n^-_d$, $n^i$, and weights $w^+_d$, $w^-_d$, and $w^i$. 
On the other hand, we compute the number of ways $\lambda_2$ of having a partition $(S^1,S^2)$  in $\mathcal{P}_2(V_v\setminus\{i\})$ with sizes $n^+_v$, $n^-_v$, and weights $w^+_v$, and $w^-_v$. 
Both operations are performed using Lemma~\ref{lemma : PV pseudo} in pseudo-polynomial time. Last, we compute the product $\lambda_1\times \lambda_2 \times P_v^{r}(n_v^+, n_d^+, n_d^-)$. 
The result is the sum of these terms for the different possible decompositions for which $i$ is critical. 
\end{proof}

\subsection{Additional details on the experiments} \label{subsec:omittedexperiments}
\paragraph{Sampling procedure.} We show that we can estimate the values $\M{ld}_i(W,G)$ by a sampling procedure which samples ``enough'' $G$-delegation partitions according to the individual uniformity assumption and counting the number of samples for which $i$ is critical. 
To determine how many $G$-delegation partitions should be sampled, we rely on the following well known inequality~\citep{hoeffding1994probability}. 

\begin{theorem}[Hoeffding's inequality] Let $X_1, \ldots, X_m$ be independent random variables, where all $X_i$ are bounded such that $X_i \in [a_i , b_i ]$, and let $X = \sum_{i=1}^m X_i$ . Then the following inequality holds.
$$\mathbbm{P}(|X-\mathbbm{E}[X]|\ge n\epsilon) \le 2\exp \left (-\frac{2m^2\epsilon^2}{\sum_{i=1}^m (b_i-a_i)^2} \right)$$
\end{theorem}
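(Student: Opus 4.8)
The plan is to prove the result by the standard Chernoff bounding technique together with Hoeffding's lemma on the moment generating function of a bounded random variable. Throughout, I would write $Y_i = X_i - \mathbbm{E}[X_i]$, so that each $Y_i$ has mean zero and takes values in an interval of length $\ell_i = b_i - a_i$, and set $Y = X - \mathbbm{E}[X] = \sum_{i=1}^m Y_i$. The strategy is to first establish the one-sided bound $\mathbbm{P}(Y \ge t) \le \exp(-2t^2/\sum_i \ell_i^2)$ and then recover the stated two-sided inequality with deviation $t = m\epsilon$ (the $n$ appearing in the statement plays the role of $m$).

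First I would apply the exponential Markov inequality: for every $s > 0$,
\[
\mathbbm{P}(Y \ge t) = \mathbbm{P}(e^{sY} \ge e^{st}) \le e^{-st}\,\mathbbm{E}[e^{sY}].
\]
Since the $X_i$, and hence the $Y_i$, are independent, the moment generating function factorises as $\mathbbm{E}[e^{sY}] = \prod_{i=1}^m \mathbbm{E}[e^{sY_i}]$, reducing the problem to controlling each factor separately.

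The heart of the argument is Hoeffding's lemma: for a mean-zero random variable $Y_i$ supported on an interval of length $\ell_i$, one has $\mathbbm{E}[e^{sY_i}] \le \exp(s^2\ell_i^2/8)$. I would establish this by studying the cumulant $\varphi_i(s) = \log \mathbbm{E}[e^{sY_i}]$, noting that $\varphi_i(0) = 0$ and $\varphi_i'(0) = \mathbbm{E}[Y_i] = 0$, and that $\varphi_i''(s)$ equals the variance of $Y_i$ under the exponentially tilted measure with density proportional to $e^{sY_i}$. This step is the main obstacle: since the tilted variable still lies in an interval of length $\ell_i$, a Popoviciu-type variance bound gives $\varphi_i''(s) \le \ell_i^2/4$ uniformly in $s$, and a second-order Taylor expansion then yields $\varphi_i(s) \le s^2\ell_i^2/8$. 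Making the uniform control of $\varphi_i''$ rigorous is the delicate part of the whole proof.

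Combining these two steps gives $\mathbbm{P}(Y \ge t) \le \exp(-st + \tfrac{s^2}{8}\sum_i \ell_i^2)$. I would then optimise over $s > 0$; the minimiser is $s^\ast = 4t/\sum_i \ell_i^2$, which produces $\mathbbm{P}(Y \ge t) \le \exp(-2t^2/\sum_i \ell_i^2)$. Finally, applying the identical argument to $-Y$ and taking a union bound doubles the right-hand side, and substituting $t = m\epsilon$ recovers the two-sided inequality as stated.
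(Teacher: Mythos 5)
Your proposal is correct, but note that the paper does not prove this statement at all: it is quoted as a known classical result with a citation to Hoeffding, used only as a tool for the sampling-estimation theorem that follows it. Your argument is the standard textbook proof and is sound in every step: the Chernoff bound $\mathbbm{P}(Y\ge t)\le e^{-st}\,\mathbbm{E}[e^{sY}]$, factorisation of the moment generating function by independence, Hoeffding's lemma via the cumulant $\varphi_i(s)=\log\mathbbm{E}[e^{sY_i}]$ with $\varphi_i(0)=\varphi_i'(0)=0$ and the tilted-measure variance bound $\varphi_i''(s)\le \ell_i^2/4$ (Popoviciu, since tilting does not enlarge the support), a Taylor expansion with Lagrange remainder giving $\mathbbm{E}[e^{sY_i}]\le e^{s^2\ell_i^2/8}$, and optimisation at $s^\ast=4t/\sum_i\ell_i^2$ yielding $\exp(-2t^2/\sum_i\ell_i^2)$; the union bound over $\pm Y$ then gives the factor $2$. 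You also correctly diagnosed the typo in the paper's statement, where the deviation threshold is written $n\epsilon$ although the number of variables is $m$: with $t=m\epsilon$ your bound matches the displayed right-hand side exactly, and indeed in the paper's application the two coincide since the samples are indexed by the same count. One small point worth making explicit if you write this up: the differentiation under the expectation needed for $\varphi_i'$ and $\varphi_i''$ is justified because $Y_i$ is bounded, so $e^{sY_i}$ is dominated by a constant on any compact $s$-interval; with that remark the ``delicate part'' you flag is fully rigorous.
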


Using this inequality, we can prove the following result.

\begin{theorem}[Estimation by sampling]
Given a graph $G=(V,E)$, a ternary voting rule $W$, values $\epsilon > 0$, and $\delta > 0$, let $\tilde{\mathcal{D}}$ be a set of $k \ge ln(2n/\delta)/(2\epsilon^2)$ $G$-delegation partitions sampled independently and uniformly. Then, 
\begin{align*}
\tilde{\M{ld}}_i(W,G) =  \frac{1}{k} \sum_{D\in \tilde{\mathcal{D}}} \frac{W( T_{D^+_i}) - W( T_{D^-_i})}{2}.
\end{align*}
belongs to $[\M{ld}_i(W,G)-\epsilon,\M{ld}_i(W,G)+\epsilon]$ with probability $1-\delta/n$. 
By using a union bound, the result holds for all voters with probability $1-\delta$.
\end{theorem}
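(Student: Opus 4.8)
The plan is to apply Hoeffding's inequality to the sampled criticality indicators and then conclude by a union bound over voters. First I would fix a voter $i$ and, for each of the $k$ independently sampled $G$-delegation partitions $D_1,\ldots,D_k$ in $\tilde{\mathcal{D}}$, define the random variables
\begin{align*}
Y_j = \frac{W(T_{D_j^+{}_i}) - W(T_{D_j^-{}_i})}{2}, \qquad j = 1,\ldots,k.
\end{align*}
Since each partition is drawn independently and from the same distribution (the individual uniformity assumption), the $Y_j$ are i.i.d. The two facts I would establish about them are: they are bounded, and their common expectation is exactly $\M{ld}_i(W,G)$.

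For boundedness, recall that $W$ takes values in $\{-1,1\}$ and that $T_{D_j^+{}_i} \geq T_{D_j^-{}_i}$ in the partial order on direct-vote partitions (changing $i$'s stance from opposing to supporting only raises $i$'s entry). Monotonicity of $W$ then gives $W(T_{D_j^+{}_i}) \geq W(T_{D_j^-{}_i})$, so each $Y_j$ lies in $\{0,1\} \subseteq [0,1]$; it is precisely the indicator that $i$ is critical in $D_j$. For the expectation, Theorem~\ref{thrm : LD Measure} identifies $\M{ld}_i(W,G)$ with $\mathbbm{P}(i\text{ is critical})$, and since $Y_j$ is the criticality indicator under the sampling distribution, $\mathbbm{E}[Y_j] = \M{ld}_i(W,G)$. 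Writing $X = \sum_{j=1}^k Y_j$, so that $\tilde{\M{ld}}_i(W,G) = X/k$ and $\mathbbm{E}[X] = k\,\M{ld}_i(W,G)$, I would apply Hoeffding's inequality with $m = k$ variables, each supported on $[a_j,b_j] = [0,1]$, whence $\sum_{j=1}^k (b_j - a_j)^2 = k$. This yields
\begin{align*}
\mathbbm{P}\big(|X - \mathbbm{E}[X]| \geq k\epsilon\big) \leq 2\exp\!\left(-\frac{2k^2\epsilon^2}{k}\right) = 2\exp(-2k\epsilon^2).
\end{align*}

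Dividing the event $|X - \mathbbm{E}[X]| \geq k\epsilon$ through by $k$ shows it coincides with $|\tilde{\M{ld}}_i(W,G) - \M{ld}_i(W,G)| \geq \epsilon$, so the right-hand bound controls exactly the deviation we care about. To force this failure probability below $\delta/n$, I would solve $2\exp(-2k\epsilon^2) \leq \delta/n$, which rearranges to $k \geq \ln(2n/\delta)/(2\epsilon^2)$ — precisely the stated hypothesis on the sample size. Hence, for a fixed voter, $\tilde{\M{ld}}_i(W,G) \in [\M{ld}_i(W,G)-\epsilon, \M{ld}_i(W,G)+\epsilon]$ with probability at least $1-\delta/n$. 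Finally, a union bound over the $n$ voters gives that the probability that some voter's estimate deviates by more than $\epsilon$ is at most $n \cdot (\delta/n) = \delta$, so all estimates are simultaneously within $\epsilon$ with probability at least $1-\delta$. This argument is essentially routine; the only point requiring care is the verification that each summand is a $\{0,1\}$-valued criticality indicator with the correct mean, which is where monotonicity of $W$ and Theorem~\ref{thrm : LD Measure} are invoked.
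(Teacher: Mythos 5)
Your proof is correct and follows essentially the same route as the paper's own argument: recognise each summand as a $\{0,1\}$-valued criticality indicator with mean $\M{ld}_i(W,G)$ (via Theorem~\ref{thrm : LD Measure}), apply Hoeffding's inequality to get failure probability $2\exp(-2k\epsilon^2)\le\delta/n$ under the stated sample size, and finish with a union bound over the $n$ voters. Your version is in fact slightly more careful than the paper's, since you justify the $\{0,1\}$ bound via monotonicity of $W$ (note only that raising $i$'s stance also raises the entries of voters delegating transitively to $i$, not just $i$'s own entry, but all entries move upward so the monotonicity argument goes through unchanged).
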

\begin{proof}
Note that $\frac{W( T_{D^+_i}) - W( T_{D^-_i})}{2}$ is a random variable taking value in $\{0,1\}$. Hence, using Hoeffding's inequality, the fact that $k \ge ln(2n/\delta)/(2\epsilon^2)$ and that $\M{ld}_i(W,G) = \mathbbm{E}(\frac{W( T_{D^+_i}) - W( T_{D^-_i})}{2})$, we obtain that for any $i\in V$, $$\mathbbm{P}(|\M{ld}_i(W,G) - \tilde{\M{ld}}_i(W,G) |\ge \epsilon) \le \delta/n.$$ 

For the next step, we use a union bound.
\begin{proposition}[Union bound] Let $E_1$, $\ldots$, $E_m$ be $m$ events, then:
$\mathbbm{P}( \bigcup_{i=1}^m E_i) \leq \sum_{i=1}^m \mathbbm{P}(E_i).$
\end{proposition}
Hence,
\begin{align*} 
&\mathbbm{P}(\exists i \in V \text{ s.t. } |\M{ld}_i(W,G) - \tilde{\M{ld}}_i(W,G) |\ge \epsilon) \le \\
&~~~\sum_{i\in V} \mathbbm{P}(|\M{ld}_i(W,G) - \tilde{\M{ld}}_i(W,G) |\ge \epsilon) \le \delta
\end{align*}
\end{proof}

We first ran experiments to compute the criticality of voters in different types of network, to see if we could find correlations between the criticality and other metrics.

The kind of networks we considered for this first experiments were the following:
\begin{itemize}
    \item \textbf{Random graph $G(n,p)$} with $n$ nodes and each edge having a probability $p$ to be added to the network.
    \item \textbf{Preferential attachment model} based on the model from \citep{barabasialbert} parameterized by the number of nodes $n$ and the number of edges $m$ to add for each new node. For a network of $n$ nodes, we sequentially add the nodes to the graph and when we add node we also attach it to $m$ of the existing nodes, with a probability proportional to their current degree in the graph. More formally, let $V$ be the set of existing nodes when we add a new node $x \notin V$, then the probability to add the edge $(x,v)$ for some $v \in V$ is proportional to the current degree of $v$. This follows the ``rich get richer'' idea as nodes that are already important in the network are more likely to become even more important.
    \item \textbf{Small world model} based on the model from \citep{watts1998collective}. On the contrary to the previous model, the idea is here that no nodes accumulate too many connections, by mainly having edges with its neighbourhood. More formally, for a network of $n$ nodes, we associate each node to a position on a ring and we connect every node to its $k$ nearest neighbours, and every edge is rewired with some probability $p$, given as a parameter. If an edge is rewired for one node, it can be rewired to any other node of the network uniformly at random. In our experiments, we used $p=0.2$
\end{itemize}
Note that in our experiments, we considered the versions of these three models that gave undirected graphs, so every edge is bidirectional. The next two models give us directed graphs.
\begin{itemize}
    \item In the \textbf{spatial model}, every node is first embedded in a 2 dimensional Euclidean space according to some distribution of positions $\mathcal D$, and for each node, we add a directed edge to its $k$ nearest neighbours on this Euclidean space. Note that this graph is directed as one voter node be another node nearest neighbours while the converse is false. We considered to distribution of positions: (i) the Uniform distribution in $[-1,1]^2$ and the Gaussian distribution centered in $0$ with standard deviation $1$. We expect that with the Gaussian distribution, nodes that are embedded close to the center of the plane $(0,0)$ get an advantage against the nodes that have less central positions.
    \item \textbf{$k$-layers model} are graph with $k$ layers, each layer having the same number of nodes, and every node from the layer $j < k$ has a directed edge towards every node from the layer $j+1$.
\end{itemize}

We constructed $5$ networks of each family with $n = 100$ voters and set up the parameters $m$, $k$ or $p$ such that the average in-degree of the graph is $10$. For each network, we ran $5,000$ delegations-partitions. The figures are averaged over the $5$ networks for each type of networks.

Figure~\ref{fig:distrib_criticality} shows the distribution of criticality for every type of graph, by ordering the voters from the most critical to the least critical. We observe that for some networks, there are voters that are far more critical than others. This is the case for the preferential attachment model in which the most critical voter has $\mathcal M_i^{ld}(W) = 0.35$ and the least critical voter has $0.063$. Whereas, voters in networks based on the small world model seem to have almost all the same criticality, as the most critical voter has $M_i^{ld}(W) = 0.159$ and the least critical voter has $0.094$. We note that in spatial networks, the differences of criticality are greater with the Gaussian distribution than with uniform distribution. In particular, we observe that some voters seems to have very little power in deciding the outcome of the vote. Finally, for the $k$-layer experiment (with $k=10$), it is clear that voters from each layer have (almost) the same criticality and that the criticality gets higher with the layer. This can also be seen in Figure 7.b. In particular, voters on the last layer (with no outgoing edges, so they have to vote), are the one with the highest criticality in average. However, one can note that if the criticality increases quickly between the first layers, it will then increase more slowly between the last ones.

We also looked at other metrics of the networks we generated. In particular, we noticed that the criticalities seemed to be very correlated to the degree. Figure~\ref{fig:distrib_degrees} shows the distribution of in-degrees in the graph, from the node with the highest in-degree, to the node with the lowest one. We observe that the curves of this figures seems very similar to the ones from Figure~\ref{fig:distrib_criticality}. 

To observe more clearly the correlations between the in-degree and the criticality, we plotted for each kind of network, the nodes by their in-degree and their criticality. The results are shown in Figure~7. We see the large correlation between the two metrics, especially in the non-directed networks ($G(n,p)$, preferential attachment and small world). However, in the spatial graph, we see some variations between nodes having the same in-degree. This makes sense, as in these graph the positions of the in-neighbours is maybe as important as their number. In the $k$-layer model, every node has in-degree either $10$ or $0$ (for the first layer), see instead of the degree, we look at how critical the nodes of each layer are.

\begin{figure}[t]
\begin{tikzpicture}

\definecolor{color0}{rgb}{0.12156862745098,0.466666666666667,0.705882352941177}
\definecolor{color1}{rgb}{1,0.498039215686275,0.0549019607843137}
\definecolor{color2}{rgb}{0.172549019607843,0.627450980392157,0.172549019607843}
\definecolor{color3}{rgb}{0.83921568627451,0.152941176470588,0.156862745098039}
\definecolor{color4}{rgb}{0.580392156862745,0.403921568627451,0.741176470588235}
\definecolor{color5}{rgb}{0.549019607843137,0.337254901960784,0.294117647058824}

\begin{axis}[
legend cell align={left},
legend style={fill opacity=0.8, draw opacity=1, text opacity=1, draw=white!80!black},
tick align=outside,
tick pos=left,
x grid style={white!69.0196078431373!black},
xmin=-1, xmax=100,
xtick style={color=black},
xtick={0,50,100},
xticklabels={0\%,50\%,100\%},
y grid style={white!69.0196078431373!black},
ylabel={$\mathcal M_i^{ld}(W)$},
ymajorgrids,
ymin=0, ymax=0.3717,
ytick style={color=black}
]
\addplot [very thick, color0, mark=*, mark size=1, mark options={solid}, only marks]
table {%
0 0.1856
1 0.17952
2 0.17452
3 0.16836
4 0.16572
5 0.15996
6 0.15672
7 0.15488
8 0.15252
9 0.15084
10 0.14804
11 0.14696
12 0.14576
13 0.14372
14 0.14184
15 0.14072
16 0.13976
17 0.1384
18 0.13788
19 0.13652
20 0.13496
21 0.13364
22 0.133
23 0.13224
24 0.13144
25 0.1308
26 0.12976
27 0.12864
28 0.12772
29 0.12744
30 0.12708
31 0.1264
32 0.12548
33 0.12492
34 0.12424
35 0.12384
36 0.12348
37 0.123
38 0.12268
39 0.12172
40 0.12072
41 0.12012
42 0.11964
43 0.119
44 0.11836
45 0.11808
46 0.11724
47 0.11692
48 0.11588
49 0.11504
50 0.11452
51 0.114
52 0.11352
53 0.11288
54 0.11196
55 0.11152
56 0.111
57 0.10972
58 0.10928
59 0.10852
60 0.10676
61 0.1064
62 0.10568
63 0.10516
64 0.10488
65 0.10428
66 0.10356
67 0.10276
68 0.10228
69 0.10136
70 0.1004
71 0.09964
72 0.0988
73 0.09832
74 0.0978
75 0.09732
76 0.09624
77 0.09532
78 0.09408
79 0.093
80 0.09276
81 0.092
82 0.09092
83 0.08992
84 0.089
85 0.08836
86 0.0878
87 0.0864
88 0.08456
89 0.08352
90 0.08232
91 0.08112
92 0.08024
93 0.07912
94 0.0766
95 0.07536
96 0.07248
97 0.07064
98 0.06788
99 0.06092
};
\addlegendentry{$G(n,p)$}
\addplot [very thick, color1, mark=*, mark size=1, mark options={solid}, only marks]
table {%
0 0.21072
1 0.20508
2 0.2024
3 0.19868
4 0.19468
5 0.189
6 0.18328
7 0.18044
8 0.1778
9 0.17708
10 0.17544
11 0.17416
12 0.17276
13 0.17164
14 0.17
15 0.16828
16 0.16672
17 0.16456
18 0.16168
19 0.15972
20 0.158
21 0.15728
22 0.15468
23 0.15148
24 0.15084
25 0.15008
26 0.14892
27 0.14804
28 0.14736
29 0.1462
30 0.14484
31 0.14336
32 0.14208
33 0.141
34 0.14052
35 0.13964
36 0.13932
37 0.13852
38 0.1376
39 0.13624
40 0.13564
41 0.1338
42 0.13296
43 0.13228
44 0.13136
45 0.13024
46 0.12808
47 0.12664
48 0.1258
49 0.1252
50 0.12448
51 0.1236
52 0.12224
53 0.12152
54 0.12056
55 0.12004
56 0.1184
57 0.11776
58 0.1172
59 0.1164
60 0.11516
61 0.11436
62 0.11364
63 0.11284
64 0.11128
65 0.11048
66 0.10912
67 0.10808
68 0.1074
69 0.10652
70 0.1048
71 0.10408
72 0.10208
73 0.10116
74 0.09972
75 0.09916
76 0.09836
77 0.09736
78 0.09632
79 0.09496
80 0.09216
81 0.09148
82 0.09032
83 0.08948
84 0.08856
85 0.08588
86 0.08456
87 0.08124
88 0.0802
89 0.0786
90 0.07676
91 0.07552
92 0.073
93 0.0714
94 0.06924
95 0.06716
96 0.06384
97 0.06088
98 0.05856
99 0.04908
};
\addlegendentry{Spatial-U}
\addplot [very thick, color2, mark=*, mark size=1, mark options={solid}, only marks]
table {%
0 0.244
1 0.23228
2 0.22124
3 0.21476
4 0.20796
5 0.20408
6 0.2012
7 0.19972
8 0.19588
9 0.1948
10 0.19048
11 0.18792
12 0.18312
13 0.1812
14 0.17564
15 0.17376
16 0.17232
17 0.1688
18 0.16756
19 0.16408
20 0.16324
21 0.16144
22 0.16036
23 0.15824
24 0.15784
25 0.15704
26 0.15616
27 0.15472
28 0.15336
29 0.152
30 0.15112
31 0.14936
32 0.14864
33 0.147
34 0.1454
35 0.14356
36 0.14292
37 0.1408
38 0.1392
39 0.13812
40 0.13608
41 0.1338
42 0.13136
43 0.13036
44 0.12952
45 0.12728
46 0.12612
47 0.1238
48 0.12168
49 0.12076
50 0.12016
51 0.1176
52 0.11524
53 0.1142
54 0.11352
55 0.11272
56 0.112
57 0.111
58 0.10792
59 0.10588
60 0.10468
61 0.10372
62 0.10308
63 0.10152
64 0.10108
65 0.10064
66 0.09924
67 0.0966
68 0.0952
69 0.093
70 0.0906
71 0.08952
72 0.08808
73 0.0876
74 0.08528
75 0.08156
76 0.08044
77 0.07752
78 0.07656
79 0.074
80 0.07296
81 0.07028
82 0.0684
83 0.06688
84 0.06544
85 0.06332
86 0.06188
87 0.05956
88 0.05692
89 0.05592
90 0.05296
91 0.05156
92 0.04872
93 0.0472
94 0.04528
95 0.04416
96 0.04252
97 0.0406
98 0.0388
99 0.03456
};
\addlegendentry{Spatial-G}
\addplot [very thick, color3, mark=*, mark size=1, mark options={solid}, only marks]
table {%
0 0.35484
1 0.29512
2 0.26988
3 0.24756
4 0.23252
5 0.21788
6 0.20076
7 0.19396
8 0.18476
9 0.17632
10 0.16632
11 0.15956
12 0.15548
13 0.15084
14 0.14628
15 0.14268
16 0.13888
17 0.137
18 0.13192
19 0.12796
20 0.12324
21 0.11904
22 0.11556
23 0.11264
24 0.10952
25 0.108
26 0.10648
27 0.10488
28 0.10308
29 0.10228
30 0.10072
31 0.10012
32 0.09852
33 0.0978
34 0.09608
35 0.09448
36 0.09368
37 0.09204
38 0.09132
39 0.0908
40 0.0906
41 0.08976
42 0.08944
43 0.0882
44 0.08708
45 0.08668
46 0.08608
47 0.08564
48 0.08468
49 0.084
50 0.0834
51 0.08296
52 0.0824
53 0.08196
54 0.08156
55 0.08092
56 0.08008
57 0.07916
58 0.0784
59 0.07788
60 0.0776
61 0.07724
62 0.07664
63 0.07572
64 0.0752
65 0.07496
66 0.07456
67 0.07416
68 0.07372
69 0.07304
70 0.07252
71 0.07212
72 0.07136
73 0.07088
74 0.07056
75 0.07028
76 0.06996
77 0.0694
78 0.06912
79 0.06884
80 0.06868
81 0.06832
82 0.06812
83 0.06784
84 0.06776
85 0.0674
86 0.06728
87 0.06724
88 0.06684
89 0.06656
90 0.06596
91 0.0658
92 0.0654
93 0.06516
94 0.06488
95 0.0648
96 0.06448
97 0.06408
98 0.06352
99 0.06256
};
\addlegendentry{Pref. attachment}
\addplot [very thick, color4, mark=*, mark size=1, mark options={solid}, only marks]
table {%
0 0.15892
1 0.15236
2 0.14968
3 0.14712
4 0.1462
5 0.14548
6 0.14448
7 0.14332
8 0.1424
9 0.14144
10 0.14032
11 0.1396
12 0.13896
13 0.13828
14 0.13716
15 0.13632
16 0.13584
17 0.13524
18 0.13464
19 0.13428
20 0.1338
21 0.13324
22 0.13292
23 0.13256
24 0.13228
25 0.13208
26 0.13156
27 0.131
28 0.13072
29 0.13028
30 0.13016
31 0.1296
32 0.12928
33 0.12908
34 0.12864
35 0.12824
36 0.12816
37 0.12748
38 0.1274
39 0.1268
40 0.12644
41 0.1264
42 0.126
43 0.12572
44 0.12548
45 0.12532
46 0.12516
47 0.12512
48 0.1248
49 0.12452
50 0.12432
51 0.12416
52 0.12408
53 0.12392
54 0.12372
55 0.12352
56 0.1232
57 0.12304
58 0.12252
59 0.1222
60 0.12188
61 0.12144
62 0.12104
63 0.1206
64 0.12008
65 0.11972
66 0.1194
67 0.1192
68 0.11888
69 0.1186
70 0.1184
71 0.1182
72 0.11796
73 0.11712
74 0.11688
75 0.11664
76 0.11628
77 0.116
78 0.11584
79 0.11528
80 0.1146
81 0.11408
82 0.11396
83 0.11376
84 0.11344
85 0.11292
86 0.11264
87 0.11196
88 0.11104
89 0.11056
90 0.11016
91 0.10904
92 0.10836
93 0.10712
94 0.10636
95 0.10504
96 0.10352
97 0.10232
98 0.10016
99 0.09408
};
\addlegendentry{Small world}
\addplot [very thick, color5, mark=*, mark size=1, mark options={solid}, only marks]
table {%
0 0.11668
1 0.1149
2 0.11426
3 0.11398
4 0.11324
5 0.11274
6 0.11226
7 0.11172
8 0.11154
9 0.1111
10 0.1106
11 0.11018
12 0.10974
13 0.1096
14 0.10938
15 0.10888
16 0.10866
17 0.10806
18 0.10774
19 0.10754
20 0.10708
21 0.10666
22 0.10624
23 0.10512
24 0.1044
25 0.10356
26 0.1026
27 0.10214
28 0.10152
29 0.10082
30 0.09982
31 0.0995
32 0.09888
33 0.09834
34 0.09804
35 0.09764
36 0.09624
37 0.09574
38 0.09498
39 0.09374
40 0.09208
41 0.09134
42 0.09046
43 0.08974
44 0.08932
45 0.08854
46 0.08804
47 0.08722
48 0.08566
49 0.08402
50 0.08084
51 0.07984
52 0.0793
53 0.07846
54 0.07788
55 0.07758
56 0.07698
57 0.07644
58 0.07506
59 0.07324
60 0.06948
61 0.06774
62 0.06712
63 0.06644
64 0.06538
65 0.06478
66 0.06402
67 0.0638
68 0.06326
69 0.06238
70 0.05466
71 0.05352
72 0.0531
73 0.05244
74 0.05224
75 0.05178
76 0.05132
77 0.05078
78 0.05014
79 0.04952
80 0.0391
81 0.03844
82 0.03808
83 0.0378
84 0.03724
85 0.03694
86 0.03648
87 0.03622
88 0.03594
89 0.03504
90 0.02146
91 0.02084
92 0.02032
93 0.02016
94 0.02004
95 0.01978
96 0.01952
97 0.01914
98 0.0186
99 0.01764
};
\addlegendentry{$10$-layers}
\end{axis}

\end{tikzpicture}
	\caption{Distribution of the criticality among the voters in the network, from the most critical to the least critical (thus the percentage on the x-axis represents the percentage of voters considered).
    \label{fig:distrib_criticality}}
\end{figure}

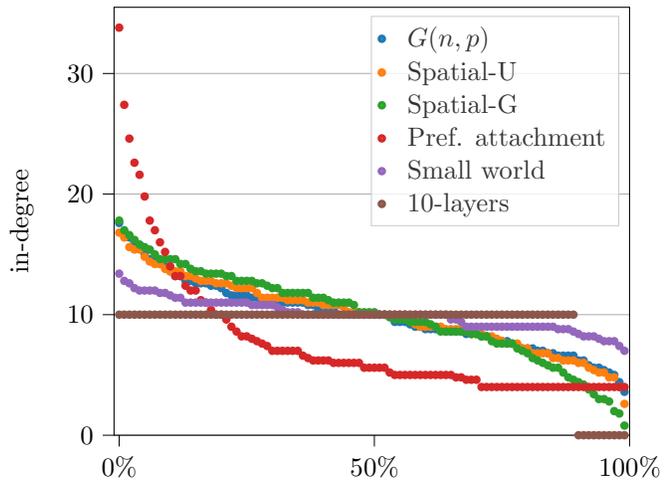
\begin{figure}[t]
\begin{tikzpicture}

\definecolor{color0}{rgb}{0.12156862745098,0.466666666666667,0.705882352941177}
\definecolor{color1}{rgb}{1,0.498039215686275,0.0549019607843137}
\definecolor{color2}{rgb}{0.172549019607843,0.627450980392157,0.172549019607843}
\definecolor{color3}{rgb}{0.83921568627451,0.152941176470588,0.156862745098039}
\definecolor{color4}{rgb}{0.580392156862745,0.403921568627451,0.741176470588235}
\definecolor{color5}{rgb}{0.549019607843137,0.337254901960784,0.294117647058824}

\begin{axis}[
legend cell align={left},
legend style={fill opacity=0.8, draw opacity=1, text opacity=1, draw=white!80!black},
tick align=outside,
tick pos=left,
x grid style={white!69.0196078431373!black},
xmin=-1, xmax=100,
xtick style={color=black},
xtick={0,50,100},
xticklabels={0\%,50\%,100\%},
y grid style={white!69.0196078431373!black},
ylabel={in-degree},
ymajorgrids,
ymin=0, ymax=35.49,
ytick style={color=black}
]
\addplot [very thick, color0, mark=*, mark size=1, mark options={solid}, only marks]
table {%
0 17.6
1 16.8
2 16.4
3 16
4 15.8
5 15
6 14.6
7 14.4
8 14.2
9 14
10 13.6
11 13.2
12 13.2
13 12.8
14 12.8
15 12.6
16 12.6
17 12.6
18 12.4
19 12.4
20 12.2
21 11.8
22 11.6
23 11.6
24 11.6
25 11.6
26 11.4
27 11.4
28 11.2
29 11
30 11
31 11
32 11
33 11
34 11
35 11
36 11
37 10.8
38 10.8
39 10.4
40 10.4
41 10.4
42 10.4
43 10
44 10
45 10
46 10
47 10
48 10
49 10
50 10
51 10
52 10
53 9.8
54 9.4
55 9.4
56 9.4
57 9.2
58 9
59 9
60 8.8
61 8.8
62 8.8
63 8.8
64 8.6
65 8.6
66 8.6
67 8.6
68 8.4
69 8.4
70 8.4
71 8.2
72 8.2
73 8
74 7.8
75 7.8
76 7.8
77 7.6
78 7.4
79 7.2
80 7.2
81 7.2
82 7
83 7
84 6.8
85 6.8
86 6.6
87 6.6
88 6.6
89 6.6
90 6.2
91 6.2
92 5.8
93 5.6
94 5.6
95 5.4
96 5.2
97 5
98 4.4
99 3.6
};
\addlegendentry{$G(n,p)$}
\addplot [very thick, color1, mark=*, mark size=1, mark options={solid}, only marks]
table {%
0 16.8
1 16.4
2 15.6
3 15.4
4 15.4
5 14.8
6 14.4
7 14.2
8 14.2
9 13.8
10 13.6
11 13.6
12 13.6
13 13.2
14 13.2
15 13
16 12.8
17 12.8
18 12.8
19 12.6
20 12.6
21 12.6
22 12.4
23 12.2
24 12.2
25 12.2
26 12.2
27 11.8
28 11.4
29 11.4
30 11.4
31 11.4
32 11.4
33 11.2
34 11.2
35 11.2
36 11.2
37 11
38 11
39 11
40 10.8
41 10.8
42 10.8
43 10.8
44 10.6
45 10.2
46 10.2
47 10.2
48 10
49 10
50 10
51 10
52 10
53 10
54 10
55 10
56 10
57 9.6
58 9.2
59 9
60 9
61 9
62 9
63 9
64 8.8
65 8.8
66 8.8
67 8.8
68 8.6
69 8.4
70 8.4
71 8.4
72 8.2
73 8.2
74 8
75 7.8
76 7.6
77 7.6
78 7.6
79 7.2
80 7.2
81 6.8
82 6.8
83 6.8
84 6.8
85 6.4
86 6.4
87 6.2
88 6.2
89 6.2
90 6
91 6
92 5.6
93 5.4
94 5.2
95 5.2
96 4.8
97 4.8
98 4
99 2.6
};
\addlegendentry{Spatial-U}
\addplot [very thick, color2, mark=*, mark size=1, mark options={solid}, only marks]
table {%
0 17.8
1 17
2 16.6
3 16.2
4 15.8
5 15.6
6 15.4
7 15
8 14.6
9 14.6
10 14.6
11 14.6
12 14.2
13 14.2
14 13.8
15 13.6
16 13.6
17 13.4
18 13.4
19 13.4
20 13.4
21 13.2
22 13.2
23 12.8
24 12.8
25 12.8
26 12.8
27 12.6
28 12.6
29 12.4
30 12.2
31 12.2
32 11.8
33 11.8
34 11.8
35 11.8
36 11.8
37 11.8
38 11.4
39 11.4
40 11.4
41 11.2
42 11
43 11
44 11
45 11
46 10.8
47 10.2
48 10.2
49 10.2
50 10.2
51 10
52 10
53 9.8
54 9.6
55 9.6
56 9.6
57 9.4
58 9.4
59 9.4
60 9.4
61 9.2
62 9
63 8.6
64 8.6
65 8.6
66 8.6
67 8.6
68 8.6
69 8.4
70 8.4
71 8.2
72 8.2
73 7.8
74 7.6
75 7.6
76 7.6
77 7.6
78 7.2
79 7
80 6.8
81 6.4
82 6.2
83 6
84 5.8
85 5.6
86 5.6
87 5.2
88 4.8
89 4.6
90 4.4
91 4.2
92 3.8
93 3.4
94 3
95 3
96 2.8
97 2
98 1.8
99 0.8
};
\addlegendentry{Spatial-G}
\addplot [very thick, color3, mark=*, mark size=1, mark options={solid}, only marks]
table {%
0 33.8
1 27.4
2 24.6
3 22.6
4 21.6
5 19.8
6 17.8
7 17
8 16
9 15.2
10 14
11 13.2
12 13.2
13 12.4
14 12
15 12
16 11.2
17 11
18 10.4
19 10
20 10
21 9.6
22 9
23 8.6
24 8.2
25 8.2
26 8
27 7.8
28 7.6
29 7.4
30 7
31 7
32 7
33 7
34 7
35 7
36 6.6
37 6.4
38 6.2
39 6.2
40 6.2
41 6.2
42 6
43 6
44 6
45 6
46 6
47 6
48 5.6
49 5.6
50 5.6
51 5.6
52 5.6
53 5.2
54 5
55 5
56 5
57 5
58 5
59 5
60 5
61 5
62 5
63 5
64 5
65 5
66 4.8
67 4.8
68 4.6
69 4.6
70 4.6
71 4
72 4
73 4
74 4
75 4
76 4
77 4
78 4
79 4
80 4
81 4
82 4
83 4
84 4
85 4
86 4
87 4
88 4
89 4
90 4
91 4
92 4
93 4
94 4
95 4
96 4
97 4
98 4
99 4
};
\addlegendentry{Pref. attachment}
\addplot [very thick, color4, mark=*, mark size=1, mark options={solid}, only marks]
table {%
0 13.4
1 12.8
2 12.6
3 12.2
4 12
5 12
6 12
7 12
8 11.8
9 11.8
10 11.6
11 11.4
12 11.4
13 11
14 11
15 11
16 11
17 11
18 11
19 11
20 11
21 11
22 11
23 11
24 11
25 11
26 10.8
27 10.8
28 10.8
29 10.8
30 10.8
31 10.6
32 10.4
33 10.2
34 10.2
35 10.2
36 10
37 10
38 10
39 10
40 10
41 10
42 10
43 10
44 10
45 10
46 10
47 10
48 10
49 10
50 10
51 10
52 10
53 10
54 10
55 10
56 10
57 10
58 10
59 10
60 10
61 10
62 10
63 10
64 10
65 9.6
66 9.6
67 9.4
68 9
69 9
70 9
71 9
72 9
73 9
74 9
75 9
76 9
77 9
78 9
79 9
80 9
81 9
82 9
83 9
84 9
85 9
86 8.8
87 8.8
88 8.8
89 8.6
90 8.4
91 8.2
92 8.2
93 8.2
94 8
95 7.8
96 7.8
97 7.8
98 7.4
99 7
};
\addlegendentry{Small world}
\addplot [very thick, color5, mark=*, mark size=1, mark options={solid}, only marks]
table {%
0 10
1 10
2 10
3 10
4 10
5 10
6 10
7 10
8 10
9 10
10 10
11 10
12 10
13 10
14 10
15 10
16 10
17 10
18 10
19 10
20 10
21 10
22 10
23 10
24 10
25 10
26 10
27 10
28 10
29 10
30 10
31 10
32 10
33 10
34 10
35 10
36 10
37 10
38 10
39 10
40 10
41 10
42 10
43 10
44 10
45 10
46 10
47 10
48 10
49 10
50 10
51 10
52 10
53 10
54 10
55 10
56 10
57 10
58 10
59 10
60 10
61 10
62 10
63 10
64 10
65 10
66 10
67 10
68 10
69 10
70 10
71 10
72 10
73 10
74 10
75 10
76 10
77 10
78 10
79 10
80 10
81 10
82 10
83 10
84 10
85 10
86 10
87 10
88 10
89 10
90 0
91 0
92 0
93 0
94 0
95 0
96 0
97 0
98 0
99 0
};
\addlegendentry{$10$-layers}
\end{axis}

\end{tikzpicture}
	\caption{Distribution of the in-degrees among the voters in the network, from the voter with the highest in-degree to the one with the lowest in-degree.
    \label{fig:distrib_degrees}}
\end{figure}

\clearpage

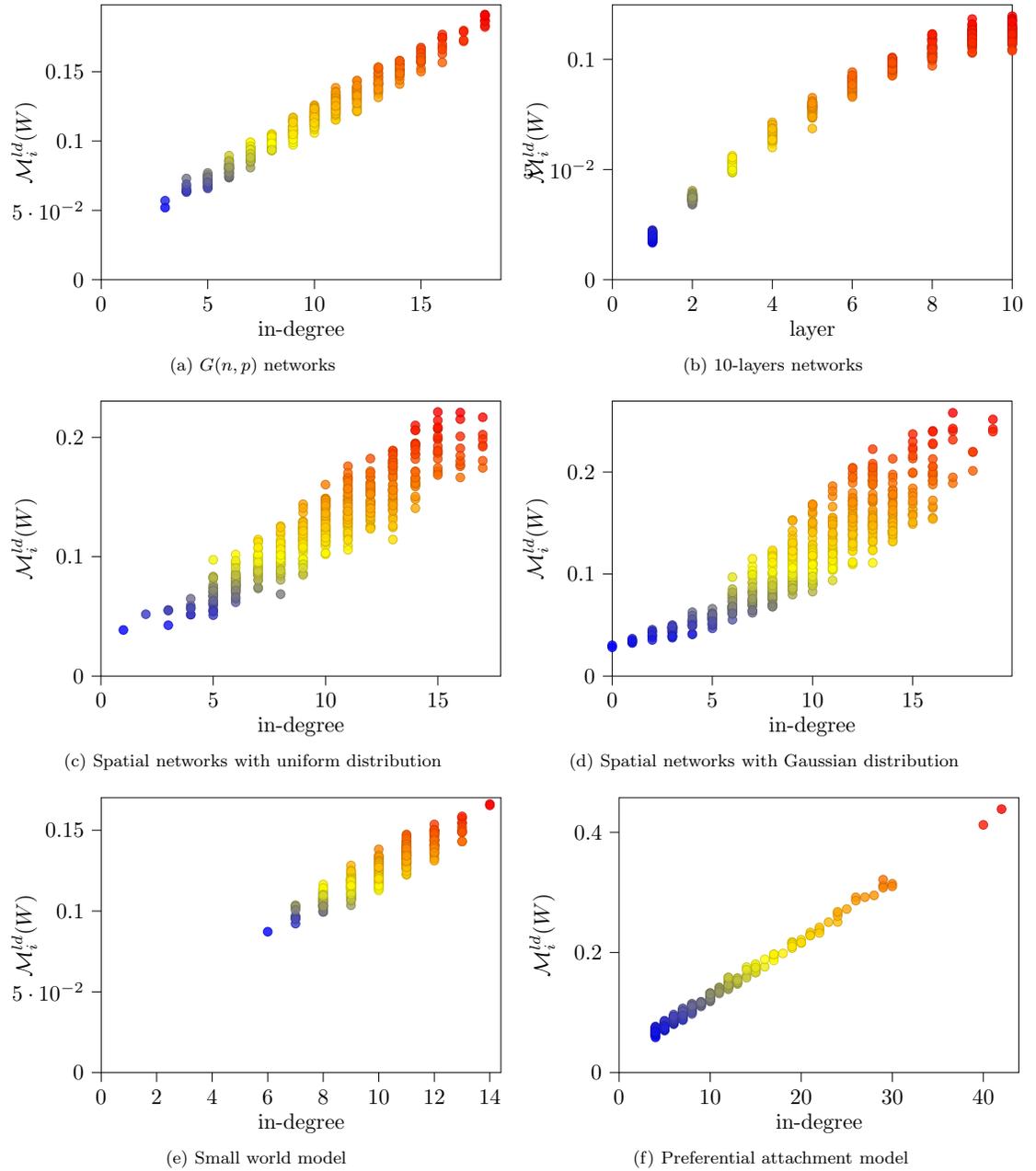
\begin{figure}[!h]\label{fig:correlations}
    \centering
    \scalebox{0.9}{\subfloat[\centering $G(n,p)$ networks]{{
\begin{tikzpicture}

\definecolor{color0}{rgb}{0.12156862745098,0.466666666666667,0.705882352941177}

\begin{axis}[
height=6cm,
tick align=outside,
tick pos=left,
width=8cm,
x grid style={white!69.0196078431373!black},
xlabel={in-degree},
xmin=0, xmax=18.75,
xtick style={color=black},
y grid style={white!69.0196078431373!black},
ylabel={\(\displaystyle \mathcal{M}_i^{ld}(W)\)},
ymin=0, ymax=0.19816,
ytick style={color=black}
]
\addplot [draw=color0, fill=color0, mark=*, only marks, opacity=0.8, scatter]
table{%
x  y
5 0.0706
10 0.1148
5 0.0728
6 0.0738
9 0.1092
7 0.0874
9 0.1006
10 0.114
13 0.1438
16 0.1676
13 0.1396
9 0.1078
8 0.095
10 0.1114
12 0.136
12 0.1356
13 0.1414
9 0.1112
6 0.0736
9 0.1056
11 0.1248
12 0.1356
10 0.1144
14 0.1496
9 0.0972
9 0.1056
10 0.1104
11 0.1256
6 0.0832
14 0.1512
14 0.1508
8 0.0944
12 0.1318
7 0.0898
11 0.117
8 0.0932
6 0.0744
5 0.0704
10 0.121
10 0.111
13 0.141
12 0.1212
6 0.0774
8 0.0962
6 0.0796
12 0.1244
10 0.11
10 0.1202
9 0.11
7 0.0876
11 0.1294
6 0.079
9 0.1044
5 0.0668
11 0.1232
8 0.1036
15 0.1522
15 0.1644
9 0.1016
12 0.1388
14 0.1512
10 0.1142
16 0.1628
14 0.1508
8 0.1048
12 0.1224
5 0.0754
7 0.0808
9 0.1026
11 0.122
10 0.1152
16 0.1746
10 0.1126
8 0.0992
10 0.1146
4 0.0646
12 0.1268
9 0.107
3 0.052
13 0.1344
9 0.1078
8 0.098
18 0.1822
7 0.0868
9 0.1046
13 0.1434
11 0.1154
11 0.1258
10 0.109
7 0.0862
6 0.081
8 0.0998
14 0.148
17 0.1796
10 0.1178
13 0.1382
8 0.0972
11 0.1232
11 0.1246
6 0.0804
8 0.1034
9 0.1066
6 0.0842
10 0.1186
9 0.1022
6 0.083
12 0.1362
14 0.1578
11 0.1262
8 0.0946
14 0.1554
9 0.1136
11 0.1268
11 0.131
11 0.1384
4 0.0636
7 0.0884
7 0.0948
9 0.1148
11 0.1318
5 0.0742
15 0.1606
15 0.161
11 0.1254
13 0.1476
12 0.1338
11 0.1332
10 0.1174
9 0.1086
8 0.0998
10 0.1258
9 0.115
6 0.0804
9 0.1068
9 0.1056
11 0.1254
13 0.1458
10 0.1204
11 0.1264
14 0.1512
12 0.1278
10 0.112
8 0.1036
16 0.1768
10 0.1148
7 0.0946
14 0.1578
10 0.1168
10 0.1136
4 0.0672
9 0.1082
7 0.0886
11 0.1294
11 0.1318
10 0.122
10 0.1214
9 0.1084
8 0.1028
15 0.1674
12 0.1342
7 0.0964
12 0.1378
11 0.1312
13 0.1448
9 0.1104
9 0.1102
10 0.1226
6 0.0802
10 0.121
7 0.0936
10 0.1162
12 0.13
9 0.1104
14 0.1554
9 0.106
10 0.1244
15 0.1662
11 0.1248
10 0.1232
11 0.1276
10 0.1208
12 0.1388
11 0.1356
9 0.108
10 0.1058
8 0.0982
11 0.1298
13 0.1424
13 0.1358
10 0.1174
4 0.073
11 0.1318
10 0.1226
7 0.0938
7 0.0862
8 0.1044
8 0.0976
7 0.0874
7 0.0946
8 0.1016
5 0.0746
6 0.0796
8 0.0954
11 0.1268
11 0.1214
9 0.1048
7 0.0856
13 0.1516
17 0.1718
9 0.106
18 0.1834
8 0.1006
4 0.0632
13 0.1414
6 0.0858
8 0.0934
15 0.1582
10 0.1168
11 0.1272
7 0.092
14 0.1482
9 0.11
6 0.0804
10 0.117
11 0.1218
11 0.1234
15 0.1568
11 0.1196
17 0.1728
8 0.0974
14 0.1518
10 0.1156
12 0.1366
8 0.099
7 0.0904
13 0.1366
7 0.0904
7 0.0882
11 0.1246
14 0.141
13 0.1314
5 0.0688
14 0.1462
14 0.144
18 0.1872
13 0.1486
7 0.0848
10 0.1102
7 0.087
12 0.1314
9 0.104
5 0.067
13 0.1464
8 0.0956
9 0.103
8 0.0944
8 0.1038
7 0.0912
8 0.0968
8 0.0982
11 0.1268
10 0.1118
6 0.077
6 0.0764
6 0.0764
7 0.0912
11 0.1152
10 0.1126
10 0.1084
7 0.0914
8 0.0936
7 0.0918
13 0.141
7 0.0888
10 0.1196
5 0.0684
12 0.1222
10 0.1186
12 0.1332
11 0.1288
9 0.1072
11 0.1264
18 0.1862
12 0.1302
4 0.0648
11 0.1208
10 0.112
5 0.0756
16 0.1566
12 0.1216
8 0.1006
11 0.1186
12 0.1306
9 0.107
8 0.1028
11 0.1228
8 0.1006
5 0.0658
13 0.1406
11 0.1248
8 0.0952
16 0.1744
12 0.1432
10 0.1208
17 0.1786
13 0.1462
11 0.1258
12 0.135
6 0.0826
8 0.0966
7 0.099
11 0.122
14 0.1554
8 0.0998
10 0.1148
7 0.0948
8 0.1002
10 0.125
8 0.0988
12 0.1362
4 0.0688
11 0.1298
7 0.09
8 0.1038
5 0.0726
8 0.097
9 0.117
9 0.1136
13 0.153
13 0.149
16 0.1738
10 0.121
8 0.105
9 0.111
7 0.0916
15 0.1632
9 0.1152
14 0.1534
10 0.116
10 0.1184
7 0.0912
8 0.1006
10 0.1148
7 0.0904
10 0.1116
9 0.1164
5 0.077
11 0.1326
7 0.094
5 0.0726
10 0.1192
7 0.0872
6 0.0862
6 0.0858
11 0.1306
13 0.1422
15 0.1602
10 0.117
8 0.1038
18 0.19
10 0.1148
13 0.1494
8 0.1024
9 0.1048
12 0.1348
5 0.0738
6 0.08
7 0.0946
10 0.1206
11 0.1266
13 0.1522
8 0.1026
9 0.105
13 0.1532
18 0.1912
10 0.1232
12 0.1382
10 0.118
6 0.0894
8 0.1026
10 0.1236
15 0.1574
10 0.1104
11 0.125
12 0.136
7 0.099
9 0.1142
11 0.1268
7 0.0934
9 0.106
11 0.135
7 0.0958
11 0.1298
7 0.0948
10 0.1132
12 0.135
11 0.1304
15 0.15
12 0.1436
9 0.1034
11 0.1252
11 0.1266
11 0.1276
9 0.1088
9 0.1056
13 0.1374
14 0.1538
6 0.0818
11 0.131
9 0.1054
11 0.1216
11 0.1212
7 0.0856
8 0.0984
10 0.1156
12 0.13
12 0.1362
11 0.1246
9 0.1072
12 0.1298
10 0.115
11 0.1252
12 0.1336
12 0.1334
8 0.1
5 0.0702
12 0.1382
7 0.0962
16 0.1744
9 0.099
10 0.112
6 0.0824
11 0.1238
7 0.094
14 0.1472
7 0.0908
8 0.103
7 0.0902
11 0.1198
11 0.128
11 0.1236
9 0.1072
7 0.0938
10 0.117
8 0.0936
9 0.1108
16 0.1698
7 0.0854
11 0.1294
11 0.1232
10 0.1238
10 0.11
8 0.0932
12 0.1316
6 0.0818
7 0.0892
11 0.1254
6 0.0806
6 0.075
11 0.1234
7 0.0888
8 0.0982
7 0.0908
9 0.101
18 0.1906
12 0.1328
10 0.1176
8 0.0984
15 0.1566
7 0.088
16 0.166
12 0.1272
11 0.1198
10 0.1184
6 0.0852
9 0.1078
9 0.1082
10 0.1166
11 0.1216
11 0.1288
6 0.077
8 0.0936
9 0.1068
11 0.1256
9 0.1094
11 0.1286
9 0.1048
10 0.1174
10 0.1198
6 0.081
10 0.1172
11 0.1252
10 0.1232
10 0.113
14 0.1578
12 0.14
15 0.158
9 0.1134
3 0.057
14 0.1488
};
\end{axis}

\end{tikzpicture}}}}%
    \scalebox{0.9}{\subfloat[\centering $10$-layers networks]{{
\begin{tikzpicture}

\definecolor{color0}{rgb}{0.12156862745098,0.466666666666667,0.705882352941177}

\begin{axis}[
height=6cm,
tick align=outside,
tick pos=left,
width=8cm,
x grid style={white!69.0196078431373!black},
xlabel={layer},
xmin=0, xmax=10,
xtick style={color=black},
y grid style={white!69.0196078431373!black},
ylabel={\(\displaystyle \mathcal{M}_i^{ld}(W)\)},
ymin=0, ymax=0.12474,
ytick style={color=black}
]
\addplot [draw=color0, fill=color0, mark=*, only marks, opacity=0.8, scatter]
table{%
x  y
1 0.0208
1 0.0225
1 0.0212
1 0.0212
1 0.0195
1 0.0222
1 0.0179
1 0.0212
1 0.0207
1 0.0206
2 0.037
2 0.0344
2 0.039
2 0.0372
2 0.0397
2 0.0358
2 0.0362
2 0.039
2 0.0386
2 0.0375
3 0.0508
3 0.053
3 0.051
3 0.0522
3 0.0535
3 0.0528
3 0.0562
3 0.0516
3 0.0499
3 0.0531
4 0.0633
4 0.0643
4 0.0665
4 0.0683
4 0.0629
4 0.0643
4 0.0667
4 0.0677
4 0.0684
4 0.0714
5 0.0807
5 0.0805
5 0.0812
5 0.0778
5 0.0757
5 0.0787
5 0.0806
5 0.0781
5 0.0736
5 0.0793
6 0.0918
6 0.0896
6 0.09
6 0.0856
6 0.0891
6 0.0848
6 0.0894
6 0.0917
6 0.089
6 0.0872
7 0.1003
7 0.0988
7 0.0995
7 0.0955
7 0.0947
7 0.0964
7 0.1009
7 0.099
7 0.0958
7 0.0997
8 0.1046
8 0.1068
8 0.1108
8 0.109
8 0.109
8 0.1022
8 0.1093
8 0.1046
8 0.1053
8 0.1058
9 0.1131
9 0.1081
9 0.1102
9 0.1156
9 0.1106
9 0.1126
9 0.1092
9 0.1094
9 0.1117
9 0.1183
10 0.1105
10 0.1154
10 0.1196
10 0.1161
10 0.1186
10 0.1109
10 0.1141
10 0.1141
10 0.1109
10 0.1181
1 0.0195
1 0.0198
1 0.0196
1 0.02
1 0.0197
1 0.0207
1 0.0191
1 0.0195
1 0.018
1 0.0199
2 0.0364
2 0.0365
2 0.0379
2 0.0371
2 0.0378
2 0.0404
2 0.0382
2 0.0361
2 0.0393
2 0.0357
3 0.0514
3 0.0527
3 0.0543
3 0.0528
3 0.0534
3 0.0528
3 0.0509
3 0.0521
3 0.0535
3 0.0508
4 0.0618
4 0.0701
4 0.067
4 0.0638
4 0.0637
4 0.067
4 0.065
4 0.0628
4 0.0657
4 0.0669
5 0.0796
5 0.0734
5 0.0784
5 0.0743
5 0.0761
5 0.0799
5 0.0773
5 0.0768
5 0.0789
5 0.0754
6 0.0843
6 0.0886
6 0.0859
6 0.0868
6 0.0878
6 0.0899
6 0.0897
6 0.0868
6 0.0892
6 0.0876
7 0.0946
7 0.0927
7 0.0945
7 0.0963
7 0.1004
7 0.0999
7 0.1003
7 0.0962
7 0.0988
7 0.0946
8 0.106
8 0.1033
8 0.0995
8 0.1096
8 0.1054
8 0.1084
8 0.0995
8 0.108
8 0.0999
8 0.0971
9 0.112
9 0.1066
9 0.1029
9 0.1117
9 0.1081
9 0.108
9 0.1138
9 0.1146
9 0.1107
9 0.1058
10 0.1151
10 0.11
10 0.1102
10 0.11
10 0.1104
10 0.1135
10 0.1124
10 0.1115
10 0.1086
10 0.1174
1 0.0192
1 0.0198
1 0.0185
1 0.0212
1 0.0183
1 0.0168
1 0.0199
1 0.0202
1 0.0195
1 0.0194
2 0.0366
2 0.0371
2 0.0367
2 0.0352
2 0.0383
2 0.0354
2 0.0371
2 0.0364
2 0.0355
2 0.0341
3 0.0506
3 0.0516
3 0.0486
3 0.0524
3 0.0528
3 0.0489
3 0.0498
3 0.0509
3 0.05
3 0.0535
4 0.0635
4 0.0638
4 0.0664
4 0.0624
4 0.065
4 0.06
4 0.0638
4 0.065
4 0.0629
4 0.0704
5 0.0742
5 0.0686
5 0.0762
5 0.0738
5 0.0797
5 0.0792
5 0.0769
5 0.0749
5 0.0795
5 0.0765
6 0.0908
6 0.0888
6 0.0912
6 0.092
6 0.0887
6 0.0843
6 0.0854
6 0.0864
6 0.0868
6 0.0876
7 0.0969
7 0.0974
7 0.0953
7 0.0968
7 0.0941
7 0.0938
7 0.0968
7 0.0964
7 0.0938
7 0.0977
8 0.1042
8 0.1059
8 0.1023
8 0.1005
8 0.1011
8 0.1054
8 0.1014
8 0.1104
8 0.1042
8 0.1017
9 0.107
9 0.1083
9 0.1063
9 0.109
9 0.108
9 0.1108
9 0.1093
9 0.1083
9 0.1094
9 0.1049
10 0.1076
10 0.1111
10 0.1096
10 0.1097
10 0.1077
10 0.1133
10 0.1091
10 0.1119
10 0.1066
10 0.1111
1 0.0172
1 0.0198
1 0.0197
1 0.0202
1 0.0179
1 0.0174
1 0.0182
1 0.0191
1 0.021
1 0.0205
2 0.0362
2 0.0363
2 0.0382
2 0.0351
2 0.0381
2 0.0368
2 0.0386
2 0.0364
2 0.0367
2 0.0388
3 0.0521
3 0.0494
3 0.0514
3 0.0521
3 0.0521
3 0.0504
3 0.0524
3 0.0515
3 0.0538
3 0.0503
4 0.0687
4 0.0646
4 0.0658
4 0.0688
4 0.0647
4 0.0637
4 0.0679
4 0.0647
4 0.0643
4 0.0643
5 0.0788
5 0.0778
5 0.078
5 0.0749
5 0.0776
5 0.078
5 0.0793
5 0.0826
5 0.0781
5 0.0755
6 0.0925
6 0.0903
6 0.0899
6 0.0899
6 0.0879
6 0.0896
6 0.0891
6 0.0838
6 0.0913
6 0.0909
7 0.0967
7 0.0957
7 0.0993
7 0.1005
7 0.1001
7 0.0993
7 0.0983
7 0.0926
7 0.1
7 0.0961
8 0.1022
8 0.101
8 0.1053
8 0.1115
8 0.1051
8 0.1026
8 0.1031
8 0.1029
8 0.101
8 0.1059
9 0.116
9 0.1061
9 0.1091
9 0.1135
9 0.1067
9 0.1033
9 0.1099
9 0.1079
9 0.1084
9 0.1088
10 0.1112
10 0.1136
10 0.111
10 0.107
10 0.1146
10 0.1141
10 0.1091
10 0.1095
10 0.1124
10 0.1141
1 0.0201
1 0.0204
1 0.0219
1 0.0183
1 0.0192
1 0.02
1 0.0202
1 0.0213
1 0.0187
1 0.02
2 0.0368
2 0.0364
2 0.037
2 0.0359
2 0.0377
2 0.0379
2 0.0382
2 0.0383
2 0.0375
2 0.0373
3 0.0526
3 0.0489
3 0.0554
3 0.0515
3 0.0527
3 0.0519
3 0.0513
3 0.0545
3 0.0555
3 0.0498
4 0.0682
4 0.066
4 0.0668
4 0.0638
4 0.0635
4 0.0639
4 0.0639
4 0.0674
4 0.0635
4 0.0685
5 0.0808
5 0.0783
5 0.076
5 0.0772
5 0.0784
5 0.078
5 0.0782
5 0.079
5 0.0739
5 0.0801
6 0.0914
6 0.0914
6 0.088
6 0.0908
6 0.0866
6 0.0835
6 0.0942
6 0.0829
6 0.0885
6 0.0928
7 0.0982
7 0.1002
7 0.0949
7 0.099
7 0.0954
7 0.0988
7 0.1005
7 0.0981
7 0.099
7 0.0985
8 0.1064
8 0.1022
8 0.1091
8 0.108
8 0.103
8 0.111
8 0.101
8 0.1114
8 0.1005
8 0.1046
9 0.1132
9 0.1118
9 0.1078
9 0.108
9 0.1089
9 0.1118
9 0.1117
9 0.1097
9 0.1097
9 0.1076
10 0.1128
10 0.1105
10 0.1143
10 0.1122
10 0.1171
10 0.1104
10 0.108
10 0.1046
10 0.1039
10 0.1115
};
\end{axis}

\end{tikzpicture}}}}%
    
    \scalebox{0.9}{\subfloat[\centering Spatial networks with uniform distribution]{{
\begin{tikzpicture}

\definecolor{color0}{rgb}{0.12156862745098,0.466666666666667,0.705882352941177}

\begin{axis}[
height=6cm,
tick align=outside,
tick pos=left,
width=8cm,
x grid style={white!69.0196078431373!black},
xlabel={in-degree},
xmin=0, xmax=17.8,
xtick style={color=black},
y grid style={white!69.0196078431373!black},
ylabel={\(\displaystyle \mathcal{M}_i^{ld}(W)\)},
ymin=0, ymax=0.23033,
ytick style={color=black}
]
\addplot [draw=color0, fill=color0, mark=*, only marks, opacity=0.8, scatter]
table{%
x  y
16 0.2152
7 0.0804
12 0.1442
11 0.151
14 0.1938
15 0.1986
10 0.1214
8 0.0934
9 0.129
8 0.0926
13 0.1686
6 0.0762
10 0.1176
11 0.1312
11 0.1644
12 0.1432
8 0.0958
9 0.0954
11 0.1664
9 0.0984
10 0.1472
7 0.1128
11 0.1384
10 0.132
6 0.0752
10 0.1366
9 0.1032
6 0.0844
10 0.117
14 0.1602
14 0.1628
5 0.0742
13 0.1334
15 0.1988
13 0.1312
13 0.155
12 0.1342
7 0.0894
11 0.1374
11 0.1322
14 0.1656
9 0.1296
11 0.1658
9 0.1096
12 0.1454
9 0.1144
9 0.115
9 0.1218
7 0.0864
9 0.1072
8 0.104
13 0.1658
9 0.1062
9 0.1192
10 0.1226
13 0.1672
7 0.1144
10 0.1488
10 0.11
10 0.1308
7 0.0998
12 0.1546
10 0.1426
11 0.1508
12 0.1254
9 0.128
10 0.1344
8 0.0982
11 0.1274
11 0.1284
6 0.0778
11 0.1662
13 0.1564
11 0.1164
12 0.1584
12 0.1564
9 0.0978
12 0.1546
7 0.0922
5 0.0666
9 0.1098
12 0.1458
15 0.189
9 0.115
10 0.1188
11 0.1408
2 0.0518
6 0.0828
5 0.0706
7 0.1122
6 0.077
15 0.1718
11 0.1376
11 0.1376
7 0.099
12 0.1532
11 0.1324
11 0.1178
9 0.111
8 0.0988
10 0.1234
11 0.1306
8 0.0962
15 0.1868
14 0.1656
13 0.1568
11 0.1208
14 0.161
6 0.0952
10 0.1454
9 0.1132
7 0.085
7 0.089
8 0.1256
6 0.0854
14 0.195
13 0.1474
14 0.186
10 0.1368
13 0.16
10 0.1436
5 0.0632
12 0.1398
10 0.113
9 0.124
11 0.1308
10 0.1392
14 0.1588
10 0.1176
10 0.118
10 0.1126
11 0.125
6 0.0768
13 0.1486
12 0.1556
11 0.1442
11 0.148
10 0.1194
8 0.104
9 0.1294
9 0.1214
5 0.0606
11 0.147
13 0.1542
11 0.1316
11 0.1388
14 0.1482
8 0.1082
6 0.0804
9 0.0938
11 0.164
8 0.1186
12 0.1296
8 0.1126
12 0.1466
16 0.1758
10 0.122
11 0.1478
11 0.1258
9 0.1256
13 0.1392
6 0.0942
4 0.059
6 0.0716
9 0.1228
16 0.1664
10 0.1258
15 0.1926
3 0.0554
10 0.1302
16 0.2008
17 0.1934
13 0.1806
6 0.078
8 0.0942
8 0.1122
5 0.0834
8 0.1026
14 0.1616
7 0.0986
7 0.0962
14 0.1716
10 0.1476
12 0.158
6 0.082
7 0.1014
9 0.1172
9 0.1242
10 0.1408
17 0.1744
10 0.102
7 0.0902
9 0.137
8 0.1016
5 0.0818
10 0.1418
8 0.1086
13 0.175
9 0.1136
6 0.1018
13 0.1752
12 0.1308
13 0.141
9 0.1136
7 0.0846
7 0.0734
12 0.1618
6 0.0646
6 0.0824
8 0.1024
10 0.1182
13 0.1434
7 0.0878
4 0.0514
9 0.144
10 0.1356
7 0.102
12 0.1332
13 0.1804
9 0.1196
13 0.1724
6 0.0802
11 0.1552
8 0.1208
9 0.1254
7 0.107
13 0.1448
12 0.1368
5 0.0672
5 0.0592
7 0.0744
12 0.1666
12 0.1554
12 0.144
13 0.1784
6 0.0738
10 0.1192
11 0.119
3 0.0548
4 0.0516
10 0.1248
16 0.1772
11 0.1282
14 0.1946
9 0.1222
10 0.1418
15 0.1868
10 0.1438
11 0.1244
10 0.1342
13 0.1888
13 0.1776
11 0.1594
10 0.1156
17 0.2022
10 0.1486
11 0.1358
11 0.1654
10 0.1262
11 0.157
15 0.2008
8 0.1062
8 0.1102
1 0.0386
9 0.1226
17 0.1804
13 0.1474
6 0.0708
10 0.113
8 0.084
10 0.1234
5 0.061
12 0.1518
7 0.0866
10 0.1222
13 0.1748
5 0.07
9 0.1258
13 0.1782
12 0.1616
9 0.1232
10 0.1224
7 0.0854
8 0.0984
9 0.1064
14 0.192
12 0.1716
11 0.1476
14 0.1938
9 0.1194
11 0.1502
10 0.109
9 0.1126
13 0.143
11 0.1276
15 0.182
9 0.114
9 0.1286
12 0.1254
8 0.114
11 0.1368
16 0.2208
7 0.0876
15 0.1698
13 0.1602
12 0.1462
4 0.057
11 0.146
10 0.112
8 0.0982
10 0.1318
11 0.1342
12 0.161
9 0.1224
6 0.0852
12 0.1244
11 0.1622
11 0.1256
10 0.117
5 0.0764
5 0.0674
6 0.087
5 0.074
11 0.133
12 0.167
9 0.1148
11 0.1222
16 0.1802
10 0.1164
12 0.1244
9 0.1136
9 0.126
10 0.1294
6 0.0754
8 0.1246
6 0.0756
7 0.098
9 0.1408
8 0.1112
9 0.1088
15 0.2212
15 0.1744
11 0.1326
5 0.0974
16 0.1908
14 0.2062
12 0.137
16 0.1846
11 0.129
8 0.1006
10 0.1174
10 0.103
12 0.1822
9 0.1184
7 0.091
8 0.1002
7 0.0962
12 0.137
8 0.1062
8 0.0992
4 0.0648
9 0.116
11 0.145
13 0.1888
9 0.098
11 0.1282
10 0.1202
13 0.1754
5 0.0824
10 0.1278
6 0.0792
13 0.1862
15 0.2072
11 0.1422
7 0.0926
7 0.1006
13 0.1394
17 0.2168
14 0.1818
11 0.131
10 0.1452
6 0.0726
7 0.0842
11 0.12
10 0.119
6 0.0966
12 0.147
11 0.1564
6 0.0796
10 0.1476
14 0.1856
10 0.1246
11 0.1344
13 0.185
7 0.0946
11 0.1332
7 0.0974
9 0.109
9 0.1106
15 0.2074
6 0.0744
6 0.0694
17 0.1984
7 0.1018
12 0.1438
8 0.0926
10 0.1474
9 0.119
13 0.1728
8 0.0992
6 0.0816
11 0.1758
7 0.0894
12 0.1408
9 0.1216
15 0.1684
15 0.1884
8 0.0686
6 0.082
7 0.0806
9 0.0936
9 0.086
10 0.1604
6 0.0906
14 0.21
11 0.1436
14 0.1504
15 0.2086
12 0.1592
9 0.0848
13 0.1458
13 0.1508
11 0.1092
8 0.0896
17 0.1922
13 0.181
12 0.1234
13 0.1384
15 0.2142
10 0.1128
8 0.0862
14 0.1406
9 0.1078
9 0.0906
11 0.1418
3 0.0426
13 0.1246
13 0.126
8 0.0806
11 0.1144
6 0.0816
7 0.1078
14 0.1896
6 0.0618
5 0.0632
4 0.0568
11 0.1058
12 0.152
11 0.114
12 0.1366
7 0.0864
13 0.1788
5 0.051
5 0.0558
6 0.0766
12 0.1404
6 0.0716
13 0.1764
14 0.2058
10 0.1158
8 0.0974
11 0.1702
10 0.145
13 0.1558
5 0.0718
12 0.1536
13 0.1516
14 0.1662
12 0.1524
5 0.0544
11 0.153
7 0.0954
6 0.0662
13 0.1144
11 0.157
5 0.0542
8 0.0862
7 0.0878
8 0.0884
14 0.1708
10 0.134
8 0.1014
10 0.1428
13 0.143
10 0.1364
12 0.17
10 0.103
6 0.0844
11 0.1382
13 0.1558
};
\end{axis}

\end{tikzpicture}}}}%
    \scalebox{0.9}{\subfloat[\centering Spatial networks with Gaussian distribution]{{
\begin{tikzpicture}

\definecolor{color0}{rgb}{0.12156862745098,0.466666666666667,0.705882352941177}

\begin{axis}[
height=6cm,
tick align=outside,
tick pos=left,
width=8cm,
x grid style={white!69.0196078431373!black},
xlabel={in-degree},
xmin=0, xmax=19.95,
xtick style={color=black},
y grid style={white!69.0196078431373!black},
ylabel={\(\displaystyle \mathcal{M}_i^{ld}(W)\)},
ymin=0, ymax=0.26949,
ytick style={color=black}
]
\addplot [draw=color0, fill=color0, mark=*, only marks, opacity=0.8, scatter]
table{%
x  y
7 0.0688
11 0.1544
3 0.0454
5 0.0602
5 0.0604
1 0.0324
14 0.1828
10 0.1092
11 0.1118
1 0.0334
9 0.1232
10 0.1102
10 0.111
17 0.258
11 0.1862
10 0.1108
8 0.108
15 0.2372
14 0.158
6 0.0694
10 0.1488
12 0.199
1 0.0346
11 0.123
13 0.1498
7 0.072
13 0.2224
9 0.096
10 0.104
14 0.1822
6 0.0638
15 0.149
5 0.058
5 0.0562
12 0.1236
10 0.0902
10 0.1088
6 0.0552
8 0.0858
11 0.1484
13 0.1338
11 0.1386
14 0.1542
3 0.0412
6 0.0652
11 0.1216
9 0.1026
12 0.164
11 0.126
9 0.1206
14 0.1512
9 0.0798
8 0.0952
16 0.1648
10 0.1046
9 0.1288
10 0.089
11 0.1314
11 0.1144
14 0.1976
14 0.14
6 0.0682
9 0.0928
14 0.1704
12 0.1716
8 0.1188
9 0.1122
9 0.0898
14 0.142
15 0.184
9 0.1378
14 0.132
15 0.1634
3 0.0458
9 0.0954
14 0.1358
9 0.088
10 0.129
16 0.227
10 0.0942
12 0.1438
11 0.1044
11 0.1368
13 0.2056
5 0.0468
11 0.1464
6 0.0798
10 0.1046
11 0.1224
14 0.1408
6 0.0678
2 0.0398
13 0.151
11 0.1038
15 0.1584
13 0.2034
9 0.1196
10 0.1028
7 0.0698
14 0.1384
13 0.1784
7 0.0838
9 0.0982
8 0.101
9 0.1008
3 0.0474
10 0.1684
11 0.1554
8 0.0998
12 0.1094
3 0.0484
13 0.191
13 0.1876
10 0.0828
11 0.1252
14 0.161
16 0.1664
9 0.0948
4 0.0538
13 0.1744
10 0.1588
15 0.1586
3 0.0478
10 0.11
8 0.078
6 0.0802
10 0.1158
12 0.1356
8 0.1214
8 0.08
10 0.135
8 0.0872
9 0.0916
10 0.102
15 0.1988
11 0.1762
3 0.05
10 0.0914
8 0.0918
14 0.1584
9 0.1104
13 0.1912
10 0.161
6 0.0764
14 0.152
9 0.1134
15 0.157
12 0.168
13 0.111
11 0.1338
12 0.172
13 0.1974
7 0.0914
12 0.1532
14 0.1464
8 0.1232
8 0.1066
11 0.1066
4 0.0492
9 0.1068
17 0.189
9 0.124
13 0.189
12 0.127
12 0.1224
11 0.1516
13 0.1704
6 0.0728
12 0.1272
9 0.153
5 0.0608
11 0.1042
14 0.175
12 0.144
12 0.1436
11 0.122
9 0.1052
7 0.0754
16 0.2128
5 0.0572
14 0.1772
12 0.156
8 0.0768
4 0.0626
12 0.194
3 0.0474
12 0.138
2 0.0454
12 0.1468
9 0.092
16 0.2282
12 0.1386
8 0.078
10 0.1284
9 0.0842
8 0.0758
7 0.0824
11 0.1102
6 0.0618
15 0.1796
14 0.1782
10 0.1088
6 0.0652
18 0.2194
13 0.1524
15 0.1762
13 0.1708
2 0.038
12 0.1572
8 0.0882
11 0.166
13 0.2022
9 0.092
7 0.07
12 0.1318
11 0.1404
14 0.169
5 0.066
9 0.1072
7 0.1012
3 0.047
14 0.1598
10 0.1662
16 0.1958
9 0.1114
16 0.1914
11 0.152
6 0.097
9 0.084
1 0.0366
10 0.1252
12 0.1894
3 0.0472
9 0.0942
10 0.0916
9 0.1242
4 0.0528
4 0.0512
5 0.061
8 0.0898
8 0.087
16 0.1984
14 0.213
3 0.0468
16 0.205
5 0.055
2 0.0438
14 0.1946
10 0.1168
8 0.0894
12 0.131
14 0.1892
7 0.073
11 0.134
12 0.1592
16 0.2404
14 0.178
5 0.0556
12 0.175
11 0.1196
9 0.117
13 0.149
16 0.1878
3 0.047
18 0.22
11 0.106
17 0.2402
19 0.2422
15 0.2132
19 0.2516
3 0.0426
8 0.0912
8 0.0748
12 0.1796
8 0.118
14 0.1664
12 0.1744
6 0.0852
17 0.2316
13 0.172
10 0.1454
8 0.0706
13 0.1444
7 0.0962
15 0.222
3 0.0436
7 0.0736
8 0.1122
11 0.1284
8 0.0764
8 0.082
7 0.0918
5 0.0516
3 0.044
11 0.0938
10 0.1108
10 0.0888
9 0.1156
6 0.0608
12 0.1506
12 0.139
9 0.09
14 0.1424
8 0.0914
11 0.1158
8 0.0932
7 0.1078
11 0.1318
9 0.1424
12 0.1406
0 0.0302
10 0.1218
9 0.1206
6 0.067
19 0.2396
12 0.151
7 0.0642
13 0.1918
13 0.1312
13 0.148
5 0.0592
8 0.083
13 0.1546
8 0.1022
12 0.1684
6 0.068
3 0.0488
15 0.151
13 0.1608
12 0.1554
13 0.1354
13 0.2082
11 0.1384
10 0.1004
16 0.1552
17 0.1948
9 0.1304
8 0.081
13 0.147
9 0.1176
8 0.079
11 0.149
6 0.0664
7 0.115
9 0.0966
14 0.1686
8 0.1156
15 0.224
6 0.0664
4 0.0544
13 0.1242
9 0.0996
7 0.1034
14 0.132
15 0.2298
8 0.0874
12 0.2042
15 0.1996
6 0.077
9 0.1236
12 0.1628
7 0.0994
8 0.1166
9 0.089
13 0.1628
4 0.0594
13 0.1994
13 0.1518
10 0.1042
16 0.1722
2 0.043
5 0.0566
13 0.1954
7 0.082
16 0.154
7 0.0794
13 0.1478
12 0.2036
12 0.1296
10 0.1044
5 0.0576
14 0.1626
12 0.1668
5 0.0616
16 0.1972
9 0.1278
8 0.087
6 0.082
6 0.0736
2 0.0404
18 0.2012
13 0.1982
10 0.0938
10 0.1196
11 0.1128
15 0.169
3 0.0438
5 0.0604
3 0.0486
14 0.149
13 0.1416
6 0.0682
10 0.1032
14 0.1528
8 0.068
8 0.0852
11 0.1814
10 0.1514
11 0.1092
13 0.1232
9 0.0862
12 0.1108
11 0.1176
10 0.1234
10 0.1208
9 0.0858
8 0.0888
9 0.106
5 0.0498
1 0.0348
11 0.115
10 0.1176
9 0.1334
11 0.1202
12 0.1118
12 0.1716
12 0.146
14 0.1406
12 0.152
2 0.0354
13 0.1458
14 0.1506
11 0.1086
8 0.0772
9 0.1238
12 0.1504
7 0.0872
15 0.1718
8 0.1112
13 0.1388
12 0.17
3 0.0376
3 0.039
13 0.1448
4 0.0558
14 0.1468
7 0.062
11 0.1052
6 0.0614
13 0.1912
13 0.193
16 0.2002
10 0.1092
9 0.1304
4 0.0414
10 0.1422
11 0.1134
4 0.0502
13 0.1322
13 0.1878
7 0.0644
0 0.0282
13 0.1434
16 0.1916
9 0.1226
14 0.1426
10 0.1182
15 0.216
17 0.2426
12 0.1948
13 0.1576
13 0.1556
14 0.1388
5 0.051
10 0.1682
16 0.2056
12 0.1568
13 0.2044
14 0.142
12 0.1544
3 0.0388
13 0.1736
9 0.1522
13 0.1546
11 0.1196
11 0.1404
8 0.0704
16 0.2394
7 0.0758
8 0.0682
3 0.0394
4 0.0408
10 0.1408
10 0.0976
13 0.1476
12 0.1386
5 0.0536
4 0.0492
10 0.1502
3 0.039
};
\end{axis}

\end{tikzpicture}}}}%

    \scalebox{0.9}{\subfloat[\centering Small world model]{{
\begin{tikzpicture}

\definecolor{color0}{rgb}{0.12156862745098,0.466666666666667,0.705882352941177}

\begin{axis}[
height=6cm,
tick align=outside,
tick pos=left,
width=8cm,
x grid style={white!69.0196078431373!black},
xlabel={in-degree},
xmin=0, xmax=14.4,
xtick style={color=black},
y grid style={white!69.0196078431373!black},
ylabel={\(\displaystyle \mathcal{M}_i^{ld}(W)\)},
ymin=0, ymax=0.17015,
ytick style={color=black}
]
\addplot [draw=color0, fill=color0, mark=*, only marks, opacity=0.8, scatter]
table{%
x  y
12 0.1408
9 0.1138
10 0.1184
10 0.1308
9 0.1196
9 0.1188
9 0.1162
10 0.1262
14 0.1662
10 0.124
11 0.1326
9 0.1168
10 0.124
12 0.1386
8 0.1064
10 0.1214
9 0.1142
10 0.1204
8 0.1082
9 0.1074
11 0.1396
10 0.1234
10 0.1228
8 0.1024
9 0.1144
11 0.1316
10 0.1212
8 0.1072
12 0.1438
9 0.114
8 0.1034
11 0.1408
11 0.1338
11 0.1312
11 0.1296
11 0.1286
9 0.115
10 0.1212
7 0.0922
10 0.1168
9 0.1126
10 0.1174
10 0.1208
13 0.143
9 0.1098
11 0.1346
8 0.106
10 0.1224
10 0.1204
9 0.1162
8 0.1116
11 0.1274
13 0.1546
10 0.121
11 0.1272
9 0.1136
8 0.1052
11 0.13
8 0.1024
10 0.1206
10 0.121
10 0.1272
9 0.1104
12 0.1412
10 0.1258
8 0.1112
13 0.1574
9 0.1188
8 0.0994
10 0.1212
12 0.1446
10 0.1168
9 0.111
9 0.1204
12 0.1324
10 0.1196
10 0.1166
11 0.1266
8 0.1074
11 0.1286
11 0.133
10 0.1156
9 0.111
12 0.1438
12 0.1394
9 0.1142
11 0.1258
12 0.1382
10 0.1226
11 0.135
11 0.1262
10 0.123
10 0.1176
10 0.126
9 0.1108
8 0.1064
11 0.1348
10 0.119
10 0.1208
10 0.1186
10 0.1258
9 0.118
11 0.141
11 0.1294
11 0.1374
10 0.1318
12 0.15
10 0.1176
9 0.1254
9 0.1196
8 0.1144
11 0.1318
10 0.1254
10 0.1314
9 0.1194
12 0.1478
8 0.113
9 0.1184
11 0.1422
9 0.117
8 0.1164
11 0.139
10 0.1298
10 0.128
10 0.1254
8 0.1106
9 0.1144
10 0.1294
10 0.1282
12 0.1402
9 0.1178
9 0.1218
11 0.1362
13 0.1544
10 0.1282
11 0.1296
11 0.1276
9 0.119
12 0.1504
11 0.1342
10 0.1242
10 0.1328
10 0.1316
9 0.1222
11 0.1348
9 0.1232
11 0.1308
11 0.1428
12 0.142
10 0.1226
9 0.1194
9 0.1198
9 0.1194
9 0.1196
9 0.1282
9 0.1234
12 0.1494
10 0.1382
10 0.1292
11 0.1452
10 0.1326
9 0.1246
9 0.122
10 0.1332
10 0.1268
9 0.1152
7 0.0964
9 0.1168
10 0.1294
11 0.1364
11 0.1358
10 0.1288
11 0.138
10 0.1252
12 0.1466
9 0.1198
11 0.1316
11 0.1342
11 0.1378
10 0.1278
12 0.1456
12 0.1438
11 0.1348
11 0.137
10 0.1282
8 0.113
8 0.1098
9 0.1228
7 0.1022
10 0.1298
10 0.1308
11 0.1384
10 0.1324
12 0.1536
11 0.1366
8 0.108
10 0.1302
11 0.1474
8 0.1058
8 0.1072
11 0.1436
9 0.1214
11 0.1466
10 0.1258
8 0.1036
11 0.1346
12 0.141
10 0.1264
12 0.1476
10 0.1232
9 0.1184
10 0.1272
9 0.114
9 0.1126
9 0.1098
10 0.1226
11 0.1364
10 0.1278
11 0.1338
8 0.104
10 0.127
10 0.1258
10 0.121
8 0.1028
10 0.1224
10 0.1208
11 0.1264
11 0.1372
10 0.1256
12 0.1406
10 0.124
9 0.1076
11 0.1336
10 0.1226
9 0.1166
10 0.1232
11 0.142
10 0.1248
9 0.12
10 0.1266
11 0.1358
9 0.1192
9 0.1136
8 0.1058
10 0.1336
9 0.119
12 0.1496
10 0.1266
10 0.1242
10 0.1264
8 0.108
9 0.1164
10 0.1324
12 0.149
6 0.0872
10 0.1172
11 0.1344
9 0.114
9 0.116
12 0.144
12 0.15
10 0.1318
11 0.1396
10 0.1246
11 0.1348
10 0.1294
13 0.149
10 0.1234
11 0.1356
10 0.1194
9 0.115
9 0.1134
8 0.112
9 0.1192
11 0.1308
10 0.1266
11 0.1304
10 0.1262
11 0.1342
9 0.1094
13 0.1586
11 0.1318
8 0.1006
12 0.1486
9 0.1128
9 0.1144
7 0.1036
11 0.1304
11 0.1398
11 0.1344
10 0.1298
11 0.136
10 0.1344
8 0.1086
9 0.1202
11 0.137
10 0.1258
10 0.1272
10 0.1334
9 0.118
9 0.116
10 0.1294
9 0.1154
10 0.1266
9 0.119
9 0.1148
10 0.1206
12 0.15
11 0.1366
10 0.1258
12 0.1468
9 0.1184
9 0.1124
8 0.1038
9 0.1148
10 0.121
10 0.1258
13 0.152
9 0.1182
11 0.1306
10 0.1294
9 0.1208
9 0.1098
9 0.117
10 0.1192
9 0.1158
9 0.115
9 0.1176
10 0.131
10 0.128
10 0.1312
10 0.1306
13 0.1488
11 0.1386
11 0.132
11 0.14
11 0.1292
10 0.124
8 0.1142
8 0.1036
9 0.1246
9 0.1076
10 0.1162
10 0.1272
12 0.1376
11 0.1356
9 0.1174
9 0.1196
11 0.1358
10 0.1272
10 0.1268
10 0.123
10 0.1244
14 0.1654
8 0.1032
8 0.1102
10 0.1272
11 0.1332
11 0.1294
10 0.1252
9 0.1112
9 0.1162
11 0.136
10 0.1202
10 0.1176
11 0.1262
10 0.1286
11 0.1324
10 0.1262
10 0.1274
10 0.1288
9 0.1176
9 0.1202
10 0.1288
11 0.1348
10 0.1272
8 0.0996
10 0.1254
10 0.1274
10 0.1256
10 0.122
10 0.122
9 0.1152
11 0.143
10 0.129
10 0.1266
12 0.148
10 0.1276
10 0.124
9 0.1194
11 0.1304
10 0.1248
10 0.1252
10 0.1274
9 0.1184
9 0.1196
12 0.1432
11 0.1448
11 0.1412
10 0.126
9 0.1064
10 0.119
11 0.123
10 0.1162
10 0.1204
12 0.1348
10 0.1146
9 0.1082
10 0.1168
10 0.1262
9 0.1106
10 0.1198
12 0.1408
13 0.15
10 0.1256
11 0.1396
10 0.1204
9 0.116
10 0.1242
10 0.126
9 0.111
9 0.1168
9 0.115
11 0.133
10 0.1128
9 0.1088
11 0.1328
11 0.1336
9 0.109
7 0.097
7 0.103
8 0.1054
10 0.124
9 0.1136
9 0.114
7 0.0954
13 0.1432
12 0.1328
11 0.1224
12 0.1358
10 0.124
10 0.1196
11 0.123
7 0.095
9 0.1126
9 0.1192
9 0.1154
10 0.1226
9 0.115
9 0.1186
11 0.1364
11 0.1326
8 0.1026
8 0.1096
7 0.1006
10 0.1226
11 0.13
10 0.1284
10 0.1298
12 0.1438
10 0.123
10 0.127
11 0.1386
10 0.1204
10 0.1284
13 0.1488
12 0.1404
11 0.1246
12 0.1362
11 0.1304
12 0.139
10 0.1246
10 0.1152
9 0.1106
9 0.1036
10 0.1202
9 0.108
12 0.1362
11 0.1298
10 0.1204
11 0.127
11 0.1312
8 0.103
10 0.1194
9 0.112
10 0.114
9 0.115
10 0.1254
9 0.1068
12 0.1312
10 0.1198
10 0.1212
9 0.1104
10 0.1268
9 0.1158
11 0.1308
10 0.1272
10 0.1156
10 0.12
11 0.1342
};
\end{axis}

\end{tikzpicture}}}%
    \subfloat[\centering Preferential attachment model]{{
\begin{tikzpicture}

\definecolor{color0}{rgb}{0.12156862745098,0.466666666666667,0.705882352941177}

\begin{axis}[
height=6cm,
tick align=outside,
tick pos=left,
width=8cm,
x grid style={white!69.0196078431373!black},
xlabel={in-degree},
xmin=0, xmax=43.9,
xtick style={color=black},
y grid style={white!69.0196078431373!black},
ylabel={\(\displaystyle \mathcal{M}_i^{ld}(W)\)},
ymin=0, ymax=0.4576,
ytick style={color=black}
]
\addplot [draw=color0, fill=color0, mark=*, only marks, opacity=0.8, scatter]
table{%
x  y
24 0.2652
7 0.0976
12 0.1512
22 0.2384
30 0.3142
29 0.3112
15 0.1772
29 0.3212
24 0.2508
9 0.1172
7 0.1024
15 0.1772
10 0.1304
11 0.1382
9 0.1124
15 0.1668
10 0.1258
8 0.1126
10 0.13
11 0.142
10 0.1326
13 0.149
10 0.1314
9 0.1144
7 0.0974
12 0.1426
8 0.106
14 0.1756
7 0.0912
12 0.152
9 0.1102
7 0.1052
7 0.0954
7 0.1018
6 0.0934
4 0.0648
9 0.1136
7 0.095
6 0.0844
6 0.091
4 0.071
5 0.0808
6 0.0924
8 0.1052
6 0.09
4 0.0704
7 0.1004
5 0.0756
7 0.0948
4 0.0682
6 0.0858
5 0.079
7 0.104
6 0.087
7 0.0954
4 0.0716
8 0.1096
4 0.0696
5 0.0746
5 0.0862
4 0.0662
5 0.0766
6 0.0904
4 0.0702
5 0.0782
6 0.0922
4 0.069
6 0.086
5 0.0824
4 0.0688
5 0.0846
4 0.0692
5 0.0824
4 0.0654
4 0.0688
4 0.0688
6 0.084
4 0.0644
5 0.0834
5 0.0764
4 0.069
5 0.0812
4 0.07
5 0.0804
4 0.072
4 0.0672
4 0.0646
4 0.0744
5 0.0802
4 0.0754
4 0.0662
4 0.0688
4 0.074
4 0.067
4 0.0718
4 0.0672
4 0.0742
4 0.0716
4 0.067
4 0.0678
17 0.1932
12 0.1448
13 0.1572
19 0.2174
27 0.292
19 0.216
25 0.2722
19 0.216
21 0.2304
15 0.1786
22 0.237
5 0.0824
17 0.1914
12 0.1428
10 0.1232
13 0.1476
15 0.1738
21 0.2332
8 0.1076
15 0.1684
13 0.1496
8 0.107
10 0.1238
7 0.0948
7 0.0942
11 0.1318
9 0.1126
7 0.0884
6 0.0832
8 0.1012
6 0.0898
6 0.0886
6 0.0826
7 0.0936
6 0.0832
8 0.1018
5 0.0816
4 0.0668
7 0.0966
4 0.0688
10 0.1324
5 0.0782
6 0.085
8 0.1036
7 0.097
4 0.072
6 0.083
5 0.073
5 0.0754
6 0.0846
6 0.084
6 0.0892
4 0.0712
5 0.0722
4 0.0676
8 0.1052
6 0.089
5 0.0794
6 0.0884
4 0.0692
7 0.1002
5 0.0724
4 0.066
6 0.0836
5 0.0742
4 0.062
5 0.075
7 0.0982
5 0.0764
4 0.07
6 0.0806
4 0.0626
4 0.0634
4 0.0694
4 0.0634
4 0.0666
4 0.0664
4 0.068
7 0.0976
4 0.067
6 0.0824
4 0.0644
4 0.066
5 0.076
4 0.0646
4 0.0698
4 0.0634
4 0.0668
4 0.0662
4 0.0682
4 0.0636
4 0.0716
6 0.0864
4 0.0636
4 0.07
4 0.0678
4 0.0706
4 0.0632
4 0.068
4 0.0682
12 0.1404
19 0.2072
14 0.1618
14 0.158
40 0.4124
28 0.295
17 0.1868
22 0.2418
7 0.0964
12 0.1418
12 0.1586
17 0.1966
16 0.1762
16 0.1886
9 0.1122
15 0.1722
11 0.1406
7 0.099
20 0.221
6 0.0886
6 0.0828
8 0.1076
5 0.0792
10 0.13
6 0.0884
7 0.101
5 0.0748
10 0.123
8 0.1036
21 0.2284
5 0.077
10 0.1188
8 0.105
5 0.0758
6 0.0838
7 0.1006
7 0.0998
13 0.1504
6 0.0922
5 0.072
5 0.0742
11 0.1402
7 0.093
7 0.0966
5 0.0746
7 0.0964
7 0.0994
6 0.0924
9 0.118
5 0.0748
5 0.078
5 0.077
5 0.0744
4 0.066
4 0.0638
4 0.0656
6 0.089
5 0.0804
5 0.0786
6 0.0906
6 0.0882
6 0.0846
5 0.0792
6 0.0852
4 0.0682
4 0.0702
5 0.0794
4 0.0644
5 0.0812
4 0.0712
5 0.0798
5 0.0744
4 0.0664
5 0.0792
5 0.073
5 0.0808
6 0.0852
4 0.0686
4 0.0656
5 0.0826
4 0.0674
4 0.0684
4 0.0654
4 0.067
4 0.0662
4 0.0688
4 0.0644
5 0.0764
4 0.066
4 0.067
4 0.0642
4 0.0632
4 0.0658
4 0.0658
4 0.0658
4 0.0646
4 0.065
4 0.0638
4 0.0644
4 0.0686
21 0.228
7 0.0874
10 0.1192
29 0.3078
30 0.31
20 0.2156
20 0.2166
24 0.2606
15 0.1806
26 0.2918
17 0.197
12 0.149
10 0.126
19 0.2132
8 0.1148
12 0.1434
12 0.1388
13 0.157
8 0.103
15 0.171
7 0.0992
5 0.079
8 0.1076
13 0.1536
6 0.0876
11 0.1404
11 0.1364
7 0.0894
5 0.0774
8 0.1064
5 0.0784
9 0.1164
6 0.0882
10 0.1228
6 0.0816
6 0.0878
4 0.0618
6 0.0834
5 0.0736
6 0.0828
4 0.0724
8 0.0978
5 0.078
6 0.084
5 0.0766
4 0.0654
6 0.0856
5 0.0826
7 0.0984
5 0.0772
7 0.096
6 0.0874
10 0.1242
6 0.0854
5 0.0752
7 0.1006
5 0.0756
4 0.0722
7 0.0952
6 0.0904
7 0.1044
4 0.0648
4 0.0636
5 0.07
5 0.0788
4 0.0628
4 0.071
4 0.0666
4 0.0664
5 0.0792
4 0.0642
4 0.0672
4 0.067
5 0.0796
4 0.0638
4 0.0664
5 0.075
5 0.0768
4 0.0684
5 0.0778
5 0.0714
4 0.0706
5 0.0786
6 0.0856
4 0.07
4 0.068
4 0.068
5 0.0784
5 0.0738
4 0.0634
4 0.0684
5 0.0754
5 0.0806
4 0.0626
4 0.0702
4 0.066
4 0.069
4 0.0692
4 0.0646
4 0.0586
42 0.4386
19 0.2098
4 0.0756
15 0.1666
12 0.154
24 0.2676
22 0.2324
12 0.1474
7 0.0982
26 0.2864
23 0.2504
11 0.1376
18 0.1984
9 0.1168
15 0.1748
8 0.111
12 0.146
10 0.1208
14 0.1712
11 0.134
12 0.1478
9 0.115
7 0.0984
12 0.1454
12 0.1578
9 0.1164
7 0.099
7 0.1024
4 0.069
8 0.109
8 0.107
7 0.107
6 0.093
7 0.1022
10 0.1238
6 0.089
6 0.0888
6 0.0918
4 0.0684
6 0.0926
6 0.0892
10 0.1306
5 0.0748
6 0.0968
8 0.1006
4 0.066
4 0.0702
6 0.0938
8 0.1128
8 0.1106
5 0.0758
6 0.0848
6 0.0894
7 0.095
5 0.0778
5 0.074
5 0.081
5 0.0844
4 0.0648
6 0.0936
5 0.0782
4 0.0706
6 0.0868
4 0.068
5 0.0764
4 0.0668
6 0.09
5 0.0732
5 0.0776
5 0.0786
4 0.0678
4 0.0646
5 0.083
4 0.071
6 0.0876
4 0.0704
5 0.0786
5 0.075
4 0.067
4 0.0724
4 0.0692
4 0.0704
4 0.071
5 0.0862
4 0.0722
6 0.0898
6 0.0936
5 0.078
4 0.0666
4 0.069
4 0.0768
4 0.0742
4 0.0702
4 0.0738
5 0.077
4 0.0708
4 0.0678
5 0.0738
4 0.0666
4 0.0676
};
\end{axis}

\end{tikzpicture}}}}%
    \caption{ \centering  The correlation between the criticality and the in-degree (or the layer depth for figure (b)) in the different networks considered.}%
\end{figure}

\clearpage

\paragraph{Comparing the two models of proxy voting}

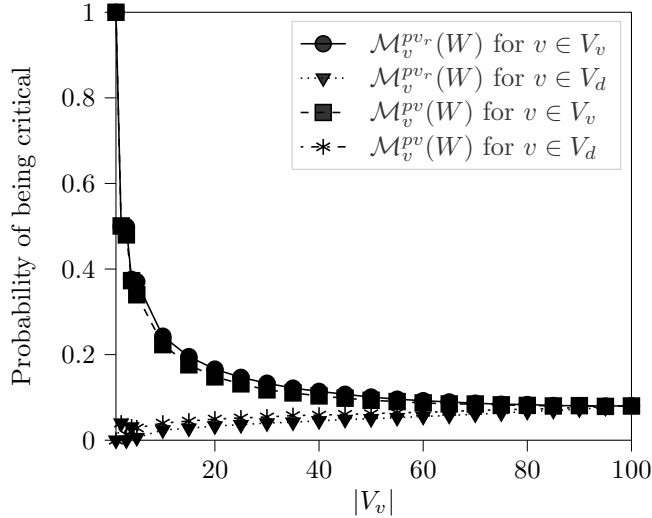
\begin{figure}[h!]
    \centering
\begin{tikzpicture}

\begin{axis}[
legend cell align={left},
legend style={fill opacity=0.8, draw opacity=1, text opacity=1, draw=white!80!black},
tick align=outside,
tick pos=left,
x grid style={white!69.0196078431373!black},
xlabel={\(\displaystyle |V_v|\)},
xmin=1, xmax=100,
xtick style={color=black},
y grid style={white!69.0196078431373!black},
ylabel={Probability of being critical},
ymin=0, ymax=1,
ytick style={color=black}
]
\addplot [semithick, black, mark=*, mark size=3, mark options={solid}]
table {%
1 1
2 0.499695
3 0.49912
4 0.3757175
5 0.370336
10 0.243031
15 0.195378666666667
20 0.1661405
25 0.1468236
30 0.133200333333333
35 0.121684571428571
40 0.11413225
45 0.107322
50 0.1008852
55 0.0958892727272727
60 0.092337
65 0.0893755384615385
70 0.0865047142857143
75 0.0840948
80 0.08335475
85 0.0811074117647059
90 0.0804631111111111
95 0.0797926315789474
100 0.0806295
};
\addlegendentry{$\mathcal{M}^{pv_r}_v(W)$ for $v \in V_v$}
\addplot [semithick, black, dotted, mark=triangle*, mark size=3, mark options={solid,rotate=180}]
table {%
1 0
2 0.0399327551020408
3 0.000150412371134021
4 0.0308444791666667
5 0.00604547368421053
10 0.0238541111111111
15 0.0280365882352941
20 0.033275
25 0.0362729333333333
30 0.0401587142857143
35 0.0421349230769231
40 0.0452485
45 0.0483883636363636
50 0.0501004
55 0.0527482222222222
60 0.0554305
65 0.0578962857142857
70 0.0607926666666667
75 0.0633716
80 0.067133
85 0.0693513333333333
90 0.072824
95 0.0762
};
\addlegendentry{$\mathcal{M}^{pv_r}_v(W)$ for $v \in V_d$}
\addplot [semithick, black, dashed, mark=square*, mark size=3, mark options={solid}]
table {%
1 1
2 0.500715
3 0.479789999999753
4 0.3728725
5 0.33986599999993
10 0.22333599999999
15 0.176212000000041
20 0.148025499999987
25 0.131736800000026
30 0.117635333333343
35 0.110518285714265
40 0.103538999999991
45 0.0982155555555513
50 0.0932934000000211
55 0.0903903636363596
60 0.0872069999999996
65 0.0848209230768995
70 0.0845825714285611
75 0.0826755999999958
80 0.0813732499999951
85 0.0800230588235297
90 0.0807906666666669
95 0.0790699999999885
100 0.0799656000000126
};
\addlegendentry{$\mathcal{M}^{pv}_v(W)$ for $v \in V_v$}
\addplot [semithick, black, dash pattern=on 1pt off 3pt on 3pt off 3pt, mark=asterisk, mark size=3, mark options={solid}]
table {%
1 0
2 0.0400546938775542
3 0.0135848453608246
4 0.0310359374999991
5 0.0287141052631578
10 0.0398856666666657
15 0.044757764705883
20 0.0493918749999992
25 0.0520442666666683
30 0.0543058571428559
35 0.0576012307692304
40 0.0585800000000009
45 0.0604132727272743
50 0.0619108000000002
55 0.0641586666666675
60 0.0652304999999996
65 0.0670194285714288
70 0.0696580000000002
75 0.0710096000000013
80 0.0727015000000001
85 0.0737120000000006
90 0.0767320000000004
95 0.077734000000001
};
\addlegendentry{$\mathcal{M}^{pv}_v(W)$ for $v \in V_d$}
\end{axis}

\end{tikzpicture}
    \caption{Probability of an agent being critical with $|V_v|$ varying from $1$ to $100$ in the standard proxy voting setting $pv$ or the restricted version from Appendix~\ref{appendix:PValpha}, $pv_r$. We have $|V| = 100$ and $p_d = 0.5$ in the $pv$ setting, and $W$ is a WVG with all weights equal to $1$ and $q=0.5$. This experiment sampled over 100,000 delegations partitions.}
    \label{fig:comparingProxyvoting}
\end{figure}

In the experiments with proxy voting ($PV$ as described in Section~\ref{sec:bipartite} and the restricted version $PV_r$ described in Section~\ref{appendix:PValpha}), we study the case when all agents have the same voting weight. Note that within either $V_v$ or $V_d$ that all agents have the same voting power. 
We performed tests on how the number of proxies affects the probability of agents being critical. We study the setting where $|V|=100$ and $|V_v| \in [1,2,3,4,5,10,\dots,95,100]$
(thus, $|V_d|=|V|-|V_v|$), the voting rule corresponds to the majority rule ($q=0.5$), and for the $PV$ setting we have that the probability of delegating is $p_d=0.5$. We see the results of this numerical test in Figure~\ref{fig:comparingProxyvoting}. We see similar trends between the $PV_r$ and $PV$ settings, with the main difference in behaviour being determined by whether an agent is a delegator or a delegatee. 
Yet, the difference in probability of being critical for all voters is slightly less in the $PV$ setting than in the $PV_r$ setting. Returning to the comparisons between the criticality of the delegators and the delegatees, we see that when there are very few proxies in $V_v$, then their likelihood of being critical is very high and approaches $1$. Therefore, their probability of being critical is high due to them receiving many delegations given that $p_d=0.5$ and the choice of delegatee is small. As $|V_v|$ increases, so does the choice of delegatees for those in $V_d$; hence, the probability of being critical for those in $V_v$ decreases. Conversely, the criticality of delegators in $V_d$ slightly raises with this increase in delegation options.

\end{document}